\theoremstyle{definition}
\newtheorem{prop}{Proposition}[section]
\newtheorem{lemma}[prop]{Lemma}
\newtheorem{thm}[prop]{Theorem}
\newtheorem{cor}[prop]{Corollary}
\title{Solutions to the KP hierarchy with an elliptic background}
\author{Saburo Kakei\\[2mm]
{\small Department of Mathematics, Rikkyo University,}\\
{\small 3-34-1 Nishi-ikebukuro, Toshima-ku, Tokyo 171-8501, Japan.}
}
\date{}
\begin{document} 
\maketitle
\begin{abstract}
A class of ``elliptic soliton'' solutions of 
the Kadomtsev-Petviashvili hierarchy, 
which includes a determinantal solution of Li and Zhang, 
is described in terms of pseudo-differential operator formulation. 
In our approach, the Li-Zhang solution is obtained by 
repeatedly applying the Darboux transformation to a stationary solution.
Real-valued solutions are discussed and 
various examples that display web-like patterns are presented. 
\end{abstract}
\section{Introduction}
The relationship between soliton equations and elliptic functions 
is a fascinating and deep connection. 
An example of the connection 
is elliptic function solutions of 
the Kadomtsev-Petviashvili (KP) equation, 
\begin{equation}
\left(4u_t -12uu_x-u_{xxx}\right)_x - 3u_{yy} = 0, 
\label{KPeq}
\end{equation}
where the subscripts denote the derivatives that are being taken. 
The simplest elliptic solution is given by the Weierstrass $\wp$-function
\cite{PastrasBook,AbramowitzStegun,WhittakerWatson}, 
\begin{equation}
\wp(z) = \frac{1}{z^2}+\sum_{(m,n)\neq (0,0)}\left\{
\frac{1}{\left(z+2m\omega_1+2n\omega_2\right)^2}
-\frac{1}{\left(2m\omega_1+2n\omega_2\right)^2}
\right\}, 
\label{def:wp}
\end{equation}
where $\omega_1$, $\omega_2$ is a fundamental pair of half-periods: 
$\wp(z+2\omega_i)=\wp(z)$ ($i=1,2$). 
It is well-known that \eqref{def:wp} satisfies the following differential equations:
\begin{align}
\left\{\frac{d\wp(x)}{dx}\right\}^2
& =4\wp(x)^3-g_2\wp(x)-g_3, 
\label{diffEq:WeierstrassP:1}
\\
\frac{d^2\wp(x)}{dx^2} &= 6\wp(x)^2-\frac{g_2}{2}, 
\label{diffEq:WeierstrassP:2}
\end{align}
where the constants $g_2$, $g_3$ are given by
\begin{equation}
g_2 = 60\sum_{(m,n)\neq (0,0)}\frac{1}{\left(2m\omega_1+2n\omega_2\right)^4},
\quad 
g_3 = 140\sum_{(m,n)\neq (0,0)}\frac{1}{\left(2m\omega_1+2n\omega_2\right)^6}.
\label{def:g2,g3}
\end{equation}
It is clear from \eqref{diffEq:WeierstrassP:2} that 
\begin{equation}
u(x,y,t) = -\wp\left(x\right)
\label{stationarySol_u(x,t)}
\end{equation}
is a stationary solution to the KP equation \eqref{KPeq}. 
A non-stationary solution can be attained through the application of a Galilean-type transformation: 
\begin{equation}
u(x,y,t) = -\wp\left(x+2ay+3a^2 t\right)
\label{GalileanTransformedWPsol}
\end{equation}
where $a$ is an arbitrary constant. 

Recently, Li and Zhang \cite{LiZhang2022,LiZhang2023} 
obtained Wronskian representation for a class of 
``elliptic soliton'' solutions to the Korteweg-de Vries (KdV) equation, 
\begin{equation}
4u_t =12uu_x+u_{xxx},
\label{KdVeq}
\end{equation}
and for the KP equation \eqref{KPeq}. 
The elements of Wronskians are 
a composition of Lam\'e-type plane wave factors, 
\begin{equation}
\Phi(x;k)=\frac{\sigma(x+k)}{\sigma(x)}e^{-\zeta(k)x},
\label{def:LameFcn}
\end{equation}
where $k$ is a constant. 
Here we have used the Weierstrass functions $\zeta$ and $\sigma$ 
that are related to $\wp$ as follows:
\begin{equation}
\wp(x)=-\frac{d\zeta(x)}{dx}, \quad
\zeta(x)=\frac{1}{\sigma(x)}\cdot\frac{d\sigma(x)}{dx}.
\end{equation} 
Note that \eqref{def:LameFcn} is slightly different from 
the plane wave factor used in \cite{LiZhang2022,LiZhang2023}
(up to multiplication of $\sigma(k)$). 

We remark that the term ``elliptic solitons'' is used in various ways.
In \cite{LiZhang2022,LiZhang2023}, 
as mentioned above, the solutions are constructed 
by the Weierstrass functions $\wp$, $\zeta$ and $\sigma$ and 
Li and Zhang call their solutions as ``elliptic solitons'' after 
\cite{NejhoffAtkinson,YooKongNijhoff}. 
On the other hand, 
in \cite{Krichever,Smirnov,TreibichVerdier}, 
the term ``elliptic solitons'' refers to algebro-geometric solutions 
of soliton equations that exhibit doubly-periodic behavior in one variable, 
while $\Phi(x;k)$ \eqref{def:LameFcn} is quasi-periodic in $x$:
\begin{equation}
\Phi(x+2\omega_j;k) 
=e^{2\left\{\zeta(\omega_j)k-\zeta(k)\omega_j\right\}}
\Phi(x;k) \quad (j=1,2).
\end{equation}

In this paper, we investigate ``elliptic solitons'' in former sense.
In \cite{LiZhang2022,LiZhang2023}, Li and Zhang studied a function $f$ of the form,
\begin{align}
f &= \mathcal{W}\left(\varphi_1,\ldots,\varphi_n\right), 
\label{LiZhangWronskian}\\
\varphi_j &= a_j^+ \Phi(x;k_j)e^{-\gamma(k_j)}
 + a_j^- \Phi(x;-k_j)e^{\gamma(k_j)} \quad (j=1,\ldots,n), \nonumber\\
\gamma(k) &= \wp(k)y-\frac{\wp'(k)}{2}t \quad
\left(\wp'(x)=\frac{d\wp(x)}{dx}\right), \nonumber
\end{align}
where $\mathcal{W}\left(\varphi_1,\ldots,\varphi_n\right)$ denotes 
the Wronskian determinant, 
and proved that 
\eqref{LiZhangWronskian} satisfies a Hirota-type bilinear differential equation,
\begin{equation}
\left(D_x^4-4D_xD_t-12\wp(x)D_x^2+3D_y^2\right) f \cdot f = 0.
\label{bilinearKP_wp}
\end{equation}
Here $D_x$, $D_y$, $D_t$ are Hirota's bilinear operators:
\begin{align}
&D_x^lD_y^mD_t^n f\cdot g
\nonumber\\
&=\left.
\left(\frac{\partial}{\partial x}-\frac{\partial}{\partial x'}\right)^l
\left(\frac{\partial}{\partial y}-\frac{\partial}{\partial y'}\right)^m
\left(\frac{\partial}{\partial t}-\frac{\partial}{\partial t'}\right)^n
f(x,y,t)g(x',y',t')
\right|_{x'=x,y'=y,t'=t}.
\end{align}
Although the bilinear equation \eqref{bilinearKP_wp} contains 
the $\wp$-function in the coefficient, it can be shown that 
\begin{equation}
u = -\wp(x) + \left(\log f\right)_{xx}
\label{LiZhangSolution:KP}
\end{equation}
satisfies the KP equation \eqref{KPeq} without $\wp$-function coefficient. 
In other words, $u$ of \eqref{LiZhangSolution:KP} represents a solution 
of the KP equation with the $\wp$-function background. 
In the case of the KdV equation, solutions of this class has been 
discussed in \cite{KuznetsovMikhailov} based on the approach of Shabat \cite{Shabat}.
We remark that solutions of nonlinear Schr\"odinger type equations with elliptic backgrounds
have been studied as 
a model for density-modulated quantum condensates in \cite{Takahashi}, 
and as a model for rogue waves in \cite{ChenPelinovskyWhite,FengLingTakahashi}.

The principal aim of this paper is to investigate 
hierarchy structure that includes the solution \eqref{LiZhangSolution:KP}. 
Li and Zhang \cite{LiZhang2022,LiZhang2023}
introduced higher order time variables $t_2,t_3,\ldots$ and 
studied hierarchy structure associated with the Wronskian solutions \eqref{LiZhangWronskian}. 
They obtained an infinite sequence of Hirota-type bilinear equations such as
\begin{align}
&\left(D_x^4 - 4 D_xD_{t_3}+ 3D_{t_2}^2 -g_2\right)\tilde{f}\cdot\tilde{f}=0,
\label{bilinearKP_wp:2}\\
&\left(D_x^3D_{t_2} + 2 D_{t_2}D_{t_3} - 3 D_xD_{t_4}\right)\tilde{f}\cdot\tilde{f}=0,
\label{bilinearKP_wp:deg5}\\
&\left(D_x^6-20D_x^3D_{t_3}-80D_{t_3}^2+144D_xD_{t_5}-45D_x^2D_{t_2}^2
+3g_2D_x^2-24g_3\right)\tilde{f}\cdot\tilde{f}=0,
\label{bilinearKP_wp:deg6:1}\\
&\left(D_x^6+4D_x^3D_{t_3}-32D_{t_3}^2-9D_x^2D_{t_2}^2+36D_{t_2}E_{t_4}
+3g_2D_x^2-24g_3\right)\tilde{f}\cdot\tilde{f}=0,
\label{bilinearKP_wp:deg6:2}\\
& \qquad\vdots\nonumber
\end{align}
As mentioned in \cite{LiZhang2023}, 
\eqref{bilinearKP_wp:2} is transformed to \eqref{bilinearKP_wp} 
by setting $\tilde{f}=\sigma(x)f$. 
Li and Zhang proposed a generating function of bilinear equations 
(\textit{bilinear identity}) that includes 
\eqref{bilinearKP_wp:2}--\eqref{bilinearKP_wp:deg6:2} using vertex operators
\cite{LiZhang2023}. 

If we set $g_2=g_3=0$, the bilinear equations 
\eqref{bilinearKP_wp:2}--\eqref{bilinearKP_wp:deg6:2} coincide with
those of the KP hierarchy listed in \cite{JimboMiwa}.
On the other hand, if we set
\begin{equation}
\tilde{f}=\exp\left[
\frac{g_2}{4}\left\{(1+2a)t_2^2+(1+3a)x t_3\right\}\right]\tau
\label{transformation_Q(x,t)_KPcase}
\end{equation}
for $a$ is an arbitrary constant, then $\tau$ satisfies
\begin{equation}
\left(D_x^4 - 4 D_xD_{t_3}+ 3D_{t_2}^2\right)\tau\cdot\tau=0,
\end{equation}
which is the bilinear form of the original KP equation \eqref{KPeq}. 
This suggests that the bilinear equations with elliptic solitons 
may be transformed to the equations without elliptic parameters $g_2$, $g_3$
via suitable transformation like \eqref{transformation_Q(x,t)_KPcase}. 
We will show that bilinear equations associated with 
the Li-Zhang solution \eqref{LiZhangWronskian} 
are actually equivalent to those of 
the original KP hierarchy without elliptic parameters $g_2$, $g_3$.
We will construct the transformation 
from the Li-Zhang solution \eqref{LiZhangWronskian} 
to a $\tau$-function of the KP hierarchy via pseudo-differential operator 
formulation.

It should be remarked that solutions with the $\wp$-function background has been 
discussed in \cite{NakayashikiGMP2020,Nakayashiki_arXiv2023} as certain degeneration of 
hyperelliptic solutions, using the correspondence between
algebro-geometric solutions and the Sato Grassmannian. 
We also remark that a similar results can be found in \cite{Ichikawa}. 
The approach for handling elliptic solitons in this paper doesn't rely on 
the algebro-geometric framework. 
Instead, we employ a formulation based on pseudo-differential operators 
and utilize elementary properties of the Weierstrass functions 
$\wp$, $\zeta$ and $\sigma$. 

\section{Pseudo-differential operators and KP hierarchy}
The KP hierarchy is an infinite (countable) family of 
partial differential equations that includes the KP equation \eqref{KPeq}. 
Here we give a brief review of the KP hierarchy using 
pseudo-differential operator formulation 
\cite{DJKM,OSTT,Sato,SatoSato,Sato:Bowdoin}. 

\subsection{Pseudo-differential operators}
Let $\mathcal{A}$ be an associative ring and $\partial_x:\mathcal{A}\to \mathcal{A}$ its derivation.
Let us consider the ring of formal pseudo-differential operators: 
\begin{equation}
\mathcal{E}=\left\{
\sum_{-\infty < n\ll +\infty} a_n\partial_x^n, \; a_n\in \mathcal{A}
\right\}.
\end{equation}
Multiplication can be defined according to the generalized 
Leibniz rule:
\begin{equation}
\left(\sum_{n} a_n\partial_x^n\right)\left(\sum_{m} b_m\partial_x^m\right)
=\sum_{n,m}a_i\sum_{r=0}^{\infty}\binom{n}{r} \partial_x^r(b_m) \partial_x^{n+m-r}.
\end{equation}
Here the generalized binomial coefficient is defined by
\begin{equation}
\binom{n}{r} = 
\begin{cases}
1 & (r=0),\\
\frac{n(n-1)\cdots (n-r+1)}{r!} & (r\neq 0).
\end{cases}
\end{equation}

For $P=\sum_{-\infty < n\leq m} a_n\partial_x^n
\in\mathcal{E}$ ($a_m\neq 0$), denote by $\mathrm{ord}_{\partial_x}(P)=m$ the leading order of $P$. 
Let $\mathcal{E}^{(m)}$ be a subspace of $\mathcal{E}$ with the following property:
\begin{equation}
\mathcal{E}^{(m)} = \left\{ P\in\mathcal{E} \;|\;
\mathrm{ord}_{\partial_x}(P)  \leq m\right\}, 
\end{equation}
and let $\mathcal{D}$ be a subring of differential operators: 
\begin{equation}
\mathcal{D}=\left\{
\sum_{0\leq n\ll +\infty} a_n\partial_x^n, \; a_n\in \mathcal{A}
\right\}.
\end{equation}
Then $\mathcal{E}$ has a natural decomposition 
$\mathcal{E}=\mathcal{D}\oplus\mathcal{E}^{(-1)}$. 
Denote by $(\,\cdot\,)_{\geq 0}$ the projection to $\mathcal{D}$, and 
by $(\,\cdot\,)_{<0}$ the projection to $\mathcal{E}^{(-1)}$, i.e., 
for $m>0$, 
\begin{align}
\left(\sum_{-\infty < n\leq m} a_n\partial_x^n\right)_{\geq 0}
&= \sum_{0\leq n\leq m} a_n\partial_x^n\in\mathcal{D}
,\\
\left(\sum_{-\infty < n\leq m} a_n\partial_x^n\right)_{<0}
&= \sum_{-\infty < n<0} a_n\partial_x^n\in\mathcal{E}^{(-1)}.
\end{align}

We prepare several formulas that we shall use later. 
\begin{lemma}[Helminck and van de Leur \cite{Helminck_vdLeur_CJM,Helminck_vdLeur_PublRIMS}
(cf. Oevel \cite{Oevel}, (2.10))]
\label{lemma:Helminck_vdLeur}
Let $a,b\in \mathcal{A}$, $P=\sum_{j=-\infty}^m p_j\partial^j \in\mathcal{E}$, and 
$\mathrm{Res}_{\partial_x}(P) = p_{-1}$. Then the following relations hold:
\begin{align}
& (aPb)_{\geq 0} = a (P)_{\geq 0}b,\quad
(aPb)_{<0} = a (P)_{<0}b,
\label{lemma6.1:(a),(b)}\\
& \mathrm{Res}_{\partial_x}(aPb)=ab\,\mathrm{Res}_{\partial_x}(P),
\label{lemma6.1:(c)}\\
& \left(\partial_x P\right)_{<0} = \partial_x (P)_{<0} - \mathrm{Res}_{\partial_x}(P),
\label{lemma6.1:(d)}\\
& \left(P\partial_x\right)_{<0} = (P)_{<0}\partial_x - \mathrm{Res}_{\partial_x}(P),
\label{lemma6.1:(e)}\\
& \left(P\partial_x^{-1}\right)_{<0}
= (P)_{<0}\partial_x^{-1} + \mathrm{Res}_{\partial_x}(P\partial_x^{-1})\partial_x^{-1},
\label{lemma6.1:(f)}\\
& \mathrm{Res}_{\partial_x}\left(P a\partial^{-1}\right)=(P)_{\geq 0}(a),
\label{lemma6.1:(h)}
\end{align}
where the right-hand-side of \eqref{lemma6.1:(h)} 
means that the differential operator $(P)_{\geq 0}$ acts on $a\in \mathcal{A}$.
\end{lemma}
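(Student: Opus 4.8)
The plan is to derive all six identities from a single elementary principle: split $P=(P)_{\geq 0}+(P)_{<0}$ and track how each operation acts on a monomial $p_j\partial_x^j$ under the generalized Leibniz rule. I would first record three basic facts. Left multiplication by $a\in\mathcal{A}$ sends $p_j\partial_x^j\mapsto ap_j\partial_x^j$ and hence preserves the order of every term. Right multiplication by $b\in\mathcal{A}$ sends $p_j\partial_x^j\mapsto\sum_{r\geq 0}\binom{j}{r}p_j\,\partial_x^r(b)\,\partial_x^{j-r}$, all of whose terms have order $\leq j$. Finally, right multiplication by $\partial_x^{\pm 1}$ is the clean shift $p_j\partial_x^j\mapsto p_j\partial_x^{j\pm 1}$ with no Leibniz correction.

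Given these, \eqref{lemma6.1:(a),(b)} is immediate: since neither left nor right multiplication by a function raises the order, $a(P)_{\geq 0}b$ is again a differential operator and $a(P)_{<0}b$ is again of order $\leq -1$, so $aPb=a(P)_{\geq 0}b+a(P)_{<0}b$ is already the decomposition of $aPb$ into its $(\geq 0)$ and $(<0)$ parts.

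For the residue formulas \eqref{lemma6.1:(c)} and \eqref{lemma6.1:(h)} I would extract the coefficient of $\partial_x^{-1}$ directly from the Leibniz expansion. In \eqref{lemma6.1:(c)} the monomial $p_j\partial_x^j b$ contributes to $\partial_x^{-1}$ only through $r=j+1$, and $\binom{j}{j+1}$ vanishes except at $j=-1$ (where it is $\binom{-1}{0}=1$), so only $ab\,p_{-1}$ survives. In \eqref{lemma6.1:(h)} the $\partial_x^{-1}$ term of $Pa\partial_x^{-1}$ sits at $r=j$, and $\binom{j}{j}=1$ forces $j\geq 0$, whence the residue collapses to $\sum_{j\geq 0}p_j\,\partial_x^j(a)=(P)_{\geq 0}(a)$.

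The three remaining identities \eqref{lemma6.1:(d)}--\eqref{lemma6.1:(f)} each involve a single order-shift interacting with the projection, so only one boundary monomial matters. In \eqref{lemma6.1:(e)}, $(P)_{<0}\partial_x=\sum_{j\leq -1}p_j\partial_x^{j+1}$ fails to be of negative order only through the order-$0$ term $p_{-1}$ arising at $j=-1$, which accounts for the correction $-\mathrm{Res}_{\partial_x}(P)$. Identity \eqref{lemma6.1:(d)} is analogous: left multiplication by $\partial_x$ produces both shift and Leibniz-correction terms, but only the order-$0$ term $p_{-1}$ from the $j=-1$ shift escapes the negative part, so the same correction $-\mathrm{Res}_{\partial_x}(P)$ appears. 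In \eqref{lemma6.1:(f)} right multiplication by $\partial_x^{-1}$ pushes the order-$0$ coefficient $p_0$ of $P$ down into a $p_0\partial_x^{-1}$ term, which is precisely $\mathrm{Res}_{\partial_x}(P\partial_x^{-1})\partial_x^{-1}$. I do not expect a genuine obstacle here; the only points demanding care are the bookkeeping of the generalized binomial coefficients in \eqref{lemma6.1:(c)} and \eqref{lemma6.1:(h)} — confirming that $\binom{j}{j+1}=0$ for $j\geq 0$ while $\binom{-1}{0}=1$, and that $\binom{j}{j}=1$ — together with correctly isolating the single boundary monomial responsible for each residue correction in \eqref{lemma6.1:(d)}--\eqref{lemma6.1:(f)}.
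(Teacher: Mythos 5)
Your proposal is correct. The paper itself gives no proof of this lemma --- it is quoted from Helminck--van de Leur and Oevel --- so there is nothing internal to compare against; your argument is the standard direct verification from the generalized Leibniz rule, and each step checks out: the boundary bookkeeping in \eqref{lemma6.1:(c)} and \eqref{lemma6.1:(h)} (that $\binom{j}{j+1}=0$ for $j\geq 0$, that $r=j+1$ or $r=j$ is out of range for $j$ too negative, and $\binom{j}{j}=1$) is exactly right, as is the isolation of the single escaping monomial $p_{-1}$ (resp.\ $p_0$) in \eqref{lemma6.1:(d)}--\eqref{lemma6.1:(f)}.
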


\subsection{KP hierarchy}
We introduce an infinite set of ``time'' variables 
$\bm{t}=\left(t_2,t_3,\ldots\right)$. 
Usually $x$ is identified as $t_1$, but we do not follow 
this convention to distinguish between $x$ and the time variables $\bm{t}=\left(t_2,t_3,\ldots\right)$, 
except for Lemma \ref{lemma:Q(t)}.
We consider generalized Lax operator $L$ of the form
\begin{equation}
L = \partial_x + \sum_{j=1}^\infty u_{j+1}\partial_x^{-j}, \quad
u_{j+1}=u_{j+1}(x,\bm{t}), 
\label{def:generalizedLaxOp}
\end{equation}
We say that a Lax operator $L$ is a solution to the KP hierarchy if and only if 
it satisfies the generalized Lax equations:
\begin{equation}
\frac{\partial L}{\partial t_n} = \left[B_n,\,L\right], \quad
B_n=\left(L^n\right)_{\geq 0}
\quad (n=2,3,\ldots). 
\label{generalizedLaxEqs}
\end{equation}
The explicit forms of $B_2$, $B_3$ are given by
\begin{equation}
B_2 = \partial_x^2 + 2u_2, \quad
B_3 = \partial_x^3+3u_2+3u_3+3\frac{\partial u_2}{\partial x}.
\label{B2B3_u}
\end{equation}
The Lax equations \eqref{generalizedLaxEqs} are equivalent to the following infinite set of equations
(Zakharov-Shabat equations):
\begin{equation}
\frac{\partial B_m}{\partial t_n}-\frac{\partial B_n}{\partial t_m} = 
\left[B_n,B_m\right] \quad (m,n=2,3,\ldots).
\label{generalizedZSeqs}
\end{equation}
Applying \eqref{B2B3_u} to \eqref{generalizedZSeqs} with $m=2$, $n=3$, 
one can show that 
$u=u_2(x,t_2,t_3)$ satisfies the KP equation \eqref{KPeq} ($y=t_2$, $t=t_3$). 

The equations \eqref{generalizedLaxEqs} and \eqref{generalizedZSeqs} are 
compatibility conditions for the following auxiliary linear problem:
\begin{align}
L\psi &= \lambda \psi,\quad \psi=\psi(x,\bm{t};\lambda), 
\label{Lpsi=lambdapsi}\\
\frac{\partial\psi}{\partial t_n} &= B_n\psi \quad (n=1,2,\ldots).
\label{dnpsi=Bnpsi}
\end{align}
Let us consider a solution of \eqref{Lpsi=lambdapsi} and \eqref{dnpsi=Bnpsi} 
of the form
\begin{equation}
\begin{aligned}
\psi(x,\bm{t};\lambda) &= \hat{w}(x,\bm{t};\lambda)e^{\xi(x,\bm{t};\lambda)}, \\
\hat{w}(x,\bm{t};\lambda) 
&= 1+\sum_{j=1}^\infty w_j(x,\bm{t}) \lambda^{-j},\\
\xi(x,\bm{t};\lambda) &= 
\lambda x + \sum_{j=2}^\infty \lambda^j t_j.
\end{aligned}
\label{def:BAfcn}
\end{equation}
The function $\psi$ of the form \eqref{def:BAfcn} is called the wave function 
(or the formal Baker-Akhiezer function) 
associated with $L$. 

Define a space $M$ as \cite{Helminck_vdLeur_PublRIMS}
\begin{equation}
M = \left\{\left(\sum_{-\infty<j\ll+\infty} a_j z^j\right) e^{\xi(x,\bm{t};\lambda)}
\;\Bigg|\; a_j\in \mathcal{A}\right\},
\end{equation}
which is an $\mathcal{E}$-module: 
\begin{align}
b&\left\{\left(\sum_{-\infty<j\ll+\infty} a_j z^n\right) e^{\xi(x,\bm{t};\lambda)}\right\}
=\left(\sum_{-\infty<j\ll+\infty} ba_j z^n\right) e^{\xi(x,\bm{t};\lambda)}
\quad (b\in \mathcal{A}), \nonumber\\
\partial_x&\left\{\left(\sum_{-\infty<j\ll+\infty} a_j z^n\right) e^{\xi(x,\bm{t};\lambda)}\right\}
=\left\{\sum_j\partial_x(a_j)z^n + \sum_j a_j z^{j+1}
\right\}e^{\xi(x,\bm{t};\lambda)},
\label{action_of_partialx_on_M}\\
\frac{\partial}{\partial t_n}
&\left\{\left(\sum_{-\infty<j\ll+\infty} a_j z^n\right) e^{\xi(x,\bm{t};\lambda)}\right\}
=\left(\sum_j\frac{\partial a_j}{\partial t_n}z^n + \sum_j a_j z^{j+n}
\right)e^{\xi(x,\bm{t};\lambda)}. \nonumber
\end{align}
Using \eqref{action_of_partialx_on_M}, one can introduce 
$W\in\mathcal{E}^{(0)}$ 
as
\begin{equation}
W = 1+\sum_{j=1}^\infty w_j(x,\bm{t}) \partial_x^{-j}
\label{def:DressingOp}
\end{equation}
which is called the \textit{wave operator} or the \textit{dressing operator}. 
{}From \eqref{Lpsi=lambdapsi} and \eqref{dnpsi=Bnpsi}, we have 
\begin{align}
&\psi(x,\bm{t};\lambda) = W e^{\xi(x,\bm{t};\lambda)}, \quad
L = W\partial_x W^{-1}, \\
&\frac{\partial W}{\partial t_n} = B_n W - W\partial_x^n
=-\left( W\partial_x^n W^{-1}\right)_{<0}W.
\label{SatoWilsonEqn}
\end{align}
The equation \eqref{SatoWilsonEqn} is called the \textit{Sato-Wilson equation}. 
We remark that the coefficients of $L$ can be written in terms of $w_j$ ($j=1,2,\ldots$):
\begin{equation}
\label{u2=-w'1,u3=...,u4=...}
\begin{aligned}
u_2 &= -w'_1, \quad 
u_3 = -w'_2+w_1w'_1, \\
u_4 &= -w'_3 + w_1w'_2 + w'_1w_2 - w_1^2w'_1-\left(w'_1\right)^2, \ldots .
\end{aligned}
\end{equation}
Here and hereafter, the prime denotes differentiation with respect to $x$.

\begin{thm}[Existence of $\tau$-function \cite{DJKM}]
If $\psi(x,\bm{t};\lambda)$ is a wave function (or a formal Baker-Akhiezer function) 
of the KP hierarchy, then there exists a function $\tau(x,\bm{t})$ 
(\textit{$\tau$-funcion}) 
that is related to $\hat{w}(x,\bm{t};\lambda)$ of \eqref{def:BAfcn} as
\begin{equation}
\hat{w}(x,\bm{t};\lambda) = 
\frac{\tau\left(x-\lambda^{-1},\bm{t}-\left[\lambda^{-1}\right]\right)}{\tau(x,\bm{t})}, 
\label{hatw=tau/tau}
\end{equation}
where
\begin{equation}
\bm{t}-\left[\lambda^{-1}\right] 
= \left(t_2-\frac{1}{2\lambda^2},\,t_3-\frac{1}{3\lambda^3},\,\ldots,\,
t_n-\frac{1}{n\lambda^n},\,\ldots\right).
\end{equation}
\end{thm}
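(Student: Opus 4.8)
The plan is to pass to logarithms and reduce the statement to the integrability (closedness) of an explicit one-form built from the Lax operator $L$. For this proof I would set $t_1=x$, so that the shift $\bm t\mapsto \bm t-[\lambda^{-1}]$ is realized by the operator $T_\lambda=\exp\bigl(-\sum_{n\geq 1}\frac{1}{n\lambda^n}\partial_{t_n}\bigr)$, and the assertion \eqref{hatw=tau/tau} becomes
\[
\log\hat w(x,\bm t;\lambda)=(T_\lambda-1)\log\tau(x,\bm t).
\]
Thus it suffices to produce a function $\log\tau$ whose first derivatives $r_n:=\partial_{t_n}\log\tau$ are prescribed by the wave-operator data, and to show that these prescriptions are mutually consistent.

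First I would isolate the candidate derivatives. Differentiating the target identity in $t_m$ gives $\partial_{t_m}\log\hat w(\lambda)=(T_\lambda-1)r_m$, whose coefficient of $\lambda^{-1}$ is $-\partial_x r_m$. On the other hand, applying the Sato--Wilson equation \eqref{SatoWilsonEqn} to $e^{\xi}$ and using $We^{\xi}=\hat w\,e^{\xi}$ yields $\bigl(\partial_{t_m}\hat w\bigr)e^{\xi}=-\bigl(L^m\bigr)_{<0}W e^{\xi}$; since $W=1+O(\partial_x^{-1})$, the operator $\bigl(L^m\bigr)_{<0}W$ has leading coefficient $\mathrm{Res}_{\partial_x}(L^m)$, so that $\partial_{t_m}\log\hat w(\lambda)=-\mathrm{Res}_{\partial_x}(L^m)\,\lambda^{-1}+O(\lambda^{-2})$. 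Comparing the two expressions gives the fundamental relation $\partial_x r_m=\mathrm{Res}_{\partial_x}(L^m)$; for $m=1$ this reads $\partial_x^2\log\tau=u_2$, in agreement with \eqref{u2=-w'1,u3=...,u4=...} and fixing $r_1=-w_1$.

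The computational heart is the closedness $\partial_{t_m}r_n=\partial_{t_n}r_m$ of the one-form $\sum_n r_n\,dt_n$. Differentiating in $x$ reduces this to $\partial_{t_m}\mathrm{Res}_{\partial_x}(L^n)=\partial_{t_n}\mathrm{Res}_{\partial_x}(L^m)$. Using the Lax equations \eqref{generalizedLaxEqs} in the form $\partial_{t_m}L^n=[B_m,L^n]$, and the facts that the residue of a commutator of two differential operators vanishes and that a commutator of two operators of order $\leq-1$ has order $\leq-2$, both sides collapse to $\mathrm{Res}_{\partial_x}[(L^m)_{\geq 0},(L^n)_{<0}]$ and $\mathrm{Res}_{\partial_x}[(L^n)_{\geq 0},(L^m)_{<0}]$; expanding the identity $[L^m,L^n]=0$ into its four $(\geq 0)/(<0)$ pieces and taking residues shows these two quantities coincide. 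Lemma \ref{lemma:Helminck_vdLeur}, particularly \eqref{lemma6.1:(c)} and \eqref{lemma6.1:(h)}, is exactly what makes these residue manipulations rigorous. This only shows $\partial_{t_m}r_n-\partial_{t_n}r_m$ is $x$-independent; one then argues it vanishes (from the normalization $\hat w\to 1$ in \eqref{def:BAfcn}), so the form is closed and can be integrated to produce $\log\tau$, hence $\tau$ up to a multiplicative constant.

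The step I expect to be the main obstacle is upgrading these second-derivative relations to the \emph{full} generating identity \eqref{hatw=tau/tau}. Knowing $r_n=\partial_{t_n}\log\tau$ for all $n$ determines $(T_\lambda-1)\log\tau$ as a power series in $\lambda^{-1}$, but I have matched it against $\log\hat w$ only at order $\lambda^{-1}$; the higher-order coefficients must still be reconciled. I would close this gap by proving $\partial_{t_m}\log\hat w(\lambda)=(T_\lambda-1)r_m$ to \emph{all} orders in $\lambda^{-1}$ --- most cleanly by establishing the Sato bilinear identity $\mathrm{Res}_{\lambda}\bigl[\psi(\bm t;\lambda)\psi^{*}(\bm t';\lambda)\bigr]=0$ for the wave function and its adjoint (again a residue computation governed by Lemma \ref{lemma:Helminck_vdLeur}) and reading off the $\tau$-quotient form of $\psi$ from it. Once the two series have the same $t$-derivatives to all orders and the same vanishing constant term, they must agree, yielding \eqref{hatw=tau/tau}. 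It is this all-orders reconciliation, rather than the closedness computation, that is the delicate part.
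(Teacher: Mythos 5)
The paper does not prove this theorem: it is quoted as a classical result with a citation to \cite{DJKM}, so there is no in-paper argument to compare yours against. Judged on its own terms, your outline follows the standard route from the cited literature, and the parts you actually carry out are correct: the extraction of $\partial_x r_m=\mathrm{Res}_{\partial_x}(L^m)$ from the Sato--Wilson equation \eqref{SatoWilsonEqn}, and the closedness computation via $[L^m,L^n]=0$ split into its $(\,\cdot\,)_{\geq 0}/(\,\cdot\,)_{<0}$ pieces (using that a commutator of differential operators has zero residue and that a commutator of two elements of $\mathcal{E}^{(-1)}$ lies in $\mathcal{E}^{(-2)}$), are both sound and are exactly the classical manipulations.

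The one place the argument does not yet close is the step you yourself flag: upgrading from ``the $\lambda^{-1}$ coefficients match'' to the full identity \eqref{hatw=tau/tau}. This is not a technical afterthought --- it is essentially the entire content of the theorem, since the order-$\lambda^{-1}$ matching only recovers $u_2=\partial_x^2\log\tau$, which every reasonable potential satisfies. Your proposed repair via the bilinear identity is the right idea, but as stated it is circular in form: the paper's bilinear identity \eqref{bilinearIdentity:KP} is phrased in terms of $\tau$, whose existence is what you are proving. What is actually needed is the wave-function version $\mathrm{Res}_\lambda\left[\psi(x,\bm{t};\lambda)\psi^{*}(x',\bm{t}';\lambda)\right]=0$ (provable directly from the Sato--Wilson equation and residue calculus), followed by the specialization $\bm{t}'=\bm{t}-[\mu^{-1}]$, $x'=x-\mu^{-1}$, which yields the two-parameter functional equation
\begin{equation*}
\frac{\hat{w}\left(x-\mu^{-1},\bm{t}-[\mu^{-1}];\lambda\right)}{\hat{w}(x,\bm{t};\lambda)}
=\frac{\hat{w}\left(x-\lambda^{-1},\bm{t}-[\lambda^{-1}];\mu\right)}{\hat{w}(x,\bm{t};\mu)}.
\end{equation*}
This symmetry in $(\lambda,\mu)$ is precisely the integrability condition that lets one define a single potential $\tau$ with $\hat{w}(x,\bm{t};\lambda)=\tau\left(x-\lambda^{-1},\bm{t}-[\lambda^{-1}]\right)/\tau(x,\bm{t})$ to all orders. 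Until that relation (or an equivalent all-orders identity) is established, the proof is an accurate roadmap rather than a complete argument; with it, your outline becomes the standard proof of \cite{DJKM}.
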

Expanding \eqref{hatw=tau/tau} with respect to $\lambda^{-1}$, one obtains (cf. \cite{OSTT})
\begin{equation}
\begin{aligned}
w_1&=-\frac{1}{\tau}\frac{\partial\tau}{\partial x},\quad
w_2=\frac{1}{2\tau}\left(\frac{\partial^2\tau}{\partial x^2}-\frac{\partial\tau}{\partial t_2}\right),\\
w_3&=-\frac{1}{6\tau}\left(\frac{\partial^3\tau}{\partial x^3}
-3\frac{\partial^2\tau}{\partial x\partial t_2}
+2\frac{\partial\tau}{\partial t_3}\right),
\;\ldots .
\end{aligned}
\end{equation}
Applying these relations to \eqref{u2=-w'1,u3=...,u4=...}, we obtain (cf. \cite{OSTT})
\begin{equation}
\label{u2=del2logtau,u3=...,u4=...}
\begin{aligned}
u_2 &= \frac{\partial^2}{\partial x^2}\log\tau,\quad 
u_3 = \frac{1}{2}\left(
-\frac{\partial^3}{\partial x^3}+
\frac{\partial^2}{\partial x\partial t_2}\right)\log\tau,\\
u_4 &= \frac{1}{6}\left(
\frac{\partial^4}{\partial x^4}
-3\frac{\partial^3}{\partial x^2\partial t_2}
+2\frac{\partial^2}{\partial x\partial t_3}
\right)\log\tau
-\left(\frac{\partial^2}{\partial x^2}\log\tau\right)^2,\;\ldots.
\end{aligned}
\end{equation}

\begin{thm}[Bilinear identity \cite{DJKM,JimboMiwa,MJDbook}]
If $\tau(x,\bm{t})$ is a $\tau$-function of the KP hierarchy, then 
$\tau(x,\bm{t})$ satisfies the following equation 
(\textit{bilinear identity}) for any $x$, $x'$, $\bm{t}$ and $\bm{t}'$:
\begin{equation}
\mathrm{Res}_\lambda\left[
e^{\xi(x-x',\bm{t}-\bm{t}';\lambda)}
\tau\left(x-\lambda^{-1},\,\bm{t}-\left[\lambda^{-1}\right]\right)
\tau\left(x'+\lambda^{-1},\,\bm{t}'+\left[\lambda^{-1}\right]\right)
\right]=0, 
\label{bilinearIdentity:KP}
\end{equation}
where $\mathrm{Res}_\lambda$ is the formal residue: 
$\mathrm{Res}_\lambda\left[\sum_n a_n\lambda^n\right]=a_{-1}$. 
Conversely, if $\tau(x,\bm{t})$ satisfies the bilinear identity \eqref{bilinearIdentity:KP}, 
then the corresponding $\psi(x,\bm{t})$ via the formulas \eqref{def:BAfcn} and 
\eqref{hatw=tau/tau} solves the auxiliary linear problem 
\eqref{Lpsi=lambdapsi}, \eqref{dnpsi=Bnpsi}. 
\end{thm}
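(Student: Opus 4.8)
The plan is to translate the statement, which is about $\tau$, into a statement about the wave function $\psi=We^{\xi}$ and its \emph{adjoint} (dual) wave function $\psi^{*}=(W^{-1})^{*}e^{-\xi}$, where $(\,\cdot\,)^{*}$ denotes the formal adjoint of a pseudo-differential operator. Writing $\hat{w}$ as in \eqref{hatw=tau/tau} and using the companion formula for the dual Baker function,
\begin{equation*}
\hat{w}^{*}(x,\bm{t};\lambda)=\frac{\tau\!\left(x+\lambda^{-1},\,\bm{t}+[\lambda^{-1}]\right)}{\tau(x,\bm{t})},\qquad \psi^{*}=\hat{w}^{*}e^{-\xi},
\end{equation*}
one checks, since $\xi$ is linear in $(x,\bm{t})$, that \eqref{bilinearIdentity:KP} is (up to the nonvanishing factor $\tau(x,\bm{t})^{-1}\tau(x',\bm{t}')^{-1}$) equivalent to the wave-function form
\begin{equation*}
\mathrm{Res}_\lambda\!\left[\psi(x,\bm{t};\lambda)\,\psi^{*}(x',\bm{t}';\lambda)\right]=0\qquad(\forall\,x,x',\bm{t},\bm{t}').
\end{equation*}
The bridge between the residue in $\lambda$ and the residue in $\partial_x$ is the elementary identity
\begin{equation*}
\mathrm{Res}_\lambda\!\left[(Pe^{\xi})(Qe^{-\xi})\right]=\mathrm{Res}_{\partial_x}\!\left(PQ^{*}\right)\qquad(P,Q\in\mathcal{E}),
\end{equation*}
which I would prove by reducing to monomials $P=p\,\partial_x^{a}$, $Q=q\,\partial_x^{b}$ and matching coefficients through the generalized Leibniz rule, the bookkeeping being the Vandermonde convolution $\sum_{r+s=m}\binom{b}{r}\binom{a}{s}=\binom{a+b}{m}$; alternatively it can be extracted from \eqref{lemma6.1:(h)}.

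For the forward direction, recall that a $\tau$-function arises from a wave function of the hierarchy, so $W$ solves the Sato--Wilson equation \eqref{SatoWilsonEqn}; consequently $\psi$ solves \eqref{Lpsi=lambdapsi}, \eqref{dnpsi=Bnpsi}, and $\psi^{*}$ solves the adjoint equations $\partial_{t_n}\psi^{*}=-B_n^{*}\psi^{*}$. It follows that every iterated $x,\bm{t}$ derivative of $\psi$ lies in $\mathcal{D}\psi$ and every such derivative of $\psi^{*}$ lies in $\mathcal{D}\psi^{*}$, because these spaces are closed under $\partial_x$ and under each $\partial_{t_n}$. Then for differential operators $P,Q\in\mathcal{D}$ the fundamental lemma gives
\begin{equation*}
\mathrm{Res}_\lambda\!\left[(P\psi)(Q\psi^{*})\right]=\mathrm{Res}_{\partial_x}\!\left(PW\cdot W^{-1}Q^{*}\right)=\mathrm{Res}_{\partial_x}\!\left(PQ^{*}\right),
\end{equation*}
which vanishes since $PQ^{*}\in\mathcal{D}$ carries no $\partial_x^{-1}$ term. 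Expanding $\psi^{*}(x',\bm{t}')$ in a Taylor series about $(x,\bm{t})$, each coefficient is of this vanishing type, so the wave-function identity holds as a formal series, and hence so does \eqref{bilinearIdentity:KP}.

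For the converse, I would reconstruct the data: from $\hat{w}$ via \eqref{hatw=tau/tau} build the monic operator $W$ of \eqref{def:DressingOp} with $\psi=We^{\xi}$, and set $L=W\partial_xW^{-1}$, $B_n=(L^n)_{\geq0}$. Then $L\psi=\lambda\psi$ is automatic, so only \eqref{dnpsi=Bnpsi} requires proof. Using $\partial_x^n e^{\xi}=\lambda^n e^{\xi}$ one computes
\begin{equation*}
\chi_n:=\frac{\partial\psi}{\partial t_n}-B_n\psi=A_ne^{\xi},\qquad A_n:=\frac{\partial W}{\partial t_n}+\left(W\partial_x^nW^{-1}\right)_{<0}W\in\mathcal{E}^{(-1)},
\end{equation*}
so \eqref{dnpsi=Bnpsi} is equivalent to the Sato--Wilson equation $A_n=0$. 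To get it, differentiate the wave-function identity in the unprimed $t_n$, differentiate $k$ times in $x'$, and set $(x',\bm{t}')=(x,\bm{t})$; writing $\partial_{t_n}\psi=A_ne^{\xi}+B_n\psi$ and applying the fundamental lemma (now legitimate, since the arguments coincide) makes the $B_n\psi$-contribution drop out as the residue of a differential operator, leaving
\begin{equation*}
\mathrm{Res}_\lambda\!\left[(A_ne^{\xi})\,\lambda^{k}\psi^{*}\right]=0\qquad(k=0,1,2,\dots).
\end{equation*}
Since $\hat{w}^{*}=1+O(\lambda^{-1})$ is triangular and $A_ne^{\xi}=g(\lambda)e^{\xi}$ with $g=O(\lambda^{-1})$, these conditions say $g\,\hat{w}^{*}$ has no negative powers of $\lambda$; being $O(\lambda^{-1})$ it must vanish, whence $g=0$ and $A_n=0$.

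I expect the converse to be the main obstacle. The forward direction is essentially routine once the fundamental residue lemma and the adjoint linear equations are in place, whereas the converse must manufacture the linear equations from scratch. The delicate points there are: the fundamental lemma only applies when the primed and unprimed arguments coincide, so one must first differentiate the identity and only then set the arguments equal; and one must verify that the admissible $x'$-derivatives of $\psi^{*}$, evaluated at coincidence, generate enough test functions $\lambda^{k}\psi^{*}$ (a triangular change of basis) to drive the nondegeneracy argument. Care is also needed to treat the Taylor expansions and the differentiation of $\mathrm{Res}_\lambda$ as genuine identities of formal series in $\lambda^{-1}$ and in the shifts $x'-x$, $\bm{t}'-\bm{t}$.
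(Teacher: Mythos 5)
The paper does not prove this theorem; it is quoted from \cite{DJKM,JimboMiwa,MJDbook}, so there is no in-paper argument to compare against. Your proposal reproduces the standard proof from those references --- the residue lemma $\mathrm{Res}_\lambda\left[(Pe^{\xi})(Qe^{-\xi})\right]=\mathrm{Res}_{\partial_x}\left(PQ^{*}\right)$, Taylor expansion about coincidence for the forward direction, and the triangular nondegeneracy argument for the converse --- and the architecture is sound. Two steps are compressed enough to deserve explicit treatment. First, the ``companion formula'' $\hat{w}^{*}=\tau\left(x+\lambda^{-1},\bm{t}+[\lambda^{-1}]\right)/\tau$ is not part of the paper's existence theorem, which only supplies \eqref{hatw=tau/tau}; in the classical treatment this dual formula is \emph{derived from} the wave-function form of the bilinear identity (by specializing $\bm{t}'=\bm{t}-[\mu^{-1}]$, $x'=x-\mu^{-1}$) rather than assumed, so invoking it to pass from the $\tau$-form to the wave-function form in the forward direction needs its own justification or a reordering of the logic. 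Second, in the converse you discard the $B_n\psi$ contribution ``as the residue of a differential operator,'' but with $\psi^{*}=Se^{-\xi}$ built from $\tau$ this gives $\mathrm{Res}_{\partial_x}\left(B_nW S^{*}(-\partial_x)^{k}\right)$, which vanishes only after one knows $WS^{*}=1$; that identity must first be extracted from the $k$-fold $x'$-derivatives of the \emph{undifferentiated} bilinear identity at $\bm{t}'=\bm{t}$ (the same triangularity argument you use at the end). With these two steps made explicit the proof is complete and agrees with the cited sources.
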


\subsection{Galilean symmetry of the KP hierarchy}
The Galilean-type transformation used in \eqref{GalileanTransformedWPsol} 
can be generalized to the whole KP hierarchy. 
For $\bm{t}=\left(t_2,t_3,\ldots\right)$, define $T(\bm{t})$ as 
\begin{equation}
T\left(x,\bm{t}\right) = \left(
x+\sum_{j=2}^\infty\binom{j}{1}a^{j-1}t_j,\,
\sum_{j=2}^\infty\binom{j}{2}a^{j-2}t_j,\,\ldots,\,
\sum_{j=n}^\infty\binom{j}{n}a^{j-n}t_j,\,\ldots
\right). 
\end{equation}
\begin{thm}[Galilean symmetry of the KP hierarchy]
\label{thm:GalileanSymmetry}
Let $L(x,\bm{t})\in\mathcal{E}^{(1)}$ of the form \eqref{def:generalizedLaxOp}
be a solution of the KP hierarchy. 
Define $\tilde{L}(x,\bm{t})$ as
\begin{equation}
\tilde{L}(x,\bm{t}) = e^{ax}L(x,\bm{t})e^{-ax}+a
\end{equation}
for $a$ is an constant.
Then $\tilde{L}\left(x,T(\bm{t})\right)$ is also of the form \eqref{def:generalizedLaxOp}
and satisfies the KP hierarchy. 
\end{thm}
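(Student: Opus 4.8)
The plan is to work entirely at the level of the dressing operator and the Lax operator, reducing the statement to two structural facts: conjugation by $e^{ax}$ absorbs a scalar shift of $\partial_x$, and it commutes with the projections $(\,\cdot\,)_{\geq 0}$, $(\,\cdot\,)_{<0}$ by Lemma~\ref{lemma:Helminck_vdLeur}.

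First I would dispose of the claim that $\tilde L$ is again of the form \eqref{def:generalizedLaxOp}. Writing $L = W\partial_x W^{-1}$ with $W = 1 + O(\partial_x^{-1})$ and setting $\tilde W = e^{ax}We^{-ax}$, I would use $e^{ax}\partial_x e^{-ax} = \partial_x - a$ together with the fact that the scalar $a$ commutes with $\tilde W$:
\[
\tilde L = e^{ax}W\partial_x W^{-1}e^{-ax} + a = \tilde W(\partial_x - a)\tilde W^{-1} + a = \tilde W\partial_x\tilde W^{-1}.
\]
Since $\tilde W = 1 + O(\partial_x^{-1})$, dressing $\partial_x$ by $\tilde W$ produces an operator $\partial_x + O(\partial_x^{-1})$ with vanishing $\partial_x^{0}$-coefficient, which is exactly the form \eqref{def:generalizedLaxOp}; the same holds after the substitution $\bm t\mapsto T(\bm t)$.

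For the hierarchy equations I would abbreviate $L^\circ = L(x,T(\bm t))$, $B_k^\circ = \bigl((L^\circ)^k\bigr)_{\geq 0}$, and $\hat L = \tilde L(x,T(\bm t)) = e^{ax}L^\circ e^{-ax} + a$. The key computation is, using that $a$ is a scalar,
\[
\hat L^{\,n} = \bigl(e^{ax}L^\circ e^{-ax} + a\bigr)^n = e^{ax}\Bigl(\sum_{k=0}^n \binom{n}{k}a^{n-k}(L^\circ)^k\Bigr)e^{-ax}.
\]
Applying \eqref{lemma6.1:(a),(b)} with the multipliers $e^{ax},\,e^{-ax}\in\mathcal A$ pulls the projection through the conjugation and gives
\[
\hat B_n := (\hat L^{\,n})_{\geq 0} = e^{ax}\Bigl(\sum_{k=0}^n \binom{n}{k}a^{n-k}B_k^\circ\Bigr)e^{-ax},\qquad B_0^\circ = 1,\ B_1^\circ = \partial_x.
\]
Because the $k=0$ term $a^n$ and the constant $a$ inside $\hat L$ are scalars, they drop out of the commutator, leaving $[\hat B_n,\hat L] = e^{ax}\bigl(\sum_{k=1}^n \binom{n}{k}a^{n-k}[B_k^\circ,L^\circ]\bigr)e^{-ax}$.

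Finally I would match this against $\partial_{t_n}\hat L$. Since $e^{\pm ax}$ and $a$ are $\bm t$-independent, $\partial_{t_n}\hat L = e^{ax}\,\partial_{t_n}\!\bigl[L(x,T(\bm t))\bigr]\,e^{-ax}$, and the chain rule gives $\partial_{t_n}[L(x,T(\bm t))] = \sum_{m=1}^n (\partial s_m/\partial t_n)\,(\partial_{s_m}L)^\circ$, where $s_m$ is the $m$-th slot of $T$. From the definition of $T$ one reads off $\partial s_m/\partial t_n = \binom{n}{m}a^{n-m}$, while the hierarchy \eqref{generalizedLaxEqs} for $L$ (including the $x$-slot $m=1$, for which $\partial_x L = [\partial_x,L]$ holds identically) gives $(\partial_{s_m}L)^\circ = [B_m^\circ,L^\circ]$. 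Substituting yields exactly $\sum_{m=1}^n \binom{n}{m}a^{n-m}[B_m^\circ,L^\circ]$ conjugated by $e^{\pm ax}$, which coincides with $[\hat B_n,\hat L]$; hence $\partial_{t_n}\hat L = [\hat B_n,\hat L]$. The point to get right is precisely this bookkeeping: one must check that the Galilean shift coefficients $\binom{n}{m}a^{n-m}$ produced by differentiating $T(\bm t)$ are identical to the binomial weights generated by expanding $(L^\circ + a)^n$, and in particular that the $x$-slot contributes $[\partial_x,L^\circ]$ consistently with the convention $x = t_1$. I expect this index matching, rather than any analytic difficulty, to be the crux of the argument.
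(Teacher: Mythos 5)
Your proposal is correct and follows essentially the same route as the paper's proof: binomially expand $\tilde L^{\,n}=(e^{ax}L e^{-ax}+a)^n$, pull the projection $(\,\cdot\,)_{\geq 0}$ through the conjugation via \eqref{lemma6.1:(a),(b)}, and match the resulting weights $\binom{n}{m}a^{n-m}$ against those produced by the chain rule applied to the reparametrization $T(\bm t)$. Your additional verification that $\tilde L=\tilde W\partial_x\tilde W^{-1}$ has no $\partial_x^{0}$-term is a small but welcome supplement that the paper leaves implicit.
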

\begin{proof}
It follows from \eqref{lemma6.1:(a),(b)} that
\begin{align}
\left(\tilde{L}^m\right)_{\geq 0} &= 
\sum_{n=0}^m\binom{m}{n}a^{m-n}
e^{ax}\left(L^n\right)_{\geq 0}e^{-ax}.
\label{tildeLm}
\end{align}
We introduce a new set of time-variables 
$\tilde{\bm{t}}=\left(\tilde{t}_1,\tilde{t}_2,\ldots\right)$ and define
$\bm{t}=T\left(\tilde{\bm{t}}\right)$, i.e., 
\begin{equation}
t_n = \sum_{j=n}^\infty\binom{j}{n}a^{j-n}\tilde{t}_j \quad (n=1,2,\ldots).
\end{equation}
Using \eqref{tildeLm}, we have
\begin{align}
\frac{\partial}{\partial\tilde{t}_m}&\tilde{L}\left(T(\tilde{x},\tilde{\bm{t}})\right)
= \sum_{n=1}^m\binom{m}{n}a^{m-n}
e^{ax}\frac{\partial L\left(x,\bm{t}\right)}{\partial t_n}e^{-ax}
\nonumber\\
&=\sum_{n=0}^m\binom{m}{n}a^{m-n}
\left[
e^{ax}\left(L^n\right)_{\geq 0}e^{-ax},\,e^{ax}Le^{-ax}
\right]
=\left[\left(\tilde{L}^m\right)_{\geq 0},\,\tilde{L}\right],
\end{align}
which proves the desired result.
\end{proof}
\begin{cor}[Galilean symmetry of the KP equation]
Let $u(x,y,t)$ be a solution of the KP equation \eqref{KPeq}. 
Then $\tilde{u}(x,y,t)=u(x+2a y+3a^2 t,y+3a t,t)$ is also a solution 
with arbitrary constant $a$. 
\end{cor}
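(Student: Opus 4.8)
The plan is to read this Corollary as the $(m,n)=(2,3)$ specialization of Theorem \ref{thm:GalileanSymmetry}, so that almost no new computation is needed. Writing $u=u_2$ for the coefficient of $\partial_x^{-1}$ in $L$, I would first track what the conjugation $\tilde{L}=e^{ax}Le^{-ax}+a$ does to this coefficient. Since $e^{ax}\partial_x e^{-ax}=\partial_x-a$, the added constant $+a$ exactly cancels the shift on the leading term, so $\tilde{L}=\partial_x+\sum_{j\geq 1}u_{j+1}(\partial_x-a)^{-j}$. Expanding $(\partial_x-a)^{-1}=\partial_x^{-1}+a\partial_x^{-2}+\cdots$, only the $j=1$ summand reaches order $\partial_x^{-1}$, and it contributes $u_2$ unchanged, while all $j\geq 2$ terms are of order $\partial_x^{-2}$ or lower. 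Hence $\tilde{u}_2=u_2$ as a function of $(x,\bm{t})$.

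Next I would specialize the reparametrization $T(\bm{t})$ to the KP variables $t_2=y$, $t_3=t$ (higher times set to zero). Reading off the first three components gives the $x$-slot $x+\binom{2}{1}a\,t_2+\binom{3}{1}a^2t_3=x+2ay+3a^2t$, the $t_2$-slot $\binom{2}{2}t_2+\binom{3}{2}a\,t_3=y+3at$, and the $t_3$-slot $t_3=t$. Combined with $\tilde{u}_2=u_2$, the new solution supplied by the Theorem is $\tilde{u}_2\big(x,T(\bm{t})\big)=u\big(x+2ay+3a^2t,\,y+3at,\,t\big)$, which is exactly the claimed $\tilde{u}$. The only point needing care is that a bare KP solution need not carry the higher flows of the hierarchy; but the Zakharov--Shabat system for $(m,n)=(2,3)$ uses only $\partial_x,\partial_{t_2},\partial_{t_3}$, so the argument is unaffected by this.

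Because of that caveat I would also record a self-contained verification by the chain rule, which is probably the cleaner thing to write down. Setting $X=x+2ay+3a^2t$, $Y=y+3at$, $T=t$, one has $\partial_x\leftrightarrow\partial_X$, $\partial_y\leftrightarrow 2a\partial_X+\partial_Y$, and $\partial_t\leftrightarrow 3a^2\partial_X+3a\partial_Y+\partial_T$ on functions of $(X,Y,T)$. Substituting $\tilde{u}(x,y,t)=u(X,Y,T)$ into $\big(4\tilde{u}_t-12\tilde{u}\tilde{u}_x-\tilde{u}_{xxx}\big)_x-3\tilde{u}_{yy}$ and collecting terms, the contributions $12a^2u_{XX}$ and $12a\,u_{XY}$ arising from $4\tilde{u}_t$ are precisely matched and cancelled by the $12a^2u_{XX}$ and $12a\,u_{XY}$ arising from $3\tilde{u}_{yy}$, leaving $\big(4u_T-12uu_X-u_{XXX}\big)_X-3u_{YY}$ evaluated at $(X,Y,T)$, which vanishes since $u$ solves \eqref{KPeq}.

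The substantive content, and the one place anything could go wrong, is exactly this cancellation of the cross terms $u_{XX}$ and $u_{XY}$: the Galilean coefficients $2a$, $3a^2$ and $3a$ are tuned so that the quadratic- and linear-in-$a$ pieces produced by $(\cdot)_t$ and by $u_{yy}$ agree. I expect the only real obstacle to be careful bookkeeping of the mixed derivatives, in particular remembering that $\partial_y$ on a composite function is $2a\partial_X+\partial_Y$ rather than $\partial_Y$ alone; once that is handled no residual $a$-dependent term survives and the reduction to the KP equation for $u$ is immediate.
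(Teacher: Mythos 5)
Your proposal is correct, and its first half is essentially the paper's own (implicit) argument: the Corollary is stated without proof as the $(t_2,t_3)=(y,t)$ specialization of Theorem \ref{thm:GalileanSymmetry}, exactly as you read it, with $\tilde{u}_2=u_2$ following from $e^{ax}\partial_xe^{-ax}=\partial_x-a$ and the reparametrized arguments read off from $T(x,\bm{t})$. Your supplementary chain-rule verification is worth keeping, since it closes the one real gap in the specialization argument — a function solving \eqref{KPeq} alone is not a priori embedded in the full hierarchy — and your accounting of the cancellation of the $12a^2u_{XX}$ and $12a\,u_{XY}$ cross terms is accurate.
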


\subsection{Darboux transformation for the KP hierarchy}
Let $L\in\mathcal{E}^{(1)}$ of the form \eqref{def:generalizedLaxOp}
be a solution of the KP hierarchy, and $\psi$ an associated wave function 
of the form \eqref{def:BAfcn}. 
In this subsection, we consider a kind of gauge transformation
\begin{equation}
L \mapsto \tilde{L}=\mathrm{Ad}_G(L)=GLG^{-1}, \quad \psi \mapsto \tilde{\psi} = G\psi, 
\label{def:DarbouxTrf}
\end{equation}
where $G\in\mathcal{E}$ is an invertible pseudo-differential operator. 
If $\tilde{L}$ is again another solution of the KP hierarchy and 
$\tilde{\psi}$ is the wave function associated to $\tilde{L}$, 
we call the transformation \eqref{def:DarbouxTrf} a \textit{Darbourx transformation}.
Darboux transformations for the KP hierarchy have been discussed by 
many authors \cite{ChauShawYen,Helminck_vdLeur_CMP,Helminck_vdLeur_CJM,Helminck_vdLeur_PublRIMS,
LeviRagnisco,Oevel,OevelRogers,OevelSchief}. 
Among those, hereafter we shall follow the pseudo-differential operator formulation 
by Helminck and van de Leur 
\cite{Helminck_vdLeur_CMP,Helminck_vdLeur_CJM,Helminck_vdLeur_PublRIMS} (cf. \cite{OevelSchief}).

\begin{thm}[Elementary Darboux transformation for the KP hierarchy 
\cite{Helminck_vdLeur_CMP,Helminck_vdLeur_CJM,Helminck_vdLeur_PublRIMS,OevelSchief}]
\label{thm:HvdL_Prop4.1}
Let $W\in\mathcal{E}^{(0)}$ be a monic pseudo-differential operator
that satisfies the Sato-Wilson equation \eqref{SatoWilsonEqn}. 
Let $\varphi(x,\bm{t})$ satisfy
\begin{equation}
\frac{\partial \varphi(x,\bm{t})}{\partial t_n} = B_n \varphi(x,\bm{t}), \quad
B_n = \left(W\partial_x^nW^{-1}\right)_{\geq 0}.
\label{def:eigenfunction:W}
\end{equation}
We say $\varphi(x,\bm{t})$ is an eigenfunction of the Lax operator $L=W\partial_xW^{-1}$ 
if it satisfies \eqref{def:eigenfunction:W}. 

Define a first order differential operator $G_\varphi$ as
\begin{equation}
G_\varphi = \varphi\partial_x\varphi^{-1} = \partial_x-\frac{\varphi'}{\varphi}
=\partial_x - \left(\log\varphi\right)', 
\end{equation}
which satisfies
\begin{align}
G_\varphi\left(\varphi\right) &= 0,\\
\partial_n\left(G_\varphi\right) &= -\partial_n\partial_x\left(\log\varphi\right)
=-\left[\partial_x,\partial_n\left(\log\varphi\right)\right]
=-\left[G_\varphi,\partial_n\left(\log\varphi\right)\right].
\label{dn(Gphi)=...}
\end{align}
Using $G_\varphi$, we define another monic pseudo-differential operator 
$W_\varphi\in\mathcal{E}^{(0)}$ as 
\begin{equation}
W_\varphi = G_\varphi W\partial_x^{-1}.
\end{equation}
Then $W_\varphi$ also solves the Sato-Wilson equation \eqref{SatoWilsonEqn}. 
\end{thm}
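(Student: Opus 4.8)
The plan is to verify the Sato--Wilson equation for $W_\varphi$ directly by computing $\partial_n W_\varphi$ (writing $\partial_n=\partial/\partial t_n$) and matching it to the required form. Set $\tilde{L}=W_\varphi\partial_x W_\varphi^{-1}$ and $\tilde{B}_n=(\tilde{L}^n)_{\geq 0}$; the goal is then the identity $\partial_n W_\varphi=\tilde{B}_n W_\varphi-W_\varphi\partial_x^n$. First I would record two structural facts. By construction $W_\varphi=G_\varphi W\partial_x^{-1}$ is monic of order $0$, so $W_\varphi\in\mathcal{E}^{(0)}$ and $\partial_n W_\varphi\in\mathcal{E}^{(-1)}$. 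Moreover, since $W_\varphi^{-1}=\partial_x W^{-1}G_\varphi^{-1}$, a short cancellation of the middle factors gives $\tilde{L}=G_\varphi\left(W\partial_x W^{-1}\right)G_\varphi^{-1}=G_\varphi L G_\varphi^{-1}$, and hence $\tilde{L}^n=G_\varphi L^n G_\varphi^{-1}$.

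Next I would differentiate $W_\varphi=G_\varphi W\partial_x^{-1}$ by the product rule, then substitute the Sato--Wilson equation $\partial_n W=B_n W-W\partial_x^n$ for $W$, the formula for $\partial_n(G_\varphi)$ from \eqref{dn(Gphi)=...}, and the inverse relation $W=G_\varphi^{-1}W_\varphi\partial_x$. After these substitutions the three resulting terms collapse to
\begin{equation*}
\partial_n W_\varphi = \hat{B}_n W_\varphi - W_\varphi\partial_x^n, \qquad
\hat{B}_n := G_\varphi B_n G_\varphi^{-1} + (\partial_n G_\varphi)G_\varphi^{-1}.
\end{equation*}
Thus the theorem reduces to the single claim that $\hat{B}_n$ coincides with the differential operator $\tilde{B}_n=(\tilde{L}^n)_{\geq 0}$.

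I would establish this claim in two steps. For the first, a simple order count shows $\hat{B}_n-\tilde{L}^n = (\partial_n G_\varphi)G_\varphi^{-1}-G_\varphi(L^n)_{<0}G_\varphi^{-1}\in\mathcal{E}^{(-1)}$, because $\partial_n G_\varphi$ is a multiplication (order $0$) operator and $(L^n)_{<0}$ has order $\leq -1$. For the second, I would show $\hat{B}_n\in\mathcal{D}$. Granting both, the decomposition $\mathcal{E}=\mathcal{D}\oplus\mathcal{E}^{(-1)}$ forces $\hat{B}_n-\tilde{B}_n=(\hat{B}_n-\tilde{L}^n)+(\tilde{L}^n)_{<0}\in\mathcal{D}\cap\mathcal{E}^{(-1)}=\{0\}$, i.e. $\hat{B}_n=\tilde{B}_n$, completing the proof via the displayed identity above.

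The main obstacle is showing $\hat{B}_n\in\mathcal{D}$, and this is exactly where the eigenfunction hypothesis enters. I would write $\hat{B}_n=P\,G_\varphi^{-1}$ with $P:=G_\varphi B_n+\partial_n G_\varphi\in\mathcal{D}$ and $G_\varphi^{-1}=\varphi\partial_x^{-1}\varphi^{-1}$. By \eqref{lemma6.1:(a),(b)} it then suffices to show $\left(P\varphi\,\partial_x^{-1}\right)_{<0}=0$; since $P\varphi\in\mathcal{D}$, formulas \eqref{lemma6.1:(f)} and \eqref{lemma6.1:(h)} reduce this to the single scalar $\mathrm{Res}_{\partial_x}\left(P\varphi\,\partial_x^{-1}\right)=P(\varphi)$, the differential operator $P$ applied to the function $\varphi$. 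Finally $P(\varphi)=G_\varphi(B_n\varphi)+(\partial_n G_\varphi)\varphi$ vanishes: differentiating the identity $G_\varphi\varphi=0$ gives $G_\varphi(\partial_n\varphi)=-(\partial_n G_\varphi)\varphi$, and the eigenfunction condition $B_n\varphi=\partial_n\varphi$ of \eqref{def:eigenfunction:W} turns this into $G_\varphi(B_n\varphi)=-(\partial_n G_\varphi)\varphi$, so the two contributions cancel. Hence $(\hat{B}_n)_{<0}=0$, which yields $\hat{B}_n\in\mathcal{D}$ as required.
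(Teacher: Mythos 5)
Your proof is correct. Note that the paper itself does not prove this theorem --- it defers to Helminck--van de Leur (Proposition 4.1 of \cite{Helminck_vdLeur_CMP}) --- so your argument supplies a self-contained proof. Each step checks out: the product-rule computation with $W\partial_x^{-1}=G_\varphi^{-1}W_\varphi$ correctly yields $\partial_n W_\varphi=\hat{B}_nW_\varphi-W_\varphi\partial_x^n$ with $\hat{B}_n=G_\varphi B_nG_\varphi^{-1}+(\partial_nG_\varphi)G_\varphi^{-1}$; the order count placing $\hat{B}_n-\tilde{L}^n$ in $\mathcal{E}^{(-1)}$ is right; and the key step $(\hat B_n)_{<0}=0$ via \eqref{lemma6.1:(a),(b)}, \eqref{lemma6.1:(f)}, \eqref{lemma6.1:(h)} together with $P(\varphi)=G_\varphi(\partial_n\varphi)+(\partial_nG_\varphi)\varphi=\partial_n(G_\varphi\varphi)=0$ is exactly where the eigenfunction hypothesis must enter, and you use it correctly. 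It is worth observing that your method is the same one the paper deploys in its proof of Lemma \ref{lemma:DT_eigenFcn}: expanding $(\partial_nG_\varphi)G_\varphi^{-1}=-G_\varphi\left\{\partial_n(\log\varphi)\right\}G_\varphi^{-1}+\partial_n(\log\varphi)$ shows your $\hat{B}_n$ is precisely the expression for $\left(W_{\varphi}\partial_x^nW_{\varphi}^{-1}\right)_{\geq 0}$ derived there, so your computation and the paper's computation for the companion lemma are two faces of the same calculation. The only presentational difference is one of logical order: the paper's Lemma \ref{lemma:DT_eigenFcn} computes $\left(W_\varphi\partial_x^nW_\varphi^{-1}\right)_{<0}$ head-on with Lemma \ref{lemma:Helminck_vdLeur}, whereas you first isolate $\hat{B}_n$ from the differentiated Sato--Wilson equation and then identify it with $\tilde{B}_n$ by the decomposition $\mathcal{E}=\mathcal{D}\oplus\mathcal{E}^{(-1)}$; both routes rely on the same residue identities.
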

A proof of Theorem \ref{thm:HvdL_Prop4.1} can be found in 
\cite{Helminck_vdLeur_CMP} (Proposition 4.1).

\begin{thm}[Darboux transformation in terms of $\tau$-functions
\cite{Helminck_vdLeur_CMP,Helminck_vdLeur_CJM,Helminck_vdLeur_PublRIMS}]
\label{thm:HvdL_BDT(tau)}
Denote by $\tau(x,\bm{t})$ (resp. $\tau_\varphi(x,\bm{t})$)
the $\tau$-function corresponds to $W$ (resp. $W_\varphi$). 
Then $\tau_\varphi(x,\bm{t})$ can be written as 
\begin{equation}
\tau_\varphi(x,\bm{t}) = \varphi(x,\bm{t})\tau(x,\bm{t}).
\end{equation}
\end{thm}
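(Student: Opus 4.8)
The plan is to determine the wave function attached to $W_\varphi$ explicitly and to match its amplitude against the amplitude that the candidate $\varphi\tau$ would produce through the existence theorem for $\tau$-functions. First I would apply $W_\varphi=G_\varphi W\partial_x^{-1}$ to $e^\xi$. Since $\partial_x^{-1}e^\xi=\lambda^{-1}e^\xi$ and $We^\xi=\psi=\hat w\,e^\xi$, this yields
\[
\psi_\varphi:=W_\varphi e^\xi=\lambda^{-1}G_\varphi\psi=\lambda^{-1}\bigl(\partial_x-(\log\varphi)'\bigr)\psi .
\]
Using $\partial_x\psi=(\hat w'+\lambda\hat w)e^\xi$, the amplitude of $\psi_\varphi$ is
\[
\hat w_\varphi=\hat w+\lambda^{-1}\bigl(\hat w'-(\log\varphi)'\hat w\bigr)=1+O(\lambda^{-1}),
\]
which is monic, as it must be. By Theorem \ref{thm:HvdL_Prop4.1} the operator $W_\varphi$ solves the Sato--Wilson equation \eqref{SatoWilsonEqn}, so it possesses a $\tau$-function $\tau_\varphi$ with $\hat w_\varphi=\tau_\varphi(x-\lambda^{-1},\bm t-[\lambda^{-1}])/\tau_\varphi(x,\bm t)$; the goal is to identify $\tau_\varphi$ with $\varphi\tau$.

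Second, I would read off the $\lambda^{-1}$-coefficient of $\hat w_\varphi$, namely $w_1^\varphi=w_1-(\log\varphi)'=-\partial_x\log(\varphi\tau)$. Since $w_1^\varphi=-\partial_x\log\tau_\varphi$, this already forces $\tau_\varphi=c(\bm t)\,\varphi\tau$ with $c$ independent of $x$. To upgrade this to $\tau_\varphi=\varphi\tau$ up to a genuine constant, it suffices to establish the full amplitude identity $\hat w_\varphi=(\varphi\tau)(x-\lambda^{-1},\bm t-[\lambda^{-1}])/(\varphi\tau)$: indeed the amplitude determines $\tau$ up to a factor invariant under every shift $x\mapsto x-\lambda^{-1}$, $t_n\mapsto t_n-1/(n\lambda^n)$, and such a factor is necessarily constant.

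Third, writing $\varphi^-=\varphi(x-\lambda^{-1},\bm t-[\lambda^{-1}])$ and factoring the candidate amplitude as $(\varphi^-/\varphi)\hat w$, the remaining claim collapses to the identity
\[
\varphi^-=\varphi-\lambda^{-1}\varphi'+\lambda^{-1}\varphi\,\partial_x\log\hat w .
\]
I would prove this order by order in $\lambda^{-1}$. Expanding the shift on the left produces $\partial_{t_n}\varphi$ and lower derivatives at order $\lambda^{-n}$; substituting the eigenfunction equation $\partial_{t_n}\varphi=B_n\varphi$ of \eqref{def:eigenfunction:W} together with the $\tau$-relations $u_2=(\log\tau)_{xx}$ and their higher analogues should reduce each order to a tautology. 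As the model case, at order $\lambda^{-2}$ the identity becomes $\tfrac12\varphi^{-1}(\varphi''-\partial_{t_2}\varphi)=-(\log\tau)_{xx}$, which is precisely $B_2=\partial_x^2+2u_2$ combined with $u_2=(\log\tau)_{xx}$; the higher orders I would organize as an induction.

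The hard part will be the uniform bookkeeping of this last step: matching the combinatorial Taylor expansion of the shift $\bm t\mapsto\bm t-[\lambda^{-1}]$ against the Schur-polynomial structure of $B_n=(L^n)_{\geq0}$ and of $\partial_{t_n}\log\tau$, so as to treat all orders simultaneously rather than one at a time. A cleaner alternative, which I would fall back on, is to verify the bilinear identity directly for $\varphi\tau$: one rewrites the shifted factors $\varphi^-\tau^-$ through $\psi$ and the adjoint wave function and reduces the formal residue to the bilinear identity already known for $\tau$. This is the familiar mechanism by which adjoining a single eigenfunction corresponds to one application of a vertex operator, and I expect the eigenfunction equations \eqref{def:eigenfunction:W} to be exactly what makes the reduction go through.
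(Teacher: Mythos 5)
Your reduction is set up correctly: $W_\varphi e^{\xi}=\lambda^{-1}G_\varphi\psi$ gives the monic amplitude $\hat w_\varphi=\hat w+\lambda^{-1}\bigl(\hat w'-(\log\varphi)'\hat w\bigr)$, the $\lambda^{-1}$ coefficient pins down $\tau_\varphi=c(\bm{t})\,\varphi\tau$ with $\partial_x c=0$, and the whole theorem is then equivalent to the single identity $\varphi\bigl(x-\lambda^{-1},\bm{t}-[\lambda^{-1}]\bigr)\,\psi(x,\bm{t};\lambda)=\lambda^{-1}\bigl(\varphi\,\partial_x\psi-\varphi'\psi\bigr)=\varphi\,\psi_\varphi$. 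For what it is worth, the paper itself does not prove this theorem at all — it defers to Helminck and van de Leur, whose argument runs through the Segal--Wilson Grassmannian and the determinant line bundle — so there is no in-paper proof to compare against; your approach is the natural ``elementary'' alternative.

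The genuine gap is that this equivalent identity is the entire content of the theorem, and you only verify it at order $\lambda^{-2}$. The mechanism you propose for higher orders — substitute $\partial_{t_n}\varphi=B_n\varphi$ together with $u_2=(\log\tau)_{xx}$ ``and their higher analogues'' — does not reduce order $\lambda^{-n}$ to a tautology in any routine way. The coefficient of $\lambda^{-n}$ in $\varphi\bigl(\bm{t}-[\lambda^{-1}]\bigr)$ is the Schur polynomial $p_n(-\tilde\partial)\varphi$, which mixes $\partial_{t_n}\varphi$ with products of all lower derivatives, and it must be matched against $\partial_x$ of the coefficient of $\lambda^{-(n-1)}$ in $\log\tau\bigl(\bm{t}-[\lambda^{-1}]\bigr)$; after eliminating $\partial_{t_n}\varphi$ via $B_n\varphi$ one still needs the explicit expression of the full symbol of $B_n=(L^n)_{\geq 0}$ in terms of $\tau$ at every order, which is essentially the differential Fay identity rather than a consequence of $u_2=(\log\tau)_{xx}$. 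So the ``induction'' is not set up, and the hard part you flag is not a bookkeeping issue but the theorem itself. Your fallback (verify the bilinear identity for $\varphi\tau$ directly) is the more promising route, but as stated it also presupposes an unproved ingredient: to move the shift off $\varphi$ inside the residue you need to represent the eigenfunction through the wave function (e.g.\ via the squared-eigenfunction potential, or by first proving the identity above for $\varphi=\psi(\cdot;\mu)$ and extending by linearity/closure of the span of wave functions), and that representation lemma is of the same depth as the identity you are trying to avoid. As it stands, the proposal is a correct strategy with the decisive step missing.
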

A proof of Theorem \ref{thm:HvdL_BDT(tau)} can be found in 
\cite{Helminck_vdLeur_CJM,Helminck_vdLeur_PublRIMS}.

\begin{lemma}[Darboux transformation for eigenfunctions]
\label{lemma:DT_eigenFcn}
Let $W\in\mathcal{E}^{(0)}$ be a monic pseudo-differential operator
that satisfies the Sato-Wilson equation \eqref{SatoWilsonEqn}, and let 
$\varphi_1(x,\bm{t})$, $\varphi_2(x,\bm{t})$ be eigenfunctions of
$L=W\partial_xW^{-1}$. 
The corresponding differential operators are abbreviated as 
$G_i=G_{\varphi_i}=\varphi_i\partial_x\varphi_i^{-1}$ ($i=1,2$).
Define $\varphi_2^{[1]}$ as
\begin{equation}
\varphi_2^{[1]} = G_1\varphi_2 =
\left(\partial_x-\frac{\varphi'_1}{\varphi_1}\right)\varphi_2.
\end{equation}
Then $\varphi_2^{[1]}$ is an eigenfunction of 
$L_{\varphi_1}=W_{\varphi_1}\partial_xW_{\varphi_1}^{-1}$.
\end{lemma}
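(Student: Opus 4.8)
The plan is to avoid any direct manipulation of the explicit scalar expression $\varphi_2^{[1]}=\varphi_2'-(\varphi_1'/\varphi_1)\varphi_2$ and instead lift the whole problem to the level of pseudo-differential operators, where the Darboux factor $G_1=\varphi_1\partial_x\varphi_1^{-1}$ commutes through the flows in a transparent way. Write $B_n=\left(W\partial_x^nW^{-1}\right)_{\geq 0}$ for the generators attached to $W$ and $\tilde{B}_n=\left(W_{\varphi_1}\partial_x^nW_{\varphi_1}^{-1}\right)_{\geq 0}$ for those attached to $W_{\varphi_1}=G_1W\partial_x^{-1}$, so that $\tilde{B}_n=\left(L_{\varphi_1}^n\right)_{\geq 0}$. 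By the definition \eqref{def:eigenfunction:W} of an eigenfunction, the assertion that $\varphi_2^{[1]}=G_1\varphi_2$ is an eigenfunction of $L_{\varphi_1}=W_{\varphi_1}\partial_xW_{\varphi_1}^{-1}$ is exactly the statement that $\partial\varphi_2^{[1]}/\partial t_n=\tilde{B}_n\varphi_2^{[1]}$ for every $n$, and this is what I intend to verify.

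The key step is to establish the intertwining identity $\tilde{B}_nG_1=G_1B_n+\partial G_1/\partial t_n$ as an equality of differential operators. To obtain it I would differentiate $W_{\varphi_1}=G_1W\partial_x^{-1}$ in $t_n$, giving $\partial W_{\varphi_1}/\partial t_n=(\partial G_1/\partial t_n)W\partial_x^{-1}+G_1(\partial W/\partial t_n)\partial_x^{-1}$, then substitute the Sato--Wilson equation $\partial W/\partial t_n=B_nW-W\partial_x^n$ (valid for $W$ by hypothesis) and compare with the Sato--Wilson equation $\partial W_{\varphi_1}/\partial t_n=\tilde{B}_nW_{\varphi_1}-W_{\varphi_1}\partial_x^n$, which holds because $W_{\varphi_1}$ again solves \eqref{SatoWilsonEqn} by Theorem \ref{thm:HvdL_Prop4.1}. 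The two contributions $-G_1W\partial_x^{n-1}$ arising from $-G_1W\partial_x^n\partial_x^{-1}$ and from $-W_{\varphi_1}\partial_x^n$ cancel, leaving $(\partial G_1/\partial t_n)W\partial_x^{-1}+G_1B_nW\partial_x^{-1}=\tilde{B}_nG_1W\partial_x^{-1}$; since $W$ is monic and $W\partial_x^{-1}$ is therefore invertible in $\mathcal{E}$, I may cancel this factor on the right to reach the claimed relation. Here $\partial G_1/\partial t_n$ is a multiplication operator by \eqref{dn(Gphi)=...}, so the identity is indeed an equality of differential operators.

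Finally I apply this operator identity to the function $\varphi_2$: $\tilde{B}_n\varphi_2^{[1]}=\tilde{B}_nG_1\varphi_2=G_1B_n\varphi_2+(\partial G_1/\partial t_n)\varphi_2$. Since $\varphi_2$ is an eigenfunction of $L$ we have $B_n\varphi_2=\partial\varphi_2/\partial t_n$, whence $G_1B_n\varphi_2=G_1(\partial\varphi_2/\partial t_n)$, and the Leibniz rule collapses $G_1(\partial\varphi_2/\partial t_n)+(\partial G_1/\partial t_n)\varphi_2$ into $\partial(G_1\varphi_2)/\partial t_n=\partial\varphi_2^{[1]}/\partial t_n$. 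This gives $\partial\varphi_2^{[1]}/\partial t_n=\tilde{B}_n\varphi_2^{[1]}$ for all $n$, which is precisely the eigenfunction condition for $L_{\varphi_1}$. I expect the only delicate point to be the derivation of the intertwining relation rather than its use: one must write the two Sato--Wilson equations side by side and track the $\partial_x^n$-terms carefully enough to see their cancellation and to justify removing the right factor $W\partial_x^{-1}$. Notably, no residue computations (Lemma \ref{lemma:Helminck_vdLeur}) or $\tau$-function identities are needed, though the same conclusion can alternatively be reached by testing the operator identity against the wave function $\psi$ via $\psi_{\varphi_1}=\lambda^{-1}G_1\psi$.
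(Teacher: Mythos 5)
Your proof is correct, but it reaches the key intertwining identity by a genuinely different route from the paper. The paper computes the projection $\bigl(W_{\varphi_1}\partial_x^nW_{\varphi_1}^{-1}\bigr)_{<0}$ head-on, using the residue formulas of Lemma \ref{lemma:Helminck_vdLeur} together with the eigenfunction property of $\varphi_1$ in the form $\bigl(W\partial_x^nW^{-1}\bigr)_{\geq 0}(\varphi_1)=\partial_n(\varphi_1)$; this yields $\tilde{B}_n=G_1B_nG_1^{-1}-G_1\{\partial_n(\log\varphi_1)\}G_1^{-1}+\partial_n(\log\varphi_1)$, i.e.\ exactly your relation $\tilde{B}_nG_1=G_1B_n+\partial_n(G_1)$, without ever invoking the Sato--Wilson equation for $W_{\varphi_1}$. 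You instead differentiate the factorization $W_{\varphi_1}=G_1W\partial_x^{-1}$, substitute the two Sato--Wilson equations, cancel the $-G_1W\partial_x^{n-1}$ terms, and strip the invertible right factor $W\partial_x^{-1}$ (legitimate, since $W$ is monic of order $0$); this is the same computation the paper records at the start of Section 5 as $\partial_n G_N=B_nG_N-G_NB_n^{(V)}$. The trade-off: your argument is shorter and needs no residue calculus, but it uses Theorem \ref{thm:HvdL_Prop4.1} (that $W_{\varphi_1}$ again solves \eqref{SatoWilsonEqn}) as a black box, whereas the paper's projection computation is self-contained and effectively re-proves that fact along the way. From the intertwining identity onward --- applying it to $\varphi_2$, using $B_n\varphi_2=\partial_n\varphi_2$ and collapsing $G_1\partial_n(\varphi_2)+(\partial_nG_1)\varphi_2$ into $\partial_n(G_1\varphi_2)$ via \eqref{dn(Gphi)=...} and the Leibniz rule --- your argument and the paper's coincide.
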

\begin{proof} 
For simplicity, we use the notation 
$\partial_n(\varphi)=\frac{\partial\varphi}{\partial t_n}$ ($n=1,2,\ldots$). 
By the assumption, $\varphi_i$ ($i=1,2$) satisfies 
\begin{equation}
\partial_n(\varphi_i) = \left(W\partial_x^nW\right)_{\geq 0}\left(\varphi_i\right)
= \mathrm{Res}_{\partial_x}\left(W\partial_x^nW\varphi_i\partial_x^{-1}\right),
\end{equation}
where we have used \eqref{lemma6.1:(h)}. 

On the other hand, straightforward calculation with Lemma \ref{lemma:Helminck_vdLeur} shows
\begin{align}
&\left(W_{\varphi_1}\partial_x^nW_{\varphi_1}^{-1}\right)_{<0} 
=\left(\varphi_1\partial_x\varphi_1^{-1}W\partial_x^n
W^{-1}\varphi_1\partial_x^{-1}\varphi_1^{-1}\right)_{<0}
\nonumber\\
&=\varphi_1\partial_x\varphi_1^{-1}\left(W\partial_x^n
W^{-1}\varphi_1\partial_x^{-1}\right)_{<0}\varphi_1^{-1}
-\varphi_1^{-1}\mathrm{Res}_{\partial_x}\left(W\partial_x^n
W^{-1}\varphi_1\partial_x^{-1}\right)
\nonumber\\
&=\varphi_1\partial_x\varphi_1^{-1}
\left(W\partial_x^nW^{-1}\right)_{<0}\varphi_1\partial_x^{-1}\varphi_1^{-1}
\nonumber\\
&\qquad +\varphi_1\partial_x\varphi_1^{-1}
\left\{\left(W\partial_x^nW^{-1}\right)_{\geq 0}\left(\varphi_1\right)\right\}
\partial_x^{-1}\varphi_1^{-1}
-\varphi_1^{-1}\left\{\left(W\partial_x^nW^{-1}\right)_{\geq 0}\left(\varphi_1\right)\right\}
\nonumber\\
&=G_1\left(W\partial_x^nW^{-1}\right)_{<0}G_1^{-1}
+G_1
\left\{\partial_n\left(\log\varphi_1\right)\right\}
G_1^{-1}
-\partial_n\left(\log\varphi_1\right).
\end{align}
Thus we have
\begin{equation}
\left(W_{\varphi_1}\partial_x^nW_{\varphi_1}^{-1}\right)_{\geq 0} 
=G_1\left(W\partial_x^nW^{-1}\right)_{\geq 0}G_1^{-1}
-G_1\left\{\partial_n\left(\log\varphi_1\right)\right\}G_1^{-1}
+\partial_n\left(\log\varphi_1\right), 
\end{equation}
and obtain
\begin{align}
&\left(W_{\varphi_1}\partial_x^nW_{\varphi_1}^{-1}\right)_{\geq 0} \varphi_2^{[1]}
=\left(W_{\varphi_1}\partial_x^nW_{\varphi_1}^{-1}\right)_{\geq 0}G_1\varphi_{2}
\nonumber\\
&=G_1\left(W\partial_x^nW^{-1}\right)_{\geq 0}\varphi_2
-\left[G_1,\,\partial_n\left(\log\varphi_1\right)\right]\varphi_2
\nonumber\\
&=G_1\partial_n\left(\varphi_2\right)+\left(\partial_n G_1\right)\varphi_2
=\partial_n\left(G_1\varphi_2\right) 
=\partial_n\left(\varphi_2^{[1]}\right),
\end{align}
which completes the proof.
\end{proof}

{}From Lemma \ref{lemma:DT_eigenFcn}, we know that $G_2^{[1]}=G_{\varphi_2^{[1]}}$ can be applied to 
$L_{\varphi_1}=G_1LG_1^{-1}$ and obtain yet another solution
$G_2^{[1]}G_1LG_1^{-1}G_2^{[1]-1}$. 
This can be generalized to $N$-fold iteration of the
elementary Darboux transformations of Theorem \ref{thm:HvdL_Prop4.1}. 
To describe $N$-fold iteration, we introduce the following notation:
\begin{equation}
\begin{aligned}
\varphi^{[1]}_i &:= G_1\varphi_i, & G^{[1]}_i &:= G_{\varphi^{[1]}_i},
\\
\varphi^{[21]}_i &:= G^{[1]}_2\varphi^{[1]}_i= G^{[1]}_2G_1\varphi_i, 
& G^{[21]}_i &:= G_{\varphi^{[21]}_i},
\\
\varphi^{[321]}_i &:= G^{[21]}_3\varphi^{[21]}_i= G^{[21]}_3G^{[1]}_2G_1\varphi_i, 
& G^{[321]}_i &:= G_{\varphi^{[321]}_i},
\\
\vdots && \vdots\\
\varphi^{[N,\ldots,1]}_i &:= G^{[N-1,\cdots, 1]}_N \cdots G^{[1]}_2G_1\varphi_i, 
& G^{[N,\ldots,1]}_i &:= G_{\varphi^{[N,\ldots,1]}_i},\\
\vdots && \vdots
\end{aligned}
\end{equation}
Iterative actions of the elementary Darboux transformations may be summarized
in the following diagram:
\begin{equation}
\label{DTdiagram}
\begin{CD}
L @>{G_1}>> \mathrm{Ad}_{G_1}(L)
@>{G^{[1]}_2}>> \mathrm{Ad}_{G^{[1]}_2G_1}(L)
@>{G^{[21]}_3}>> \mathrm{Ad}_{G^{[21]}_3G^{[1]}_2G_1}(L)
@>{G^{[321]}_4}>> \cdots
\\
\varphi_1 @>{G_1}>> 0
\\
\varphi_2 @>{G_1}>> \varphi^{[1]}_2
@>{G^{[1]}_2}>> 0
\\
\varphi_3 @>{G_1}>> \varphi^{[1]}_3
@>{G^{[1]}_2}>> \varphi_3^{[21]}
@>{G^{[21]}_3}>> 0
\\
\varphi_4 @>{G_1}>> \varphi^{[1]}_4
@>{G^{[1]}_2}>> \varphi_4^{[21]}
@>{G^{[21]}_3}>> \varphi_4^{[321]}
@>{G^{[321]}_4}>> 0 \; .
\\
\vdots @. \vdots @. \vdots @. \vdots 
\end{CD}
\end{equation}

The following proposition is a generalization of Crum's theorem \cite{Crum}
(cf. \cite{MatveevSalle}) to the KP hierarchy.
\begin{thm}[Iteration of the elementary Darboux transformations \cite{LeviRagnisco,Oevel,OevelSchief}]
\label{thm:Crum'sThm}
Let $W\in\mathcal{E}^{(0)}$ be a monic pseudo-differential operator
that satisfies the Sato-Wilson equation \eqref{SatoWilsonEqn}. 
Let $\varphi_1(x,\bm{t}),\ldots,\varphi_N(x,\bm{t})$ are eigenfunction 
of the Lax operator $L=W\partial_xW^{-1}$.
Denote by $G_N$ the monic differential operator of order $N$ that 
corresponds to 
the iteration of the elementary Darboux transformations 
associated with $\varphi_1(x,\bm{t}),\ldots,\varphi_N(x,\bm{t})$:
\begin{equation}
G_N = G^{[N-1\ldots 1]}_N\cdots G^{[1]}_2G_1
= \partial_x^N + \tilde{g}_{1} \partial_x^{N-1} + \cdots + \tilde{g}_N.
\end{equation}
Then the coefficients $\tilde{g}_1,\ldots,\tilde{g}_N$ are parameterized as 
\begin{equation}
\tilde{g}_{N-i} = (-1)^{N-i}
\frac{\mathcal{W}_i\left(\varphi_1,\ldots,\varphi_N\right)}{
\mathcal{W}\left(\varphi_1,\ldots,\varphi_N\right)}\quad (i=0,1,\ldots,N-1), 
\label{gi=det/Wr}
\end{equation}
where $\mathcal{W}$ is the usual Wronskian determinant and 
\begin{equation}
\mathcal{W}_i\left(\varphi_1,\ldots,\varphi_N\right)
=\begin{bmatrix}
\varphi_1 & \cdots & \partial_x^{i-1}\varphi_1& \partial_x^{i+1}\varphi_1
& \cdots & \partial_x^{N}\varphi_1\\
\varphi_2 & \cdots & \partial_x^{i-1}\varphi_2& \partial_x^{i+1}\varphi_2
& \cdots & \partial_x^{N}\varphi_2\\
\vdots & & \vdots & \vdots & & \vdots\\
\varphi_N & \cdots & \partial_x^{i-1}\varphi_N& \partial_x^{i+1}\varphi_N
& \cdots & \partial_x^{N}\varphi_N
\end{bmatrix}.
\end{equation}
\end{thm}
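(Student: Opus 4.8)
The plan is to prove the coefficient formula by induction on $N$, by showing that the composite operator $G_N$ agrees with the classical Wronskian differential operator and then reading off its coefficients through a single cofactor expansion. The guiding observation is that
\[
\psi \mapsto \frac{\mathcal{W}\left(\varphi_1,\ldots,\varphi_N,\psi\right)}{\mathcal{W}\left(\varphi_1,\ldots,\varphi_N\right)}
\]
is, as an operator acting on $\psi$, a monic differential operator of order $N$ (its leading coefficient is $\mathcal{W}(\varphi_1,\ldots,\varphi_N)/\mathcal{W}(\varphi_1,\ldots,\varphi_N)=1$) that annihilates each $\varphi_j$, since setting $\psi=\varphi_j$ produces a determinant with two equal rows. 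For the base case $N=1$ one has $G_1=\partial_x-\varphi_1'/\varphi_1$, which coincides with $\psi\mapsto\mathcal{W}(\varphi_1,\psi)/\mathcal{W}(\varphi_1)=(\varphi_1\psi'-\varphi_1'\psi)/\varphi_1$, and the claimed formula reduces to $\tilde g_1=-\varphi_1'/\varphi_1=(-1)^1\mathcal{W}_0(\varphi_1)/\mathcal{W}(\varphi_1)$.

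For the inductive step I would factor $G_N=G^{[N-1\ldots 1]}_N\circ G_{N-1}$, where $G_{N-1}=G^{[N-2\ldots 1]}_{N-1}\cdots G_1$ is the $(N-1)$-fold operator built from $\varphi_1,\ldots,\varphi_{N-1}$. By the induction hypothesis $G_{N-1}\psi=\mathcal{W}(\varphi_1,\ldots,\varphi_{N-1},\psi)/\mathcal{W}(\varphi_1,\ldots,\varphi_{N-1})$, so in particular $g:=\varphi^{[N-1\ldots 1]}_N=G_{N-1}\varphi_N=\mathcal{W}(\varphi_1,\ldots,\varphi_N)/\mathcal{W}(\varphi_1,\ldots,\varphi_{N-1})$. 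Writing $f:=G_{N-1}\psi$, the last Darboux factor gives $G_N\psi=(\partial_x-g'/g)f=(gf'-g'f)/g=\mathcal{W}(g,f)/g$. The crux is the bilinear Wronskian identity (a consequence of the Jacobi/Desnanot--Sylvester determinant identity),
\[
\mathcal{W}\bigl(\mathcal{W}(\mathbf{f},\varphi_N),\,\mathcal{W}(\mathbf{f},\psi)\bigr)=\mathcal{W}(\mathbf{f})\,\mathcal{W}(\mathbf{f},\varphi_N,\psi),\qquad \mathbf{f}=(\varphi_1,\ldots,\varphi_{N-1}),
\]
together with the scaling relation $\mathcal{W}(G/h,F/h)=\mathcal{W}(G,F)/h^2$ applied with $h=\mathcal{W}(\mathbf{f})$, which yield $\mathcal{W}(g,f)=\mathcal{W}(\mathbf{f},\varphi_N,\psi)/\mathcal{W}(\mathbf{f})$. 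Substituting back gives $G_N\psi=\mathcal{W}(\varphi_1,\ldots,\varphi_N,\psi)/\mathcal{W}(\varphi_1,\ldots,\varphi_N)$, closing the induction.

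With $G_N$ identified as the Wronskian operator, I would expand the $(N+1)\times(N+1)$ determinant $\mathcal{W}(\varphi_1,\ldots,\varphi_N,\psi)$ along its last row $(\psi,\partial_x\psi,\ldots,\partial_x^N\psi)$, obtaining $\sum_{i=0}^N(-1)^{N+i}(\partial_x^i\psi)\,\mathcal{W}_i(\varphi_1,\ldots,\varphi_N)$, where the minor obtained by deleting the last row and the $\partial_x^i$ column is exactly $\mathcal{W}_i$ as defined (and $\mathcal{W}_N=\mathcal{W}$ gives the monic leading term). Dividing by $\mathcal{W}$ and comparing with $G_N=\partial_x^N+\sum_{i=0}^{N-1}\tilde g_{N-i}\partial_x^i$ produces $\tilde g_{N-i}=(-1)^{N+i}\mathcal{W}_i/\mathcal{W}=(-1)^{N-i}\mathcal{W}_i/\mathcal{W}$, which is the assertion.

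I expect the bilinear Wronskian identity to be the main obstacle: although classical, a self-contained derivation requires Jacobi's identity for the complementary minors of the bordered Wronskian matrix, and one must verify that every entry lives in the ring $\mathcal{A}$ so that nothing beyond the Leibniz rule is used. A secondary point requiring care is that each Wronskian appearing as a denominator, namely $\mathcal{W}(\varphi_1,\ldots,\varphi_k)$ for $k\le N$, must be invertible in $\mathcal{A}$; this is precisely the condition guaranteeing that each elementary Darboux step in the diagram \eqref{DTdiagram} is well defined, so it is consistent to assume it throughout. As a consistency check, the diagram already shows $G_N\varphi_j=0$ for all $j$, which together with monicity of order $N$ characterizes $G_N$ uniquely and so furnishes an alternative route to the same conclusion.
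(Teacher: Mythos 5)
Your proposal is correct, but it takes a genuinely different route from the paper. The paper's proof is a two-line argument: from the diagram \eqref{DTdiagram} one reads off $G_N\varphi_i=0$ for $i=1,\ldots,N$, and since $G_N$ is monic of order $N$ these $N$ conditions form a linear system for the unknowns $\tilde g_1,\ldots,\tilde g_N$ whose coefficient matrix is the Wronskian matrix; Cramer's rule then yields \eqref{gi=det/Wr} directly, with the sign $(-1)^{N-i}$ coming from moving the substituted column $-\partial_x^N\varphi$ into its natural position. This is exactly the ``alternative route'' you mention in your closing consistency check, so you had the paper's argument in hand but relegated it to a remark. Your main argument instead proves the stronger intermediate statement $G_N\psi=\mathcal{W}(\varphi_1,\ldots,\varphi_N,\psi)/\mathcal{W}(\varphi_1,\ldots,\varphi_N)$ by induction, at the cost of invoking the bilinear Wronskian identity
\begin{equation*}
\mathcal{W}\bigl(\mathcal{W}(\mathbf{f},\varphi_N),\,\mathcal{W}(\mathbf{f},\psi)\bigr)
=\mathcal{W}(\mathbf{f})\,\mathcal{W}(\mathbf{f},\varphi_N,\psi),
\end{equation*}
which you correctly identify as a consequence of the Desnanot--Jacobi identity but do not derive; the paper avoids this machinery entirely. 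What your approach buys is an explicit closed form for the action of $G_N$ on an arbitrary function (not just the annihilated ones) without appealing to a uniqueness argument, and it makes transparent why $\varphi_N^{[N-1,\ldots,1]}$ equals a ratio of consecutive Wronskians --- a fact the paper has to rederive separately by cofactor expansion in the proof of Corollary \ref{cor:transformedTauFcn}. Both proofs rest on the same implicit nondegeneracy assumption, namely the invertibility of $\mathcal{W}(\varphi_1,\ldots,\varphi_k)$ for $k\le N$, which you rightly note is already required for the Darboux chain to be defined. Your final cofactor expansion along the last row, including the sign bookkeeping $(-1)^{N+i}=(-1)^{N-i}$ and the identification $\mathcal{W}_N=\mathcal{W}$, is accurate.
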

\begin{proof}
{}From the diagram \eqref{DTdiagram}, we know that 
\begin{equation}
G_N\varphi_i=0 \quad (i=1,2,\ldots,N). 
\label{GNvarphii=0}
\end{equation}
Solving the linear algebraic equation \eqref{GNvarphii=0} by the Cramer's rule, 
we obtain \eqref{gi=det/Wr}. 
\end{proof}

Transformed wave operator can be calculated more explicitly:
\begin{equation}
\label{transformedW}
\begin{aligned}
G_N W\partial_x^{-N} &=
\left(\partial_x^N + \tilde{g}_1\partial_x^{N-1}+\cdots + \tilde{g}_N\right)
\left(1+w_1\partial_x^{-1}+w_2\partial_x^{-2}+\cdots\right)\partial_x^{-N}
\\
&=1+\left(w_1+\tilde{g}_1\right)\partial_x^{-1}
+\left(Nw_1'+w_2+\tilde{g}_1w_1+\tilde{g}_2\right)\partial_x^{-2}
+\cdots.
\end{aligned}
\end{equation}
The solution of the KP equation that corresponds to 
\eqref{transformedW} is of the form
\begin{equation}
u = -\left(w_1+\tilde{g}_1\right)'
= -w_1' +\partial_x^2\log\mathcal{W}\left(\varphi_1,\ldots,\varphi_N\right).
\label{N-foldtransformedSolofKPeq}
\end{equation}

We can describe the action of the transformation $G_N$ to the $\tau$-function.
\begin{cor}
\label{cor:transformedTauFcn}
Let $L$ be a solution to the KP hierarchy and $\tau$ the $\tau$-function 
that corresponds to $L$. 
Denote by $\tilde{\tau}$ the $\tau$-function that corresponds to
$\mathrm{Ad}_{G_N}\left(L\right)$. The transformed $\tau$-function 
$\tilde{\tau}$ is given by
\begin{equation}
\tilde{\tau} = \mathcal{W}\left(\varphi_1,\ldots,\varphi_N\right)\tau.
\label{tildetau=Wtau}
\end{equation}
\end{cor}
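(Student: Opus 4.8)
The plan is to argue by induction on $N$, stripping off one elementary Darboux transformation at a time and invoking the single-step formula of Theorem~\ref{thm:HvdL_BDT(tau)} at each stage. For $N=1$ the statement is precisely Theorem~\ref{thm:HvdL_BDT(tau)}, since $\mathcal{W}(\varphi_1)=\varphi_1$.

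For the inductive step I would factor the composite operator as $G_N = G^{[N-1\ldots 1]}_N\, G_{N-1}$, where $G_{N-1}=G^{[N-2\ldots 1]}_{N-1}\cdots G^{[1]}_2 G_1$ is the composite of the first $N-1$ elementary transformations and $G^{[N-1\ldots 1]}_N = G_{\varphi^{[N-1\ldots 1]}_N}$ is the last one; correspondingly $\mathrm{Ad}_{G_N}(L)=\mathrm{Ad}_{G^{[N-1\ldots 1]}_N}\bigl(\mathrm{Ad}_{G_{N-1}}(L)\bigr)$. Repeated application of Lemma~\ref{lemma:DT_eigenFcn} shows that $\varphi^{[N-1\ldots 1]}_N = G_{N-1}\varphi_N$ is an eigenfunction of the Lax operator $\mathrm{Ad}_{G_{N-1}}(L)$ produced after the first $N-1$ transformations. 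By the induction hypothesis the $\tau$-function of $\mathrm{Ad}_{G_{N-1}}(L)$ equals $\tau^{[N-1]} = \mathcal{W}(\varphi_1,\ldots,\varphi_{N-1})\,\tau$. Applying Theorem~\ref{thm:HvdL_BDT(tau)} once more, now with eigenfunction $\varphi^{[N-1\ldots 1]}_N$ and taking $\tau^{[N-1]}$ as the underlying $\tau$-function, gives
\[
\tilde{\tau} = \varphi^{[N-1\ldots 1]}_N\,\tau^{[N-1]}
= \left(G_{N-1}\varphi_N\right)\mathcal{W}(\varphi_1,\ldots,\varphi_{N-1})\,\tau.
\]

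The decisive ingredient is then the Wronskian identity $G_{N-1}\varphi_N = \mathcal{W}(\varphi_1,\ldots,\varphi_N)/\mathcal{W}(\varphi_1,\ldots,\varphi_{N-1})$, which I would derive from uniqueness of the monic annihilating operator. By Theorem~\ref{thm:Crum'sThm}, applied to $\varphi_1,\ldots,\varphi_{N-1}$, the operator $G_{N-1}$ is the unique monic differential operator of order $N-1$ satisfying $G_{N-1}\varphi_i=0$ for $i=1,\ldots,N-1$. On the other hand, cofactor expansion of $\mathcal{W}(\varphi_1,\ldots,\varphi_{N-1},\varphi)$ along its last row exhibits this determinant, after division by $\mathcal{W}(\varphi_1,\ldots,\varphi_{N-1})$, as a monic order-$(N-1)$ differential operator acting on $\varphi$; this operator annihilates each $\varphi_i$ with $i\le N-1$ because the determinant then acquires two equal rows. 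By uniqueness the two operators coincide, and setting $\varphi=\varphi_N$ yields the identity. Substituting it into the display telescopes the Wronskian ratio against $\mathcal{W}(\varphi_1,\ldots,\varphi_{N-1})$ and leaves $\tilde{\tau}=\mathcal{W}(\varphi_1,\ldots,\varphi_N)\,\tau$, completing the induction.

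The only genuine obstacle is the bookkeeping needed to ensure that at each stage the iterated object $\varphi^{[N-1\ldots 1]}_N$ really is an eigenfunction of the correctly transformed Lax operator, so that Theorem~\ref{thm:HvdL_BDT(tau)} is applicable; this is exactly what the repeated invocation of Lemma~\ref{lemma:DT_eigenFcn} secures. Everything else reduces to the classical Wronskian/Crum identity already encapsulated in Theorem~\ref{thm:Crum'sThm}.
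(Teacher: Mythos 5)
Your proposal is correct and follows essentially the same route as the paper: the paper likewise applies Theorem~\ref{thm:HvdL_BDT(tau)} recursively and telescopes the product after identifying $\varphi^{[N-1\ldots 1]}_N=G_{N-1}\varphi_N=\mathcal{W}(\varphi_1,\ldots,\varphi_N)/\mathcal{W}(\varphi_1,\ldots,\varphi_{N-1})$. The only cosmetic difference is that the paper obtains this Wronskian identity by substituting the Cramer's-rule coefficients from Theorem~\ref{thm:Crum'sThm} and recognizing the cofactor expansion along the last row, whereas you reach the same identity via uniqueness of the monic annihilating operator.
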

 \begin{proof}
{}From Theorem \ref{thm:Crum'sThm}, we have
\begin{align}
\varphi^{[N-1,\ldots 1]}_N 
&= G_{N-1}\varphi_N
=\partial_x^{N-1}\varphi_N + \sum_{i=0}^{N-2}
(-1)^{N-1-i}
\frac{\mathcal{W}_i\left(\varphi_1,\ldots,\varphi_{N-1}\right)}{
\mathcal{W}\left(\varphi_1,\ldots,\varphi_{N-1}\right)}\partial_x^{i}\varphi_N
\nonumber\\
&= \frac{\mathcal{W}\left(\varphi_1,\ldots,\varphi_{N-1},\varphi_N\right)}{
\mathcal{W}\left(\varphi_1,\ldots,\varphi_{N-1}\right)}, 
\end{align}
where we have applied the expansion along the $N$-th row on 
$\mathcal{W}\left(\varphi_1,\ldots,\varphi_{N-1},\varphi_N\right)$. 
Applying Theorem \ref{thm:HvdL_BDT(tau)} recursively, we obtain
\begin{align}
\tilde{\tau}
&= \varphi^{[N-1\ldots 1]}_N\cdots \varphi^{[1]}_2\varphi_1\tau
\nonumber\\
&=\frac{\mathcal{W}\left(\varphi_1,\ldots,\varphi_{N-1},\varphi_N\right)}{
\mathcal{W}\left(\varphi_1,\ldots,\varphi_{N-1}\right)} 
\cdots
\frac{\mathcal{W}\left(\varphi_1,\varphi_2,\varphi_3\right)}{
\mathcal{W}\left(\varphi_1,\varphi_2\right)} \cdot
 \frac{\mathcal{W}\left(\varphi_1,\varphi_2\right)}{\varphi_1}\cdot\varphi_1\tau,
\end{align}
which is equivalent to \eqref{tildetau=Wtau}. 
\end{proof}

\subsection{Reduction of the KP hierarchy}
Let $L\in\mathcal{E}^{(1)}$ be a solution to the KP hierarchy.
For a positive integer $\ell$, we say $L$ is $\ell$-reduced 
if $L$ satisfies 
\begin{equation}
\left(L^\ell\right)_{<0} = 0.
\label{l-reduction}
\end{equation}
Under the condition \eqref{l-reduction}, it is clear that
\begin{equation}
\left(L^{\ell}\right)_{<0} = \left(L^{2\ell}\right)_{<0} = 
\left(L^{3\ell}\right)_{<0} = \cdots =0
\end{equation}
and hence we have
\begin{equation}
\frac{\partial L}{\partial t_{\ell}} 
=\frac{\partial L}{\partial t_{2\ell}} 
=\frac{\partial L}{\partial t_{3\ell}} =\cdots=0.
\label{l-reduced:dL/dtml=0}
\end{equation}
In the case $\ell=2$, the $2$-reduced hierarchy is equivalent to the KdV hierarchy. 

We consider the elementary Darboux transformation 
(Theorem \ref{thm:HvdL_Prop4.1}) for the $\ell$-reduced case.
\begin{lemma}
\label{lemma:EDT_reducedCase}
Fix a positive integer $m$ and 
let $L$ be a solution of the KP hierarchy that satisfies the condition
\begin{equation}
\frac{\partial L}{\partial t_{m}} =0.
\label{dL/dtm=0}
\end{equation}
Let $\varphi(x,\bm{t})$ be an eigenfunction of $L$, i.e., 
\begin{equation}
\frac{\partial\varphi(x,\bm{t})}{\partial t_n} = \left(L^n\right)_{\geq 0}\varphi(x,\bm{t})
\quad (n=2,3,\ldots).
\end{equation}
We further assume $\varphi(x,\bm{t})$ satisfies
\begin{equation}
\frac{\partial\varphi(x,\bm{t})}{\partial t_{m}} 
=\mu_{m}\varphi(x,\bm{t})
\label{reducedEigenFcn}
\end{equation}
where $\mu_{m}$ is independent of $x$. 
Then 
$\tilde{L} = G_\varphi LG_\varphi^{-1}$
also satisfies 
\begin{equation}
\frac{\partial\tilde{L}}{\partial t_{m}} =0.
\label{l-reduced:d(tildeL)/dtm=0}
\end{equation}
\end{lemma}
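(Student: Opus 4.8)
The plan is to show that the first-order intertwining operator $G_\varphi=\partial_x-(\log\varphi)'$ is itself annihilated by $\partial/\partial t_m$; once this is established, the invariance of $\tilde L=G_\varphi L G_\varphi^{-1}$ follows immediately from the Leibniz rule together with the hypothesis \eqref{dL/dtm=0}.

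First I would examine the logarithmic $t_m$-derivative $\frac{\partial}{\partial t_m}(\log\varphi)$. By the reduced eigenfunction condition \eqref{reducedEigenFcn} we have $\frac{\partial\varphi}{\partial t_m}=\mu_m\varphi$, hence $\frac{\partial}{\partial t_m}(\log\varphi)=\mu_m$, which by assumption is independent of $x$. Substituting $n=m$ into the formula \eqref{dn(Gphi)=...} and commuting the two derivatives then gives
\[
\frac{\partial G_\varphi}{\partial t_m}
=-\frac{\partial}{\partial t_m}\frac{\partial}{\partial x}(\log\varphi)
=-\frac{\partial}{\partial x}\left(\frac{\partial}{\partial t_m}\log\varphi\right)
=-\frac{\partial\mu_m}{\partial x}=0,
\]
so $G_\varphi$ is constant along the $t_m$-flow. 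Differentiating the identity $G_\varphi G_\varphi^{-1}=1$ shows that $\frac{\partial}{\partial t_m}(G_\varphi^{-1})=-G_\varphi^{-1}\bigl(\frac{\partial G_\varphi}{\partial t_m}\bigr)G_\varphi^{-1}=0$ as well.

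With both $G_\varphi$ and $G_\varphi^{-1}$ independent of $t_m$, I would differentiate $\tilde L=G_\varphi L G_\varphi^{-1}$ term by term,
\[
\frac{\partial\tilde L}{\partial t_m}
=\left(\frac{\partial G_\varphi}{\partial t_m}\right)L G_\varphi^{-1}
+G_\varphi\left(\frac{\partial L}{\partial t_m}\right)G_\varphi^{-1}
+G_\varphi L\left(\frac{\partial G_\varphi^{-1}}{\partial t_m}\right),
\]
where the outer two terms vanish by the previous step and the middle term vanishes by the hypothesis \eqref{dL/dtm=0}. This yields \eqref{l-reduced:d(tildeL)/dtm=0}. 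Note that $\tilde L$ is already known to be a genuine solution of the KP hierarchy of the required form by Theorem \ref{thm:HvdL_Prop4.1}, so only the vanishing of the $t_m$-flow needs to be checked here.

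There is no real analytic obstacle; the computation is short. The one point that must be handled carefully is the interplay between the $x$- and $t_m$-derivatives: the essential mechanism is that the reduction hypothesis \eqref{reducedEigenFcn} forces $\frac{\partial}{\partial t_m}(\log\varphi)$ to be \emph{$x$-independent}, so that the $x$-derivative occurring in \eqref{dn(Gphi)=...} annihilates it. Without the assumption that $\mu_m$ is independent of $x$ this cancellation would fail, so I would emphasize exactly where that hypothesis enters.
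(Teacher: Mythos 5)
Your proposal is correct and follows essentially the same route as the paper: both arguments reduce to showing $\partial_{t_m}(G_\varphi)=0$ via formula \eqref{dn(Gphi)=...} together with the $x$-independence of $\mu_m=\partial_{t_m}(\log\varphi)$, and then conclude by differentiating the conjugation $\tilde L=G_\varphi L G_\varphi^{-1}$. The only cosmetic difference is that you use the middle expression $-\partial_{t_m}\partial_x(\log\varphi)$ of \eqref{dn(Gphi)=...} while the paper uses the equivalent commutator form $-[G_\varphi,\mu_m]$; your write-up is simply more explicit about the final Leibniz-rule step.
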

\begin{proof}
{}From \eqref{dn(Gphi)=...} and \eqref{reducedEigenFcn}, it follows that
\begin{equation}
\partial_{m}\left(G_\varphi\right) 
=-\left[G_\varphi,\partial_{m}\left(\log\varphi\right)\right]
=-\left[G_\varphi,\mu_{m}\right] = 0.
\end{equation}
Together with \eqref{dL/dtm=0},
we have \eqref{l-reduced:d(tildeL)/dtm=0}.
\end{proof}
Lemma \ref{lemma:EDT_reducedCase} shows that under the condition
\eqref{reducedEigenFcn} for $m=\ell,2\ell,3\ell,\ldots$, the elementary Darboux transformation
preserves the differential equations for the $\ell$-reduced KP hierarchy, 
\eqref{generalizedLaxEqs} and \eqref{l-reduced:dL/dtml=0}.
We will use this fact to discuss a class of elliptic soliton solutions 
for the KdV hierarchy.

\section{Stationary solution of the KP hierarchy associated with the Lam\'e function}
In this section, 
we consider a special solution to the KP hierarchy that is a generalization of 
the stationary solution \eqref{stationarySol_u(x,t)} to the whole hierarchy. 
Firstly we recall the addition formulas for the Weierstrass functions 
\cite{PastrasBook,AbramowitzStegun,WhittakerWatson}:
\begin{align}
\wp(x\pm k) &= \frac{1}{4}\left\{\frac{\wp'(x)\mp\wp'(k)}{\wp(x)-\wp(k)}\right\}^2
-\wp(x)-\wp(k),
\label{additionFormula:wp}\\
\zeta(x\pm k) &= \zeta(x)\pm\zeta(k)+\frac{1}{2}\frac{\wp'(x)\mp\wp'(k)}{\wp(x)-\wp(k)}.
\label{additionFormula:zeta}
\end{align}
Straightforward calculation with 
\eqref{additionFormula:wp} and \eqref{additionFormula:zeta}
shows \eqref{def:LameFcn} satisfies
\begin{align}
\Phi'(x;k) &= \frac{1}{2}\frac{\wp'(x)-\wp'(k)}{\wp(x)-\wp(k)}\Phi(x;k),
\label{Phi'}\\
\Phi''(x;k)&= \left\{2\wp(x)+\wp(k)\right\}\Phi(x;k).
\label{Phi''}
\end{align} 
It should be remarked that the second equation \eqref{Phi''} is 
a special case of the Lam\'e equation in Weierstrass form:
\begin{equation}
y'' = \left\{A+B\wp(x)\right\}y = 0.
\end{equation}

To calculate 
higher order derivatives of the Lam\'e function \eqref{def:LameFcn}, 
we prepare several notations. 
Let $R=\mathbb{C}\left[\wp(x),\wp'(x)\right]$ be 
the ring of polynomials with respect to 
$\wp(x)$, $\wp'(x)$, and 
$\mathcal{D}_R = R\left[\partial_x\right]$ the ring of differential operators
with coefficients in $R$. 
For $P\left(x,\partial_x\right)=\sum_{j=0}^nf_j(x)\partial_x^j\in\mathcal{D}_R$, 
we use the notation $P'\left(x,\partial_x\right)=\sum_{j=0}^nf'_j(x)\partial_x^j$.
Hereafter we will omit $x$ and $\partial_x$ of $P\left(x,\partial_x\right)$ 
if it does not cause confusion.

Define $P_2\left(x,\partial_x\right)$, 
$Q_1\left(x,\partial_x\right)\in\mathcal{D}_R$ as
\begin{equation}
P_2 = \partial_x^2-2\wp(x), \quad Q_1 = \wp(x)\partial_x -\frac{1}{2}\wp'(x).
\label{P2Q1}
\end{equation}
It follows from \eqref{Phi'} and \eqref{Phi''} that
\begin{equation}
P_2\Phi(x;k) = \wp(k)\Phi(x;k), \quad 
\wp(k)\Phi'(x;k) = \left\{Q_1+\frac{1}{2}\wp'(k)\right\}\Phi(x;k).
\label{P2Phi=...,wp(k)Phi'=...}
\end{equation}
We now generalize this observation.
\begin{lemma}
\label{thm:PnPhi=alphaPhi}
For $n=2,3,\ldots$, 
the Lam\'e function \eqref{def:LameFcn} satisfies the relations
\begin{align}
P_n\Phi(x;k) &= \alpha_n(k)\Phi(x;k),\label{PnPhi=...}\\
\alpha_n(k)\Phi'(x;k) &= \left\{Q_{n-1}+\alpha_{n+1}(k)\right\}\Phi(x;k),
\label{alphanPhi'=...}
\end{align}
where $\alpha_2(k)$, $\alpha_3(k),\ldots$ are defined by 
\begin{equation}
\alpha_{2m}(k) = \wp(k)^m, \quad 
\alpha_{2m+1}(k) = \frac{1}{2}\wp(k)^{m-1}\wp'(k) \quad (m=1,2,\ldots),
\label{def:alpha}
\end{equation}
and $P_n,Q_{n-1}\in\mathcal{D}_R$ are defined by the following recursion relations
with the initial condition \eqref{P2Q1}:
\begin{equation}
P_{n+1} = P_{n}\partial_x + P'_{n}-Q_{n-1}, \quad
Q_n = 2\wp(x)P_{n}-Q_{n-1}\partial_x-Q'_{n-1}
\quad (n=2,3,\ldots).
\label{recursion:PnQn}
\end{equation}
\end{lemma}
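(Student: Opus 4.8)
The plan is to establish both \eqref{PnPhi=...} and \eqref{alphanPhi'=...} simultaneously by induction on $n$, treating them as a single coupled statement $(\mathrm{I})_n$, $(\mathrm{II})_n$. They are genuinely interlocked: the recursion \eqref{recursion:PnQn} builds $P_{n+1}$ out of $P_n$ and $Q_{n-1}$, and each of the two relations at level $n+1$ will consume \emph{both} relations at level $n$. The base case $n=2$ is precisely \eqref{P2Phi=...,wp(k)Phi'=...}, once one observes that $\alpha_2(k)=\wp(k)$ and $\alpha_3(k)=\tfrac12\wp'(k)$ match \eqref{def:alpha}.

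The one computational device I would record up front is the operator Leibniz rule for $P\in\mathcal{D}_R$ acting on a function $g$: $\partial_x(Pg)=P'g+P(g')$, equivalently $P(g')=\partial_x(Pg)-P'g$. Because $\alpha_n(k)$ carries no $x$-dependence, this is exactly what lets $P_n$ and $Q_{n-1}$ be moved past a single $\partial_x$ while keeping the eigenrelations usable.

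To obtain $(\mathrm{I})_{n+1}$, I apply $P_{n+1}=P_n\partial_x+P_n'-Q_{n-1}$ to $\Phi$. Rewriting $(P_n\partial_x)\Phi=P_n(\Phi')$, the Leibniz rule together with $(\mathrm{I})_n$ gives $P_n(\Phi')=\alpha_n\Phi'-P_n'\Phi$, so the two $P_n'\Phi$ terms cancel and $P_{n+1}\Phi=\alpha_n\Phi'-Q_{n-1}\Phi$. Feeding in $(\mathrm{II})_n$, namely $\alpha_n\Phi'=\{Q_{n-1}+\alpha_{n+1}\}\Phi$, cancels the $Q_{n-1}\Phi$ terms and leaves $P_{n+1}\Phi=\alpha_{n+1}\Phi$. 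For $(\mathrm{II})_{n+1}$, I apply $Q_n=2\wp(x)P_n-Q_{n-1}\partial_x-Q_{n-1}'$ to $\Phi$; the Leibniz rule on $Q_{n-1}(\Phi')$ again cancels the $Q_{n-1}'\Phi$ terms, giving $Q_n\Phi=2\wp(x)P_n\Phi-\partial_x(Q_{n-1}\Phi)$. Substituting $P_n\Phi=\alpha_n\Phi$ and $Q_{n-1}\Phi=\alpha_n\Phi'-\alpha_{n+1}\Phi$ (from $(\mathrm{II})_n$), differentiating, and eliminating $\Phi''$ via the Lam\'e equation \eqref{Phi''} makes the $2\wp(x)\alpha_n\Phi$ contributions cancel, yielding $\alpha_{n+1}\Phi'=\{Q_n+\wp(k)\alpha_n\}\Phi$.

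The proof then closes on the purely elementary identity $\alpha_{n+2}(k)=\wp(k)\,\alpha_n(k)$, which is immediate from \eqref{def:alpha} in both parities ($\alpha_{2m+2}=\wp(k)^{m+1}=\wp(k)\alpha_{2m}$ and $\alpha_{2m+3}=\tfrac12\wp(k)^m\wp'(k)=\wp(k)\alpha_{2m+1}$); this converts the output of the previous step into the claimed $\alpha_{n+1}\Phi'=\{Q_n+\alpha_{n+2}\}\Phi$. I do not expect a real obstacle here, since the content is a bookkeeping induction; the step demanding the most care is the consistent use of the operator Leibniz rule together with the constancy of $\alpha_n(k)$ in $x$, as it is precisely these that drive the cancellations in both halves of the inductive step.
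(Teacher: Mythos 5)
Your proof is correct and follows essentially the same route as the paper's (omitted) argument: a coupled induction in which one differentiates $P_n\Phi=\alpha_n\Phi$ and $\alpha_n\Phi'=\{Q_{n-1}+\alpha_{n+1}\}\Phi$ with respect to $x$, invokes the Lam\'e equation \eqref{Phi''}, and uses the recursion \eqref{recursion:PnQn} together with $\alpha_{n+2}(k)=\wp(k)\alpha_n(k)$. The cancellations you describe are exactly the ones that occur, so no changes are needed.
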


The assertions of Lemma \ref{thm:PnPhi=alphaPhi} can be proved by induction 
with differentiating \eqref{PnPhi=...}, \eqref{alphanPhi'=...} with respect to $x$, 
so we omit the details. 
Examples of $P_n$, $Q_{n-1}$ ($n=2,3,\ldots$) are listed in Appendix. 
We remark that $P_n$, $Q_{n-1}$ ($n=2,3,\ldots$) are independent of $k$.

\begin{lemma}
\label{lemma:Rn(x)}
Define a sequences of polynomials $\left\{R_n(x)\right\}$ ($n=0,1,\ldots$) 
by the following recursion relation: 
\begin{align}
R_{n+1}(x) &= \left(6x^2-\frac{g_2}{2}\right)R'_n(x)+\left(4x^3-g_2x-g_3\right)R''_n(x)
\label{recurrence:Rn}
\end{align}
with the initial condition $R_0(x)=x$. Then we have
\begin{equation}
\frac{d^{2n}\wp(x)}{dx^{2n}} = R_n\!\left(\wp(x)\right), \quad
\frac{d^{2n+1}\wp(x)}{dx^{2n+1}} =\wp'(x) R'_n\!\left(\wp(x)\right)
\quad (n=0,1,\ldots).
\label{formula:wp{(n)}}
\end{equation}
\end{lemma}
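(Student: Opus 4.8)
The plan is to establish both identities in \eqref{formula:wp{(n)}} by a single induction on $n$, in which the odd-order formula appears as an intermediate step of the even-order one. The guiding observation is that the recursion \eqref{recurrence:Rn} has been engineered so that the chain rule reproduces it: the coefficient $6x^2-\frac{g_2}{2}$ is precisely the right-hand side of \eqref{diffEq:WeierstrassP:2} with $\wp(x)$ replaced by the indeterminate $x$ (so that $\wp''=6\wp^2-\frac{g_2}{2}$), while $4x^3-g_2x-g_3$ is the right-hand side of \eqref{diffEq:WeierstrassP:1} (so that $(\wp')^2=4\wp^3-g_2\wp-g_3$).

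First I would dispose of the base case $n=0$. Since $R_0(x)=x$, we have $R_0(\wp(x))=\wp(x)$, which is the zeroth derivative, and $R_0'\equiv 1$ gives $\wp'(x)R_0'(\wp(x))=\wp'(x)$; thus both formulas hold for $n=0$.

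For the inductive step I would assume the even-order formula $\frac{d^{2n}\wp}{dx^{2n}}=R_n(\wp)$ and differentiate it. One application of the chain rule yields
\[
\frac{d^{2n+1}\wp}{dx^{2n+1}}=\wp'\,R_n'(\wp),
\]
which is exactly the odd-order formula at level $n$, so that identity comes for free at every stage. Differentiating once more and using the product and chain rules gives
\[
\frac{d^{2n+2}\wp}{dx^{2n+2}}=(\wp')^2R_n''(\wp)+\wp''\,R_n'(\wp).
\]
Now I substitute $(\wp')^2=4\wp^3-g_2\wp-g_3$ from \eqref{diffEq:WeierstrassP:1} and $\wp''=6\wp^2-\frac{g_2}{2}$ from \eqref{diffEq:WeierstrassP:2}. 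The right-hand side becomes $\bigl(6\wp^2-\frac{g_2}{2}\bigr)R_n'(\wp)+\bigl(4\wp^3-g_2\wp-g_3\bigr)R_n''(\wp)$, which is precisely $R_{n+1}(\wp)$ by \eqref{recurrence:Rn}. This is the even-order formula at level $n+1$ and closes the induction.

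Since the odd-order identity at each level is simply the first derivative of the even-order identity at the same level, both formulas in \eqref{formula:wp{(n)}} hold for all $n$. There is no genuine obstacle here: the only thing to verify is that the chain-rule output matches the recursion term by term, and this is automatic because the polynomial coefficients in \eqref{recurrence:Rn} were defined to be exactly the expressions of $\wp''$ and $(\wp')^2$ as polynomials in $\wp$. The only point requiring attention is purely bookkeeping, namely that it is a \emph{second} differentiation, not a single one, that converts the level-$n$ even formula into the level-$(n+1)$ even formula.
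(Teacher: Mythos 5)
Your proof is correct and is exactly the induction the paper has in mind (the paper omits the details, remarking only that the lemma is ``straightforward by induction''): differentiate the even-order identity twice, then substitute \eqref{diffEq:WeierstrassP:1} for $(\wp')^2$ and \eqref{diffEq:WeierstrassP:2} for $\wp''$ to recover the recursion \eqref{recurrence:Rn}. Nothing is missing.
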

\begin{lemma}
\label{lemma:Rn(x)=(2n-1)!x{n+1}}
The degree of $R_n(x)$ is $n+1$ and the coefficient of 
$x^{n+1}$, $x^n$ are $(2n+1)!$, 0, respectively.
\end{lemma}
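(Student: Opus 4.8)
The plan is to argue by induction on $n$, working directly from the recursion \eqref{recurrence:Rn} and tracking only the two highest-degree coefficients of each $R_n(x)$. The base case $n=0$ is immediate from $R_0(x)=x$: the degree is $1=0+1$, the coefficient of $x^1$ is $1=(2\cdot 0+1)!$, and the coefficient of $x^0$ is $0$. For the inductive step I assume that $R_n(x)=(2n+1)!\,x^{n+1}+0\cdot x^{n}+(\text{lower order terms})$ and deduce the corresponding statement for $R_{n+1}(x)$.

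The key observation is that the operator on the right-hand side of \eqref{recurrence:Rn} splits into a degree-raising principal part and a degree-lowering remainder. Writing $\mathcal{L}_0=6x^2\partial_x+4x^3\partial_x^2$, a direct computation gives
\begin{equation}
\mathcal{L}_0\!\left(x^k\right)=6k\,x^{k+1}+4k(k-1)\,x^{k+1}=2k(2k+1)\,x^{k+1},
\end{equation}
so $\mathcal{L}_0$ raises the degree of every monomial by exactly one. The remaining terms of the operator, namely $-\tfrac{g_2}{2}\partial_x-g_2 x\,\partial_x^2-g_3\partial_x^2$, each lower the degree by at least one. Consequently the two highest-degree coefficients of $R_{n+1}(x)$ are governed entirely by $\mathcal{L}_0$ acting on the two highest-degree terms of $R_n(x)$.

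From this, the leading coefficient of $R_{n+1}(x)$ is the coefficient of $x^{n+2}$ produced by $\mathcal{L}_0\!\left((2n+1)!\,x^{n+1}\right)$, which equals
\begin{equation}
2(n+1)(2n+3)\,(2n+1)! = (2n+3)! = \bigl(2(n+1)+1\bigr)!,
\end{equation}
exactly the claimed value. For the subleading coefficient, the coefficient of $x^{n+1}$ in $R_{n+1}(x)$ arises solely from $\mathcal{L}_0$ applied to the $x^{n}$ term of $R_n(x)$, since every contribution from the degree-lowering part lands in degree $\leq n$. As the $x^{n}$ coefficient of $R_n(x)$ is $0$ by the inductive hypothesis, the coefficient of $x^{n+1}$ in $R_{n+1}(x)$ vanishes, and the degree of $R_{n+1}(x)$ is $n+2=(n+1)+1$. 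This closes the induction.

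There is no genuine obstacle here; the argument is a short induction. The only point requiring care is the bookkeeping in the inductive step: one must verify that the degree-lowering terms $-\tfrac{g_2}{2}\partial_x-g_2 x\,\partial_x^2-g_3\partial_x^2$ cannot reach the top two coefficients of $R_{n+1}(x)$, so that the vanishing of the subleading coefficient propagates from $R_n$ to $R_{n+1}$ purely through the homogeneous degree-raising action of $\mathcal{L}_0$.
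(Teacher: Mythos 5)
Your argument is correct and is exactly the induction the paper has in mind: the paper omits the proof of this lemma, stating only that it is ``straightforward by induction,'' and your proposal supplies that induction correctly. The splitting of the recursion operator into the degree-raising part $6x^2\partial_x+4x^3\partial_x^2$ (sending $x^k$ to $2k(2k+1)x^{k+1}$, which gives $2(n+1)(2n+3)(2n+1)!=(2n+3)!$ for the leading coefficient) and a remainder that lowers degree by at least one is precisely the bookkeeping needed to propagate both the leading coefficient and the vanishing of the subleading one.
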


We omit the proof for 
Lemma \ref{lemma:Rn(x)} and Lemma \ref{lemma:Rn(x)=(2n-1)!x{n+1}}
since it is straightforward by induction. 
Examples of $R_n$ ($n=0,1,2,\ldots$) are also listed in Appendix. 
{}From Lemma \ref{lemma:Rn(x)=(2n-1)!x{n+1}}, one can write $R_n(x)$ as
\begin{equation}
R_n(x) = (2n+1)!\left(x^{n+1} + \sum_{j=1}^nr_{n,j}x^{n-j}\right).
\label{Rn(x)=...}
\end{equation}
Examples of the formula \eqref{Rn(x)=...} are listed in Appendix. 

\begin{thm}
Define $\tilde{P}_n\left(x,\partial_x\right)\in\mathcal{D}_R$ ($n=0,1,2,\ldots$) as 
\begin{align}
\tilde{P}_{2n+2}\left(x,\partial_x\right)
&= P_{2n+2}\left(x,\partial_x\right) 
+ \sum_{j=1}^{n-1}r_{n,j}P_{2j}\left(x,\partial_x\right)+r_{n,0},\\
\tilde{P}_{2n+3}\left(x,\partial_x\right) &=
P_{2n+3}\left(x,\partial_x\right) 
+ \sum_{j=1}^{n-1} \frac{j\,r_{n,j}}{n+1}P_{2j+1}\left(x,\partial_x\right).
\end{align}
Then these operators satisfy
\begin{equation}
\tilde{P}_{n+2}\left(x,\partial_x\right)\Phi(x;k) = \frac{\wp^{(n)}(k)}{(n+1)!}\Phi(x;k)
\quad (n=0,1,2,,\ldots).
\label{tildeP(n+1)Phi=...}
\end{equation}
\end{thm}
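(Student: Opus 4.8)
The plan is to exploit the fact that, by Lemma~\ref{thm:PnPhi=alphaPhi}, each $P_j$ acts on the Lam\'e function $\Phi(x;k)$ as multiplication by the scalar $\alpha_j(k)$. Since each $\tilde P_{n+2}$ is, by construction, a finite linear combination of the $P_j$ with constant coefficients (the $r_{n,j}$ together with the leading $1$), applying it to $\Phi(x;k)$ and using $P_j\Phi=\alpha_j(k)\Phi$ turns it into multiplication by the corresponding combination of the scalars $\alpha_j(k)$. Thus \eqref{tildeP(n+1)Phi=...} is purely an eigenvalue identity, and it suffices to check that this scalar equals $\wp^{(n)}(k)/(n+1)!$. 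I would treat the even order $\tilde P_{2n+2}$ and the odd order $\tilde P_{2n+3}$ separately, since every index $n+2\ge 2$ is of exactly one of these two forms.

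For the even case I would start from Lemma~\ref{lemma:Rn(x)}, which gives $\wp^{(2n)}(k)=R_n(\wp(k))$, and then insert the explicit expansion \eqref{Rn(x)=...}. Dividing by $(2n+1)!$ expresses $\wp^{(2n)}(k)/(2n+1)!$ as $\wp(k)^{n+1}$ plus a combination of lower powers $\wp(k)^{n-j}$ with coefficients $r_{n,j}$. Using $\alpha_{2m}(k)=\wp(k)^m$ from \eqref{def:alpha}, each power $\wp(k)^m$ is recognised as the eigenvalue $\alpha_{2m}(k)$ of $P_{2m}$ (the constant term corresponding to $P_0=\mathrm{id}$). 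Matching powers of $\wp(k)$ term by term then identifies the scalar with the eigenvalue of the operator combination defining $\tilde P_{2n+2}$, which settles the even case.

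The odd case is analogous but needs more care. Here Lemma~\ref{lemma:Rn(x)} gives $\wp^{(2n+1)}(k)=\wp'(k)\,R_n'(\wp(k))$, so I would differentiate \eqref{Rn(x)=...} first; the top term contributes $(2n+1)!\,(n+1)\,\wp(k)^n$, while a generic term $r_{n,j}x^{n-j}$ picks up a factor $(n-j)$, which automatically kills the constant term and truncates the sum. Multiplying by $\wp'(k)$ and dividing by $(2n+2)!$ produces the normalisation ratio $(2n+1)!/(2n+2)!=1/\{2(n+1)\}$. The crucial step is then to rewrite each monomial $\wp'(k)\wp(k)^{p}$ via $\alpha_{2m+1}(k)=\tfrac12\wp(k)^{m-1}\wp'(k)$ from \eqref{def:alpha}, i.e.\ $\wp'(k)\wp(k)^{p}=2\,\alpha_{2p+3}(k)$; the factor $2$ cancels the $\tfrac12$ in the normalisation, leaving the clean coefficients $\tfrac{(n-j)\,r_{n,j}}{n+1}$ in front of $\alpha_{2n-2j+1}(k)$, with the leading coefficient reducing to $1$ since $\tfrac{n+1}{2(n+1)}\cdot 2=1$. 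After the reflection $j\mapsto n-j$ of the summation index this is exactly the combination of $P_{2j+1}$ with coefficients $\tfrac{j\,r_{n,j}}{n+1}$ that defines $\tilde P_{2n+3}$.

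I expect the main obstacle to be precisely the coefficient and index bookkeeping in the odd case: one must track simultaneously the differentiation factor $(n-j)$, the factorial normalisation $1/\{2(n+1)\}$, and the factor $2$ coming from $\alpha_{2m+1}=\tfrac12\wp^{m-1}\wp'$, and then carry out the reflection of the summation index so that the result matches the stated definition of $\tilde P_{2n+3}$ verbatim. The even case, by contrast, is a direct term-by-term comparison and should be routine once the eigenvalue viewpoint is adopted.
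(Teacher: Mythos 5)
Your proposal is correct and takes essentially the same route as the paper's proof: expand $R_n$ via \eqref{Rn(x)=...} inside the formulas of Lemma \ref{lemma:Rn(x)}, substitute $x=k$, recognize the resulting powers of $\wp(k)$ (times $\wp'(k)$ in the odd case) as the eigenvalues $\alpha_j(k)$ of \eqref{def:alpha}, and conclude from $P_j\Phi=\alpha_j(k)\Phi$ in Lemma \ref{thm:PnPhi=alphaPhi}. The coefficient and index bookkeeping you single out for the odd case is precisely what the paper records in \eqref{recuesion:dn_wp}, so no further comment is needed.
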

\begin{proof}
Applying \eqref{Rn(x)=...} to \eqref{formula:wp{(n)}}, we obtain
\begin{align}
\frac{1}{(2n+1)!}\frac{d^{2n}\wp(x)}{dx^{2n}} &= \left\{
\wp(x)^{n+1} + \sum_{j=0}^{n-1}r_{n,j}\wp(x)^{j}\right\},\\
\frac{1}{(2n+1)!}\frac{d^{2n+1}\wp(x)}{dx^{2n+1}} &=\wp'(x) \left\{
(n+1)\,\wp(x)^{n} + \sum_{j=1}^{n-1} j\,r_{n,j}\wp(x)^{j-1}
\right\}
\end{align}
for $n=0,1,2,\ldots$
Substituting $x=k$ and applying \eqref{def:alpha}, we have
\begin{equation}
\begin{aligned}
\frac{\wp^{(2n)}(k)}{(2n+1)!}&= \alpha_{2n+2}(k) 
+ \sum_{j=1}^{n-1}r_{n,j}\alpha_{2j}(k)+r_{n,0},\\
\frac{\wp^{(2n+1)}(k)}{(2n+2)!} &= \alpha_{2n+3}(k) 
+ \sum_{j=1}^{n-1} \frac{j\,r_{n,j}}{n+1}\alpha_{2j+1}(k).
\end{aligned}
\label{recuesion:dn_wp}
\end{equation}
The relation \eqref{tildeP(n+1)Phi=...} is a direct consequence of 
\eqref{PnPhi=...} and \eqref{recuesion:dn_wp}.
\end{proof}

We recall the Laurent expansion of $\wp(x)$ and $\zeta(x)$ around $x=0$:
\begin{equation}
\wp(x)= \frac{1}{x^2} + \sum_{n=0}^\infty a_{n}x^{n}, \quad
\zeta(x) = \frac{1}{x} -\sum_{n=1}^\infty \frac{a_{n}}{n+1}x^{n+1}, 
\end{equation}
where the coefficients $a_0$, $a_1$, $a_2,\ldots$ are determined by 
the following relations:
\begin{equation}
\begin{aligned}
a_0 &= 0, \quad a_1= a_3 = a_5 = \cdots = a_{2n-1} = \cdots = 0,\\
a_2 &= \frac{g_2}{20}, \quad 
a_4 = \frac{g_3}{28}, \quad 
(n-1) (2 n+5)a_{2n+2} = 3\sum_{j=1}^{n-1} a_{2(n-j)}a_{2j}\quad (n=2,3,\ldots).
\end{aligned}
\label{recurrence:TaylorExp:wp}
\end{equation}
We remark that the recurrence relation in \eqref{recurrence:TaylorExp:wp} is 
a direct consequence of \eqref{diffEq:WeierstrassP:2}. 
The Laurent expansion of 
$\wp^{(n)}(x)=\frac{d^n\wp(x)}{dx^n}$ ($n=0,1,2,\ldots$) around $x=0$ 
is of the form 
\begin{equation}
\label{dnwp(x)/dxn=...}
\frac{\wp^{(n)}(x)}{(n+1)!}= \frac{(-1)^n}{x^{n+2}} + \frac{a_n}{n+1}
+\sum_{m=1}^\infty\frac{(m+n)!}{m!(n+1)!}a_{m+n}x^m.
\end{equation}
If we set $k=-\lambda^{-1}$ and 
apply \eqref{dnwp(x)/dxn=...} to the right-hand-side of \eqref{tildeP(n+1)Phi=...}, 
we obtain
\begin{align}
\tilde{P}_{n+2}\left(x,\partial_x\right)\Phi\left(x;-\lambda^{-1}\right) = 
\left\{\lambda^{n+2} + \frac{a_n}{n+1}
+\sum_{m=1}^\infty\frac{(-1)^m(m+n)!}{m!(n+1)!}a_{m+n}\lambda^{-m}
\right\}\Phi\left(x;-\lambda^{-1}\right)
\label{tildeP(n+1)Phi=...withLaurentExpansion:lambda}
\end{align}
for $n=0,1,2,\ldots$. 

Now we will rewrite \eqref{tildeP(n+1)Phi=...withLaurentExpansion:lambda} as a
relation between pseudo-differential operators. Toward this aim, 
we firstly rewrite $\Phi(x;k)$ as follows:
\begin{align}
\Phi\left(x;-\lambda^{-1}\right) &= \hat{v}_0(x;\lambda)\exp\left(\lambda x\right),
\label{Phi(x;-lambda(-1))=v0exp(-x/k)}\\
\hat{v}_0(x;\lambda) &=
\frac{\sigma(x-\lambda^{-1})}{\sigma(x)}\exp\left[\left(\zeta(\lambda^{-1})-\lambda\right)x\right].
\end{align}
As a function of $\lambda$, $\hat{v}_0(x;\lambda)$ is regular at $\lambda=\infty$ 
(i.e. $k=0$) and satisfies 
$\lim_{\lambda\to\infty}\hat{v}_0(x;\lambda)=1$. The Laurent expansion of $\hat{v}_0(x;\lambda)$ 
around $\lambda=\infty$ is of the form
\begin{equation}
\hat{v}_0(x;\lambda) = 1 + \sum_{n=1}^\infty v_{0,n}(x)\lambda^{-n}, 
\label{expansion:hatv0:lambda}
\end{equation}
with 
\begin{equation}
\begin{aligned}
v_{0,1}(x) &= -\zeta(x), \quad v_{0,2}(x) = \frac{1}{2}\left\{\zeta(x)^2-\wp(x)\right\}, \\
v_{0,3}(x) &= \frac{1}{2}\wp(x)\zeta(x) + \frac{1}{6}\left\{\wp'(x)-\zeta(x)^3\right\}
-\frac{g_2}{60}x, \quad \ldots. 
\end{aligned}
\label{v01,v02,v03=...}
\end{equation}
We introduce a pseudo-differential operator $V_0(x,\partial_x)$ associated with 
\eqref{expansion:hatv0:lambda}: 
\begin{equation}
V_0(x,\partial_x) = 1 + \sum_{n=1}^\infty v_{0,n}(x)\partial_x^{-n}.
\end{equation}
Using $V_0(x,\partial_x)$, one can rewrite \eqref{Phi(x;-lambda(-1))=v0exp(-x/k)} as 
\begin{equation}
\Phi\left(x;-\lambda^{-1}\right) = V_0(x,\partial_x) e^{\lambda x}. 
\label{Phi=V0elambdax}
\end{equation}
We introduce another pseudo-differential operator $U_{n+2}(\partial_x)$ 
($n=0,1,2,\ldots$) as 
\begin{equation}
U_{n+2}(\partial_x) = 
\partial_x^{n+2} + \frac{a_n}{n+1}
+\sum_{m=1}^\infty\frac{(-1)^m(m+n)!}{m!(n+1)!}a_{m+n}\partial_x^{-m}. 
\label{def:U(n+2)=...}
\end{equation}
Since the coefficients of $U_{n+2}(\partial_x)$ are constants, 
it commutes with $\partial_x$. 
Thus it follows from 
\eqref{tildeP(n+1)Phi=...withLaurentExpansion:lambda} 
and \eqref{Phi=V0elambdax} that
\begin{equation}
\tilde{P}_{n+2}\left(x,\partial_x\right)V_0\left(x,\partial_x\right)
=V_0\left(x,\partial_x\right)U_{n+2}(\partial_x).
\end{equation}
\begin{thm}
\label{thm:L0_is_a_stationary_sol}
Define $L_0\in\mathcal{E}^{(1)}$ as 
\begin{equation}
L_0 = V_0\left(x,\partial_x\right)\partial_xV_0\left(x,\partial_x\right)^{-1}.
\end{equation}
Then $L_0$ is a stationary solution of the KP hierarchy, i.e., it satisfies
\begin{equation}
\left[\left(L_0^{n+2}\right)_{\geq 0},\,L_0\right] = 0 \quad (n=0,1,2,\ldots).
\end{equation}
\end{thm}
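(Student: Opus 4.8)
The plan is to exploit the fact that $L_0$ and each differential operator $\tilde{P}_{n+2}$ are conjugated to constant-coefficient operators by the \emph{same} dressing operator $V_0$, whence they commute; the remaining work is to identify $(L_0^{n+2})_{\ge 0}$ with $\tilde{P}_{n+2}$ up to an additive scalar.

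First I would record the spectral picture. From \eqref{Phi=V0elambdax} one computes
\begin{equation}
L_0\,\Phi(x;-\lambda^{-1}) = V_0\partial_x V_0^{-1}V_0\,e^{\lambda x}=\lambda\,\Phi(x;-\lambda^{-1}),
\end{equation}
so $\Phi(x;-\lambda^{-1})$ is the Baker--Akhiezer eigenfunction of $L_0$ with eigenvalue $\lambda$, in agreement with \eqref{tildeP(n+1)Phi=...}. The key algebraic input is the identity $\tilde{P}_{n+2}V_0=V_0 U_{n+2}$ proved immediately above the theorem, which I rewrite as $\tilde{P}_{n+2}=V_0 U_{n+2}V_0^{-1}$. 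Since $U_{n+2}(\partial_x)$ has constant coefficients it commutes with $\partial_x$, and therefore
\begin{equation}
\bigl[L_0,\tilde{P}_{n+2}\bigr]=V_0\bigl[\partial_x,U_{n+2}\bigr]V_0^{-1}=0,
\end{equation}
so $L_0$ commutes with every $\tilde{P}_{n+2}$.

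It remains to relate $(L_0^{n+2})_{\ge 0}$ to $\tilde{P}_{n+2}$. Because $L_0=V_0\partial_x V_0^{-1}$ we have $L_0^{n+2}=V_0\partial_x^{n+2}V_0^{-1}$, while by \eqref{def:U(n+2)=...} we may write $U_{n+2}=\partial_x^{n+2}+C_{n+2}$ with $C_{n+2}=\frac{a_n}{n+1}+\sum_{m\ge 1}(\text{const})\,\partial_x^{-m}\in\mathcal{E}^{(0)}$. Hence
\begin{equation}
\tilde{P}_{n+2}=V_0 U_{n+2}V_0^{-1}=L_0^{n+2}+V_0 C_{n+2}V_0^{-1}.
\end{equation}
As $V_0$ is monic of order $0$ and the order-$0$ coefficient of $C_{n+2}$ is the constant $\frac{a_n}{n+1}$, the operator $V_0C_{n+2}V_0^{-1}$ lies in $\mathcal{E}^{(0)}$ and its order-$0$ part is again $\frac{a_n}{n+1}$, all other terms being of order $\le -1$. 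Since $\tilde{P}_{n+2}\in\mathcal{D}_R$ is purely differential, applying the $(\,\cdot\,)_{\ge 0}$ projection gives
\begin{equation}
(L_0^{n+2})_{\ge 0}=\tilde{P}_{n+2}-\frac{a_n}{n+1}.
\end{equation}
Because $\frac{a_n}{n+1}$ is a scalar it commutes with $L_0$, and combining with the commutation established above yields $[(L_0^{n+2})_{\ge 0},L_0]=[\tilde{P}_{n+2},L_0]=0$, which is the assertion.

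The only delicate point I anticipate is the order bookkeeping in the last paragraph: one must check, using \eqref{lemma6.1:(a),(b)} together with $V_0,V_0^{-1}=1+O(\partial_x^{-1})$, that conjugating $C_{n+2}$ by the monic $V_0$ produces no term of positive order and alters the order-$0$ coefficient only by contributions of strictly negative order, so that $(V_0 C_{n+2}V_0^{-1})_{\ge 0}$ collapses to the single constant $\frac{a_n}{n+1}$. Everything else is formal conjugation and projection.
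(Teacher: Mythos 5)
Your proof is correct and takes essentially the same route as the paper's: both identify $\left(L_0^{n+2}\right)_{\geq 0}$ with $\tilde{P}_{n+2}-\frac{a_n}{n+1}$ via the intertwining relation $\tilde{P}_{n+2}V_0=V_0U_{n+2}$ and then conclude commutativity from the fact that the constant-coefficient operator $U_{n+2}$ commutes with $\partial_x$. Your explicit order bookkeeping for the projection step merely spells out what the paper compresses into a single displayed computation.
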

\begin{proof}
{}From \eqref{def:U(n+2)=...}, we have
\begin{equation}
\begin{aligned}
\left(L_0^{n+2}\right)_{\geq 0} &= 
\left(V_0\partial_x^{n+2}V_0^{-1}\right)_{\geq 0}
=\left(V_0\left(U_{n+2}-\frac{a_n}{n+1}\right)V_0^{-1}\right)_{\geq 0}\\
&=\tilde{P}_{n+2}-\frac{a_n}{n+1}
=V_0U_{n+2}V_0^{-1}-\frac{a_n}{n+1}, 
\end{aligned}
\label{(L0(n+2))>0=tildeP(n+2)-an/(n+1)}
\end{equation}
and thus $\left(L_0^{n+2}\right)_{\geq 0}$ commutes with 
$L_0=V_0\partial_xV_0^{-1}$.
\end{proof}
We can calculate the explicit forms for the first few terms of $L_0$ and $L_0^2$ 
using \eqref{v01,v02,v03=...}:
\begin{align}
L_0 &= \partial_x -\wp(x)\partial_x^{-1}+\frac{1}{2}\wp'(x)\partial_x^{-2}
+\left(\frac{g_2}{10}-2\wp(x)^2\right)\partial_x^{-3}
+\cdots.
\label{L0=dx2+...}\\
L_0^2 &= \partial_x^2 -2\wp(x) -\frac{g_2}{20}\partial_x^{-2}
-6\wp(x)\wp'(x)\partial_x^{-3}+\cdots.
\end{align}
It is obvious that $\left(L_0^2\right)_{<0}\neq 0$ that means 
$L_0$ is not 2-reduced. However, as can be seen from Theorem 
\ref{thm:L0_is_a_stationary_sol}, $L_0$ satisfies the 
differential equations for the 2-reduced (KdV) hierarchy.

\section{Wave function and $\tau$-function associated with the stationary solution}
To construct non-trivial eigenfunctions of $L_0$, 
we introduce infinite time variables $\bm{t}=(t_2,t_3,\ldots)$ and  
define $\Psi(x,\bm{t};k)$ as 
\begin{equation}
\Psi(x,\bm{t};k) = \Phi(x;k)\exp\!\left[\sum_{j=0}^\infty
\left\{\frac{\wp^{(j)}(k)}{(j+1)!}-
\frac{a_j}{j+1}
\right\}t_{j+2}\right]. 
\label{def:PWF:Psi}
\end{equation}
\begin{lemma}
$\Psi(x,\bm{t};k)$ is an eigenfunction of $L_0$.
\end{lemma}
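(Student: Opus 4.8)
The plan is to verify directly the defining property of an eigenfunction of $L_0$, namely (cf.\ \eqref{def:eigenfunction:W}) that $\partial\Psi/\partial t_n = \left(L_0^n\right)_{\geq 0}\Psi$ for every $n=2,3,\ldots$. The decisive structural observation is that $\Psi(x,\bm{t};k)$ factors as $\Phi(x;k)$ times the exponential $\exp[\Theta(\bm{t};k)]$, where $\Theta(\bm{t};k)=\sum_{j=0}^\infty\{\wp^{(j)}(k)/(j+1)!-a_j/(j+1)\}t_{j+2}$ depends only on the times $\bm{t}$ and not on $x$. Consequently the left-hand side is immediate: since $t_n$ corresponds to the summand $j=n-2$, differentiating pulls down a scalar,
\begin{equation}
\frac{\partial\Psi}{\partial t_n}=\left\{\frac{\wp^{(n-2)}(k)}{(n-1)!}-\frac{a_{n-2}}{n-1}\right\}\Psi.
\end{equation}

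Next I would evaluate the right-hand side. Because $B_n=(L_0^n)_{\geq 0}$ is a differential operator in $x$ alone, it commutes with the $t$-exponential factor, so that $B_n\Psi=\exp[\Theta]\,B_n\Phi(x;k)$. Writing $n=m+2$ with $m\geq 0$, the operator identity \eqref{(L0(n+2))>0=tildeP(n+2)-an/(n+1)} obtained in the proof of Theorem \ref{thm:L0_is_a_stationary_sol} gives $(L_0^{m+2})_{\geq 0}=\tilde{P}_{m+2}-a_m/(m+1)$, and the eigenrelation \eqref{tildeP(n+1)Phi=...} converts the action of $\tilde{P}_{m+2}$ on $\Phi$ into multiplication by the scalar $\wp^{(m)}(k)/(m+1)!$. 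Assembling these pieces with $m=n-2$ produces exactly
\begin{equation}
\left(L_0^n\right)_{\geq 0}\Psi=\left\{\frac{\wp^{(n-2)}(k)}{(n-1)!}-\frac{a_{n-2}}{n-1}\right\}\Psi,
\end{equation}
which coincides with the time derivative computed above; this completes the verification.

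I do not expect a genuine analytic obstacle here: the exponential prefactor in \eqref{def:PWF:Psi} was engineered precisely so that its logarithmic $t_n$-derivative equals the scalar eigenvalue of $B_n$ acting on $\Phi(x;k)$, so the two sides agree term by term by design. The only care required is bookkeeping --- matching the running index $j=n-2=m$ consistently across \eqref{def:PWF:Psi}, \eqref{(L0(n+2))>0=tildeP(n+2)-an/(n+1)}, and \eqref{tildeP(n+1)Phi=...}, and confirming that the correction term $-a_m/(m+1)$ enters with the same sign on both sides so that it cancels genuinely rather than spuriously. One should also check the lowest case $n=2$ (so $m=0$, using $a_0=0$) separately to be sure the empty-sum edge of $\tilde{P}_2$ behaves as the general formula predicts.
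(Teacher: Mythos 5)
Your argument is correct and is essentially identical to the paper's own proof: both compute $\partial\Psi/\partial t_{n+2}$ from the exponential prefactor in \eqref{def:PWF:Psi}, then identify the resulting scalar with the action of $\left(L_0^{n+2}\right)_{\geq 0}=\tilde{P}_{n+2}-a_n/(n+1)$ on $\Psi$ via \eqref{tildeP(n+1)Phi=...} and \eqref{(L0(n+2))>0=tildeP(n+2)-an/(n+1)}. The only difference is cosmetic indexing ($n$ versus $n+2$), and your extra cautions about sign matching and the $n=2$ edge case are sound but not needed beyond what the paper records.
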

\begin{proof}
{}From the definition \eqref{def:PWF:Psi} and the relations 
\eqref{tildeP(n+1)Phi=...}, \eqref{(L0(n+2))>0=tildeP(n+2)-an/(n+1)}, 
we have 
\begin{align}
\frac{\partial\Psi(x,\bm{t};k)}{\partial t_{n+2}}
&=\left\{\frac{\wp^{(n)}(k)}{(n+1)!}-
\frac{a_{n}}{n+1}\right\}\Psi(x,\bm{t};k)
= \left\{\tilde{P}_{n+2}\left(x,\partial_x\right)-
\frac{a_{n}}{n+1}\right\}\Psi(x,\bm{t};k)
\nonumber\\
&=\left(L_0^{n+2}\right)_{\geq 0}\Psi(x,\bm{t};k)
\end{align}
for $n=0,1,2,\ldots$, and the assertion follows.
\end{proof}

\begin{thm}
$\Psi(x,\bm{t};k)$ solves the auxiliary linear problem 
\eqref{Lpsi=lambdapsi} and \eqref{dnpsi=Bnpsi} associated with $L_0$, 
where the auxiliary eigenvalue $\lambda$ is given by $\lambda=-k^{-1}$.
\end{thm}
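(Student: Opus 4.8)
The auxiliary linear problem associated with $L_0$ consists of the two relations \eqref{Lpsi=lambdapsi} and \eqref{dnpsi=Bnpsi}. The time-evolution equations \eqref{dnpsi=Bnpsi}, namely $\partial\Psi/\partial t_n = B_n\Psi$ with $B_n=\left(L_0^n\right)_{\geq 0}$, are exactly the eigenfunction condition \eqref{def:eigenfunction:W} that the preceding lemma has already verified for $n=2,3,\ldots$, while the $n=1$ case reduces to the identity $\partial_x\Psi=\partial_x\Psi$ because $B_1=\left(L_0\right)_{\geq 0}=\partial_x$ and $t_1=x$. Hence the only assertion that remains to be established is the spectral equation \eqref{Lpsi=lambdapsi}, i.e. $L_0\Psi=\lambda\Psi$ with $\lambda=-k^{-1}$.

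The plan is to separate the $x$-dependence from the $\bm{t}$-dependence of $\Psi$. From the definition \eqref{def:PWF:Psi} I would write $\Psi(x,\bm{t};k)=\Phi(x;k)\,E(\bm{t};k)$, where the factor $E(\bm{t};k)=\exp\!\left[\sum_{j=0}^\infty\left\{\wp^{(j)}(k)/(j+1)!-a_j/(j+1)\right\}t_{j+2}\right]$ is independent of $x$. Setting $k=-\lambda^{-1}$ and invoking \eqref{Phi=V0elambdax} then replaces the Lam\'e factor by $\Phi(x;k)=V_0(x,\partial_x)\,e^{\lambda x}$, so that $\Psi=V_0\,e^{\lambda x}\cdot E$. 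Next I would compute $L_0\Psi$ straight from $L_0=V_0\partial_xV_0^{-1}$: since composition in $\mathcal{E}$ is associative and $V_0^{-1}V_0=1$, one has the operator identity $L_0V_0=V_0\partial_x$, and applying it to $e^{\lambda x}$ inside the $\mathcal{E}$-module $M$ gives
\[
L_0\Phi = L_0\bigl(V_0\,e^{\lambda x}\bigr)=V_0\bigl(\partial_x e^{\lambda x}\bigr)=\lambda\,V_0\,e^{\lambda x}=\lambda\,\Phi,
\]
because $\partial_x$ acts on $e^{\lambda x}$ by multiplication by $\lambda$ according to \eqref{action_of_partialx_on_M}. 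Finally, since $E(\bm{t};k)$ carries no $x$-dependence, multiplication by $E$ commutes with $L_0$ (indeed $\partial_x E=0$ forces $\partial_x^{-1}(Ef)=E\,\partial_x^{-1}f$, and thus every term of $L_0$ passes through $E$), whence $L_0\Psi=E\cdot L_0\Phi=\lambda E\Phi=\lambda\Psi$, which is precisely \eqref{Lpsi=lambdapsi} with $\lambda=-k^{-1}$.

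The computation itself is short, so the only care needed lies in the bookkeeping of the pseudo-differential action. I would make explicit that $\Phi$ and $\Psi$ genuinely belong to the module $M$, so that the associativity step $(L_0V_0)\,e^{\lambda x}=L_0\bigl(V_0\,e^{\lambda x}\bigr)$ and the cancellation $V_0^{-1}V_0=1$ are legitimate there, and that the $x$-independent factor $E$ may indeed be pulled through $L_0$. Both points follow at once from the module structure recorded in \eqref{action_of_partialx_on_M} together with the representation \eqref{Phi=V0elambdax}, so I do not anticipate any genuine obstacle beyond this verification.
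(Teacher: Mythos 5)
Your argument is correct and follows essentially the same route as the paper: both rest on the representation $\Phi\left(x;-\lambda^{-1}\right)=V_0\,e^{\lambda x}$ from \eqref{Phi=V0elambdax} together with the operator identity $L_0V_0=V_0\partial_x$ and the fact that $\partial_x$ acts on the exponential by multiplication by $\lambda$. The only (harmless) difference is that you treat the time-dependent exponential factor as a scalar commuting with $L_0$, whereas the paper converts it into the constant-coefficient pseudo-differential operator $C_n(\bm{t};\partial_x)$ commuting with $\partial_x$ --- a reformulation it needs anyway in order to define the wave operator $V=V_0C_n$ immediately after the theorem.
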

\begin{proof}
We have already seen $\Psi(x,\bm{t};k)$ solves \eqref{dnpsi=Bnpsi}. 
To consider \eqref{Lpsi=lambdapsi}, we substitute $k=-\lambda^{-1}$ to \eqref{def:PWF:Psi} 
and apply \eqref{dnwp(x)/dxn=...}, \eqref{Phi(x;-lambda(-1))=v0exp(-x/k)}:
\begin{align}
\Psi\left(x,\bm{t};-\lambda^{-1}\right) &= \Phi\left(x;-\lambda^{-1}\right)
\exp\!\left[\sum_{j=0}^\infty
\left\{\frac{\wp^{(j)}(-\lambda^{-1})}{(j+1)!}-
\frac{a_j}{j+1}
\right\}t_{j+2}\right]
\nonumber\\
&=\hat{v}_0(x;\lambda)e^{\lambda x}
\exp\!\left[\sum_{j=0}^\infty\left\{\lambda^{j+2}
+\sum_{m=1}^\infty\frac{(-1)^j(m+j)!}{m!(j+1)!}a_{m+j}\lambda^{-m}\right\}t_{j+2}\right]
\nonumber\\
&=\hat{v}_0(x;\lambda)
\exp\!\left[\sum_{j=0}^\infty\omega_{j+2}(\lambda)t_{j+2}\right]
e^{\xi(x,\bm{t};\lambda)},
\end{align}
where we have defined $\omega_{j+2}(\lambda)$ by 
\begin{equation}
\omega_{j+2}(\lambda)=\sum_{m=1}^\infty\frac{(-1)^j(m+j)!}{m!(j+1)!}a_{m+j}\lambda^{-m}
\quad (j=0,1,2,\ldots).
\end{equation}

Since $\omega_n(\lambda)$ is regular at $\lambda=\infty$ 
and $\omega_n(\lambda=\infty)=0$, 
we have the following form of expansion:
\begin{equation}
\exp\!\left[\sum_{j=0}^\infty \omega_{j+2}(\lambda)t_{j+2}\right]
= 1 + \frac{c_1(\bm{t})}{\lambda} + \frac{c_2(\bm{t})}{\lambda^2} + \cdots.
\label{exp[sumomega(j+2)t(j+2)=...]}
\end{equation}
We introduce the pseudo-differential operator $C_n(\bm{t};\partial_x)$ as 
\begin{equation}
C_n(\bm{t};\partial_x) = 1+c_1(\bm{t})\partial_x^{-1}+c_2(\bm{t})\partial_x^{-2}+\cdots,
\end{equation}
which corresponds to \eqref{exp[sumomega(j+2)t(j+2)=...]}.
Using this operator, we can rewrite 
$\Psi(x,\bm{t};-\lambda^{-1})$ as 
\begin{equation}
\Psi(x,\bm{t};-\lambda^{-1}) = V_0(x;\partial_x)C_n(\bm{t};\partial_x)e^{\xi(x,\bm{t};\lambda)},
\label{Psi=V0(partialx)Cn(partialx)exp(xi)}
\end{equation}
because the coefficients $c_1(\bm{t})$, $c_2(\bm{t}),\ldots$ do not depend on $x$.
Thus we have
\begin{align}
L_0\Psi\left(x,\bm{t};-\lambda^{-1}\right) &= V_0\partial_x C_n e^{\xi}
= V_0C_n \partial_x e^{\xi}= \lambda V_0C_n e^{\xi} 
= \lambda\Psi\left(x,\bm{t};-\lambda^{-1}\right), 
\end{align}
which indicates \eqref{Lpsi=lambdapsi}.
\end{proof}

We now define $V(x,\bm{t};\partial_x)$ as 
\begin{equation}
V(x,\bm{t};\partial_x) = V_0(x;\partial_x)C_n(\bm{t};\partial_x), 
\label{V=V0Cn}
\end{equation}
which is the wave operator associated with the wave function 
$\Psi\left(x,\bm{t};-\lambda^{-1}\right)$. 
It follows that 
\begin{equation}
L_0 =V_0(x;\partial_x)\partial_xV_0(x;\partial_x)^{-1}
=V(x,\bm{t};\partial_x)\partial_xV(x,\bm{t};\partial_x)^{-1}, 
\label{VpartialxVinv=V0partialxV0inv}
\end{equation}
because $C_n(\bm{t};\partial_x)$ commutes with $\partial_x$. 
This means that both of $V(x,\bm{t};\partial_x)$ and $V_0(x;\partial_x)$ 
are wave operators that correspond to $L_0$.

Next we construct the $\tau$-function $\tau_V$ that corresponds to $V(x,\bm{t};\partial_x)$, 
i.e., 
\begin{align}
\frac{\tau_V\left(x-\lambda^{-1},\bm{t}-\left[\lambda^{-1}\right]\right)}{\tau_V(x,\bm{t})}
&=\hat{v}_0(x;\lambda)
\exp\!\left[\sum_{j=0}^\infty\omega_{j+2}(\lambda)t_{j+2}\right]
\nonumber\\
&=\frac{\sigma(x-\lambda^{-1})}{\sigma(x)}\exp\left[\omega_1(\lambda)x
+\sum_{j=0}^\infty\omega_{j+2}(\lambda)t_{j+2}\right],
\label{tauFcn_stationarySol}
\end{align}
where we have defined $\omega_1(\lambda)$ by
\begin{equation}
\omega_1(\lambda)=\zeta(\lambda^{-1})-\lambda
=\sum_{m=2}^\infty \frac{(-1)^ma_{m-1}}{m}\lambda^{-m}.
\end{equation}
To obtain explicit expression of $\tau_V$, 
we prepare a lemma. Here we use the convention $x=t_1$. 
\begin{lemma}
\label{lemma:Q(t)}
There exists a quadratic function $Q(\bm{t})$ of infinite variables of the form 
\begin{equation}
Q(x,\bm{t}) 
=\frac{1}{2}\sum_{i,j=1}^\infty q_{ij}t_it_j+ \sum_{i=2}^\infty s_i t_i, 
\quad x=t_1, \quad q_{ji}=q_{ij},
\end{equation}
which satisfies the following relation:
\begin{equation}
-Q\left(\bm{t}-\left[\lambda^{-1}\right]\right)+Q(x,\bm{t})
=\sum_{n=1}^\infty \omega_n(\lambda)t_n.
\label{-Q+Q=sum_ometantn}
\end{equation}
\end{lemma}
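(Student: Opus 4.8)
The plan is to reduce the functional equation \eqref{-Q+Q=sum_ometantn} to an identity between formal power series in $\lambda^{-1}$ whose coefficients are linear forms in $\bm{t}$, and to read off $q_{ij}$ and $s_i$ by comparison. First I would set $\delta_n:=1/(n\lambda^n)$, so that the shift $\bm{t}\mapsto\bm{t}-[\lambda^{-1}]$ is simply $t_n\mapsto t_n-\delta_n$ for all $n\geq 1$ (including $t_1=x\mapsto x-\lambda^{-1}$). Substituting the ansatz for $Q$ and expanding (the two cross-terms combine by $q_{ij}=q_{ji}$), one obtains
\begin{equation}
-Q\left(\bm{t}-[\lambda^{-1}]\right)+Q(x,\bm{t})
=\sum_{i\geq 1}\left(\sum_{j\geq 1}q_{ij}\delta_j\right)t_i
+\left(\sum_{i\geq 2}s_i\delta_i-\frac12\sum_{i,j\geq 1}q_{ij}\delta_i\delta_j\right).
\end{equation}
Since the right-hand side of \eqref{-Q+Q=sum_ometantn} is linear in $\bm{t}$ with no constant term, the identity splits into the coefficient of each $t_i$ and a $\bm{t}$-independent part that must vanish.

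Next I would match coefficients. Writing $\omega_i(\lambda)=\sum_{j\geq 1}\omega_{i,j}\lambda^{-j}$ and using $\delta_j=\lambda^{-j}/j$, the coefficient of $t_i$ forces $\sum_{j}q_{ij}\lambda^{-j}/j=\omega_i(\lambda)$, i.e.\ $q_{ij}=j\,\omega_{i,j}$ for $i,j\geq 1$. Here $\omega_{i,j}$ is read off from the definitions of $\omega_1$ and $\omega_{j+2}$: explicitly $\omega_{i,j}=(-1)^i(i+j-2)!\,a_{i+j-2}/(j!\,(i-1)!)$ for $i\geq 2$, while $\omega_{1,j}=(-1)^j a_{j-1}/j$ with $\omega_{1,1}=0$.

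The hard part is to check that these $q_{ij}$ actually form a \emph{symmetric} array, as the ansatz for $Q$ demands; equivalently $j\,\omega_{i,j}=i\,\omega_{j,i}$. For $i,j\geq 2$ a direct computation gives $j\,\omega_{i,j}=(-1)^i(i+j-2)!\,a_{i+j-2}/((i-1)!(j-1)!)$ and $i\,\omega_{j,i}=(-1)^j(i+j-2)!\,a_{i+j-2}/((i-1)!(j-1)!)$, so the two quantities differ only by the sign $(-1)^i$ versus $(-1)^j$. This is exactly where the elliptic structure enters: by \eqref{recurrence:TaylorExp:wp} every odd-index coefficient of $\wp$ vanishes, $a_{2\ell-1}=0$, so $a_{i+j-2}=0$ whenever $i+j$ is odd, whereas for $i+j$ even one has $(-1)^i=(-1)^j$; in both cases $j\,\omega_{i,j}=i\,\omega_{j,i}$. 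The boundary cases $i=1$ or $j=1$ follow directly from $\omega_1(\lambda)=\zeta(\lambda^{-1})-\lambda$, which gives $q_{1j}=q_{j1}=(-1)^j a_{j-1}$ (and $q_{11}=0$). Thus $q_{ij}:=j\,\omega_{i,j}$ defines an admissible symmetric quadratic part.

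Finally, with $q_{ij}$ fixed, I would determine $s_i$ by forcing the $\bm{t}$-independent part to vanish: comparing the coefficient of $\lambda^{-n}$ in $\sum_{i\geq 2}s_i\delta_i=\tfrac12\sum_{i,j\geq 1}q_{ij}\delta_i\delta_j$ yields
\begin{equation}
s_n=\frac{n}{2}\sum_{\substack{i+j=n\\ i,j\geq 1}}\frac{q_{ij}}{ij}\qquad(n\geq 2),
\end{equation}
a finite sum and hence well defined; the $\lambda^{-1}$-order is automatically satisfied since no $s_1$ appears and the double sum begins at $\lambda^{-2}$. Because each coefficient of $\lambda^{-m}$ involves only finitely many $q_{ij}$ and $s_i$, all series are genuine formal power series, and the resulting $Q(x,\bm{t})$ satisfies \eqref{-Q+Q=sum_ometantn}. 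The only genuinely nontrivial step is the symmetry check, which rests entirely on the parity vanishing of the Laurent coefficients $a_n$ of $\wp$.
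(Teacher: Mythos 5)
Your proposal is correct and follows essentially the same route as the paper: read off the coefficients $\beta^{(m)}_n$ of $\lambda^{-m}$ in $\omega_n(\lambda)$, verify the symmetry $m\beta^{(m)}_n=n\beta^{(n)}_m$ (which the paper dismisses as a ``straightforward calculation'' and you correctly trace to the vanishing of the odd-index Laurent coefficients $a_{2\ell-1}$ of $\wp$), set $q_{mn}=m\beta^{(m)}_n$, and then solve the $\bm{t}$-independent part for $s_n=\frac{n}{2}\sum_{j=1}^{n-1}\frac{q_{n-j,j}}{j(n-j)}$. Your version simply makes explicit the coefficient-matching and the parity argument that the paper leaves implicit.
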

\begin{proof}
Denote by $\beta^{(m)}_n$ the coefficient of $\lambda^{-m}$ in $\omega_n(\lambda)$, i.e., 
\begin{equation}
\beta^{(m)}_n = 
\begin{cases}
\frac{(-1)^ma_{m-1}}{m} & (n=1), \\[3mm]
\frac{(-1)^m(m+n-2)! a_{m+n-2}}{m!(n-1)!} & (n=2,3,\ldots).
\end{cases}
\end{equation}
Straightforward calculation shows $m\beta_n^{(m)}=n\beta_m^{(n)}$ for 
$m,n=1,2,3,\ldots$. Thus we can set
\begin{equation}
q_{mn} = m\beta_n^{(m)}=n\beta_m^{(n)} = 
\begin{cases}
-a_{n-1} & (m=1,\,n=2,3,\ldots),\\
\frac{(-1)^m(m+n-2)! a_{m+n-2}}{(m-1)!(n-1)!} & (m,n=2,3,\ldots).
\end{cases}
\label{qmn}
\end{equation}
Under the conditions \eqref{qmn} and
\begin{equation}
s_n = \frac{n}{2}\sum_{j=1}^{n-1}
\frac{q_{n-j,j}}{j(n-j)} \quad (n=2,3,\ldots), 
\label{sn}
\end{equation}
we see that \eqref{-Q+Q=sum_ometantn} holds. 
\end{proof}
Lemma \eqref{lemma:Q(t)} implies the followin theorem. 
\begin{thm}
The $\tau$-function $\tau_V(x,\bm{t})$ of \eqref{tauFcn_stationarySol} is 
given by
\begin{equation}
\tau_V(x,\bm{t}) = e^{-Q\left(x,\bm{t}\right)}\sigma(x). 
\label{def:tauV}
\end{equation}
\end{thm}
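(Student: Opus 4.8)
The plan is to verify directly that the proposed formula $\tau_V = e^{-Q(x,\bm{t})}\sigma(x)$ satisfies the defining relation \eqref{tauFcn_stationarySol}. Since that relation determines the $\tau$-function up to an irrelevant multiplicative constant (if two solutions differ by a factor $g$, then $g(\bm{t}-[\lambda^{-1}])=g(\bm{t})$ for all $\lambda$, which forces $g$ to be constant), such a verification identifies $\tau_V$ completely. The whole analytic content has already been isolated in Lemma \ref{lemma:Q(t)}, so the argument reduces to a short bookkeeping computation.

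First I would substitute $\tau_V(x,\bm{t}) = e^{-Q(x,\bm{t})}\sigma(x)$ into the left-hand side of \eqref{tauFcn_stationarySol}, recalling the convention $x=t_1$ so that the combined Miwa shift $(x,\bm{t})\mapsto(x-\lambda^{-1},\bm{t}-[\lambda^{-1}])$ acts as $t_n\mapsto t_n-\frac{1}{n\lambda^n}$ on every variable $n\geq 1$. The $\sigma$-factors separate immediately, giving
\begin{equation}
\frac{\tau_V(x-\lambda^{-1},\bm{t}-[\lambda^{-1}])}{\tau_V(x,\bm{t})}
=\frac{\sigma(x-\lambda^{-1})}{\sigma(x)}\,
\exp\!\left[-Q(\bm{t}-[\lambda^{-1}])+Q(x,\bm{t})\right].
\end{equation}
The prefactor $\sigma(x-\lambda^{-1})/\sigma(x)$ already matches the one on the right-hand side of \eqref{tauFcn_stationarySol}, so it remains only to treat the exponential.

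Next I would invoke the relation \eqref{-Q+Q=sum_ometantn} of Lemma \ref{lemma:Q(t)}, which states precisely that $-Q(\bm{t}-[\lambda^{-1}])+Q(x,\bm{t})=\sum_{n=1}^\infty \omega_n(\lambda)t_n$. Splitting off the $n=1$ term and using $t_1=x$ together with the reindexing $n=j+2$ for $n\geq 2$ yields
\begin{equation}
\sum_{n=1}^\infty \omega_n(\lambda)t_n
=\omega_1(\lambda)x+\sum_{j=0}^\infty \omega_{j+2}(\lambda)t_{j+2},
\end{equation}
which is exactly the exponent appearing on the right-hand side of \eqref{tauFcn_stationarySol}. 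This completes the verification.

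The only point requiring care---and the reason the statement is not vacuous---is contained entirely in Lemma \ref{lemma:Q(t)}: the existence of a quadratic $Q$ whose shift-difference reproduces $\sum_n\omega_n(\lambda)t_n$ hinges on the symmetry $q_{ij}=q_{ji}$, equivalently $m\beta_n^{(m)}=n\beta_m^{(n)}$, of the expansion coefficients of the $\omega_n$. Once that symmetry is granted, no genuine obstacle remains here; the present theorem is merely the assembly of that lemma with the explicit $\sigma$-prefactor, and I expect the only thing to watch is the consistent use of the $x=t_1$ convention when interpreting the combined shift.
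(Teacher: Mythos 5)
Your verification is correct and is precisely the argument the paper intends: the paper's proof consists of the single remark that Lemma \ref{lemma:Q(t)} implies the theorem, and your computation---separating the $\sigma$-factors and applying \eqref{-Q+Q=sum_ometantn} under the $x=t_1$ convention so that $\omega_1(\lambda)x$ emerges as the $n=1$ term---is exactly that implication spelled out. Your added uniqueness observation (that the relation \eqref{tauFcn_stationarySol} fixes $\tau_V$ up to a constant factor) is correct and harmless.
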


Setting $t_n=0$ for $n\geq 6$ as an example,  we obtain
\begin{equation}
\label{Q(t1,...,t5)=...}
\begin{aligned}
Q&\left(x=t_1,t_2,\ldots,t_5\right)\\
&= a_2 t_2^2-3a_4 t_3^2+10a_6t_4^2-35a_8t_5^2
-a_2 x t_3-a_4 x t_5+4a_4t_2t_4-15a_6t_3t_5-\frac{a_2}{3}t_4
\\
&=\frac{g_2}{60}\left(3t_2^2-3x t_3-t_4\right)
+\frac{g_3}{28}\left(-3t_3^2-x t_5+4t_2t_4\right)
+\frac{g_2^2}{240}\left(2t_4^2-3t_3t_5\right)
-\frac{3 g_2 g_3}{176}t_5^2.
\end{aligned}
\end{equation}
If we substitute \eqref{def:tauV} with \eqref{Q(t1,...,t5)=...} 
to \eqref{u2=del2logtau,u3=...,u4=...}, 
the results coincide with \eqref{L0=dx2+...}.

We now apply Theorem \ref{thm:Crum'sThm} to $L_0=V_0\partial_xV_0^{-1}=V\partial_xV^{-1}$. 
We consider a set of $N$ eigenfunctions 
\begin{equation}
\varphi_i = 
 \sum_{j=1}^N \gamma_{ij}\Psi\left(x,\bm{t};k_j\right) \quad (i=1,2,\ldots,N), 
\end{equation}
where $k_i$ ($i=1,\ldots,N$), $\gamma_{ij}$ ($i,j=1,\ldots,N$) are constants.
Then one can construct a differential operator $G_N$ by the 
generalized Crum formula (Theorem \ref{thm:Crum'sThm}).
The corresponding solution \eqref{N-foldtransformedSolofKPeq} 
for the KP equations is given by
\begin{equation}
u = -v_1' +\partial_x^2\log\mathcal{W}\left(\varphi_1,\ldots,\varphi_N\right)
= -\wp(x) +\partial_x^2\log\mathcal{W}\left(\varphi_1,\ldots,\varphi_N\right),
\label{LiZhangSolution:KP:revisited}
\end{equation}
which is a generalization of \eqref{LiZhangSolution:KP}. 
Although the background of the solution \eqref{LiZhangSolution:KP:revisited} is stationary, 
solutions with the non-stationary background \eqref{GalileanTransformedWPsol}
can be constructed by applying 
Theorem \ref{thm:GalileanSymmetry} to \eqref{LiZhangSolution:KP:revisited}. 

The $\tau$-function $\tilde{\tau}_V$
that corresponds to \eqref{LiZhangSolution:KP:revisited}
can be derived by applying Corollary \ref{cor:transformedTauFcn} to 
$\tau_V$ of \eqref{def:tauV}:
\begin{equation}
\tilde{\tau}_V
= e^{-Q\left(x,t_2,\ldots\right)}\sigma(x)
\mathcal{W}\left(\varphi_1,\ldots,\varphi_N\right). 
\label{transformedTauFcn}
\end{equation}
Substituting $\tilde{f}=e^{Q}\tilde{\tau}_V$ with \eqref{Q(t1,...,t5)=...} in 
\eqref{bilinearKP_wp:2}--\eqref{bilinearKP_wp:deg6:2}, 
then the resulting equations for $\tilde{\tau}_V$ coincide with
those of the KP hierarchy listed in \cite{JimboMiwa}, i.e., 
the equations \eqref{bilinearKP_wp:2}--\eqref{bilinearKP_wp:deg6:2} 
with $g_2=g_3=0$. 

\section{KP-type hierarchy with elliptic background}
In the preceding section, we constructed a solution to the KP hierarchy
through the iterative application of Darboux transformations to the ``vacuum solution'' $V$.
Expressing this solution in terms of wave operators, it can be represented as 
\begin{equation}
W=G_NV\partial_x^{-N}, 
\label{W=GNVdx(-N)}
\end{equation}
which satisfies the Sato-Wilson equation \eqref{SatoWilsonEqn}.
Let us consider differential equation for $G_N = W\partial_x^N V^{-1}$. 
It is straightforward to show that
\begin{align*}
\frac{\partial G_N}{\partial t_n}&=B_n G_N - G_N B_n^{(V)}, \quad
B_n = \left(W\partial_x^nW^{-1}\right)_{\geq 0}
=\left(G_NL_0^nG_N^{-1}\right)_{\geq 0}.
\end{align*}

To generalize this observation, 
we consider the following factorized form:
\begin{equation}
W = \tilde{W}^{(1)}\tilde{W}^{(0)}\partial_x^{-m},
\label{W=tildeW1tildeW0dx(-m)}
\end{equation}
where $\tilde{W}^{(0)}\in\mathcal{E}^{(0)}$, 
$\tilde{W}^{(1)}\in\mathcal{E}^{(m)}$ ($m\in\mathbb{Z}_{\geq 0}$) and both are monic.
If we set $m=N$, $\tilde{W}^{(0)}=V$ and $\tilde{W}^{(1)}=G_N$, then 
\eqref{W=tildeW1tildeW0dx(-m)} coincides with \eqref{W=GNVdx(-N)}.

Assume $\tilde{W}^{(0)}$ satisfies 
the Sato-Wilson equation \eqref{SatoWilsonEqn}: 
\begin{equation}
\frac{\partial\tilde{W}^{(0)}}{\partial t_n} 
= \tilde{B}^{(0)}_n\tilde{W}^{(0)}-\tilde{W}^{(0)}\partial_x^n, 
\quad \tilde{B}^{(0)}_n = \left(\tilde{W}^{(0)}\partial_x(\tilde{W}^{(0)})^{-1}\right)_{\geq 0}, 
\label{SatoWilson-typeEq_W(0)}
\end{equation}
which is regarded as a background wave.
If $\tilde{W}^{(1)}$ satisfies 
\begin{equation}
\frac{\partial\tilde{W}^{(1)}}{\partial t_n} 
= \tilde{B}^{(1)}_n\tilde{W}^{(1)}-\tilde{W}^{(1)}\tilde{B}^{(0)}_n,
\quad \tilde{B}^{(1)}_n = \left(\tilde{W}^{(1)}\tilde{B}^{(0)}_n(\tilde{W}^{(1)})^{-1}\right)_{\geq 0}, 
\label{SatoWilson-typeEq_tildeB0n}
\end{equation}
then $W$ of \eqref{W=tildeW1tildeW0dx(-m)} satisfies 
the Sato-Wilson equation \eqref{SatoWilsonEqn} with 
\begin{equation}
B_n = \left(W\partial_x W^{-1}\right)_{\geq 0}
= \left(\tilde{W}^{(1)}(\tilde{L}^{(0)})^n (\tilde{W}^{(1)})^{-1}\right)_{\geq 0}, \quad
\tilde{L}^{(0)} = \tilde{W}^{(0)}\partial_x(\tilde{W}^{(0)})^{-1}.
\end{equation}
Conversely, if $W$ satisfies 
the Sato-Wilson equation \eqref{SatoWilsonEqn}, then 
$\tilde{W}^{(1)}=W\partial_x^m(\tilde{W}^{(0)})^{-1}$ satisfies 
\eqref{SatoWilson-typeEq_tildeB0n}.

As we have mentioned, \eqref{W=GNVdx(-N)} gives an example with $m=N$. 
Next we consider the case $m=0$ and define
\begin{equation}
\tilde{W}^{(j)} = 1 + \tilde{w}^{(j)}_1\partial_x^{-1}+ \tilde{w}^{(j)}_2\partial_x^{-2}+\cdots
\quad (j=0,1).
\end{equation}
In this case, $\tilde{B}^{(0)}_2$, $\tilde{B}^{(0)}_3$, 
$\tilde{B}^{(1)}_2$, $\tilde{B}^{(1)}_3$ are given by
\begin{equation}
\begin{aligned}
\tilde{B}^{(0)}_2 &= \partial_x^2 -2 (\tilde{w}^{(0)}_1)',\\
\tilde{B}^{(0)}_3 &= \partial_x^2 -3 (\tilde{w}^{(0)}_1)'\partial_x
-3(\tilde{w}^{(0)}_2)'+3\tilde{w}^{(0)}_1(\tilde{w}^{(0)}_1)'-3(\tilde{w}^{(0)}_1)'',\\
\tilde{B}^{(1)}_2 &
=\tilde{B}^{(0)}_2-2(\tilde{w}^{(1)}_1)',\\
\tilde{B}^{(1)}_3 
&=\tilde{B}^{(0)}_3-3 (\tilde{w}^{(1)}_1)'\partial_x
-3(\tilde{w}^{(1)}_2)'+3\tilde{w}^{(1)}_1(\tilde{w}^{(1)}_1)'-3(\tilde{w}^{(1)}_1)''.
\end{aligned}
\end{equation}
{}From the compatibility of \eqref{SatoWilson-typeEq_W(0)}, 
\eqref{SatoWilson-typeEq_tildeB0n}, we obtain 
the Zhakharov-Shabat-type equations:
\begin{equation}
\frac{\partial\tilde{B}^{(l)}_i}{\partial t_j}-\frac{\partial\tilde{B}^{(l)}_j}{\partial t_i}
=\left[\tilde{B}^{(l)}_j,\,\tilde{B}^{(l)}_i\right] \quad (l=0,1,\;i,j=1,2,\ldots).
\label{ZakharovShabatEq_wEB}
\end{equation}
Now we set 
\begin{equation}
u_0=-\left(\tilde{w}^{(0)}_1\right)', \quad
u=-\left(\tilde{w}^{(1)}_1\right)'
\end{equation}
and consider \eqref{ZakharovShabatEq_wEB} with $l=0$, $i=2$, $j=3$, then it follows that
$u_0$ is a solution of the KP equation \eqref{KPeq} ($y=t_2$, $t=t_3$). 
The case $l=1$, $i=2$, $j=3$ of \eqref{ZakharovShabatEq_wEB} shows 
$u$ solves the KP equation with the background $u_0$:
\begin{equation}
\left\{4u_t-12uu_x-u_{xxx}-12\left(u_0u\right)_x\right\}_x -3u_{yy}=0.
\label{KP_with_u0}
\end{equation}
If $u_{yy}=0$ then \eqref{KP_with_u0} is reduced to 
the KdV equation with the background $u_0$:
\begin{equation}
4u_t-12uu_x-u_{xxx}-12\left(u_0u\right)_x=0, 
\label{KdV_with_u0}
\end{equation}
which has been discussed in \cite{Shabat}
by using integral operator formulation.
In this sense, the equations \eqref{SatoWilson-typeEq_W(0)}, 
\eqref{SatoWilson-typeEq_tildeB0n}, \eqref{ZakharovShabatEq_wEB} 
describe a hierarchy of soliton equations with the non-trivial background solution $u_0$.
We remark that $u_0=-\wp(x)$ when $\tilde{W}^{(0)}=V$. 
This case has been discussed in \cite{KuznetsovMikhailov} based on the method of \cite{Shabat}.

While we focus on the case where the $\wp$-function serves as the background solution 
in this paper, it may be possible that our approach could extend to 
deal with a broader range of background solutions, 
such as \cite{Eilbeck,Matsutani,NakayashikiIMRN2010}.

\section{A class of real-valued solutions}
{}From the viewpoint of applications in physics, real-valued and non-singular 
solutions are important. In this section, we will construct 
a class of real-valued solutions using the 
extended Li-Zhang solution \eqref{LiZhangSolution:KP:revisited}. 
We remark that real solutions are discussed also in \cite{Ichikawa} from 
algebro-geometric viewpoints. 

To construct real-valued solutions to the KP equation \eqref{KPeq}, 
we prepare a property of fundamental periods of the Weierstrass $\wp$-function. 
\begin{thm}[Pastras \cite{PastrasBook}, \S 3]
\label{thm:Pastras}
Let $g_2$, $g_3$ be the 
constants defined by \eqref{def:g2,g3}.
If the cubic polynomial $x^3-g_2x-g_3$ has three real roots, 
two fundamental half-periods $\omega_1$, $\omega_2$ can be 
selected so that one of them is real and the other purely imaginary.
\end{thm}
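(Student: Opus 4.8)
The plan is to reduce the statement to a reality property of the period lattice and then separate out its real and imaginary directions. First I would record that, since $g_2$ and $g_3$ are real, the Laurent coefficients $a_n$ determined by the recurrence \eqref{recurrence:TaylorExp:wp} are all real, so that $\overline{\wp(z)}=\wp(\bar z)$ as meromorphic functions. Consequently the pole set, namely the period lattice $\Lambda=\{2m\omega_1+2n\omega_2\}$, is invariant under complex conjugation: $\overline{\Lambda}=\Lambda$. The three roots of the cubic $4t^3-g_2t-g_3$ appearing on the right of \eqref{diffEq:WeierstrassP:1} are exactly the values $e_j=\wp(\omega_j)$ ($j=1,2,3$) attained at the three inequivalent half-periods $\omega_1,\omega_2,\omega_1+\omega_2$; the hypothesis is that all three $e_j$ are real (and, away from the degenerate discriminant-zero case, distinct).

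Next I would classify conjugation-invariant lattices. A lattice with $\overline{\Lambda}=\Lambda$ has exactly one of two shapes: \emph{rectangular}, spanned by one real and one purely imaginary vector, or \emph{rhombic}, spanned by a conjugate pair $2\omega,2\bar\omega$. The goal is to show that three real roots force the rectangular shape. To this end I would evaluate the $e_j$ in each shape using $\overline{\wp(z)}=\wp(\bar z)$ together with the evenness and periodicity of $\wp$. In the rectangular case each half-period is congruent modulo $\Lambda$ to its own conjugate, so all three $e_j$ come out real. In the rhombic case only the real half-period $\omega+\bar\omega$ yields a real value, while the remaining two half-periods are interchanged by conjugation, so that $e_2$ and $e_3$ form a genuine complex-conjugate pair. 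Since by hypothesis all three roots are real, the rhombic case is excluded and $\Lambda$ must be rectangular; taking its real and imaginary spanning vectors as $2\omega_1$ and $2\omega_2$ then gives a fundamental pair with $\omega_1$ real and $\omega_2$ purely imaginary.

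Finally, to exhibit the two periods explicitly and to cross-check the reality argument, I would use the elliptic-integral representation obtained by inverting \eqref{diffEq:WeierstrassP:1}. Ordering $e_1>e_2>e_3$, we have $4t^3-g_2t-g_3=4(t-e_1)(t-e_2)(t-e_3)$, which is positive for $t>e_1$ and negative for $t<e_3$; hence
\[
\omega_1=\int_{e_1}^{\infty}\frac{dt}{\sqrt{4t^3-g_2t-g_3}}
\]
is real, whereas the corresponding integral over $(-\infty,e_3]$ is purely imaginary, giving $\omega_2\in i\mathbb{R}$. The main obstacle I anticipate is not the reality of these integrals but the bookkeeping needed to promote the two periods to a genuine \emph{fundamental} pair: one must check that the real and purely imaginary periods produced this way generate all of $\Lambda$, i.e. form a $\mathbb{Z}$-basis rather than spanning a proper finite-index sublattice, and that the branch of the square root is chosen consistently along each contour. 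Both points are handled by the standard reduction to Legendre normal form together with the monotonicity of the period integrals, so once the lattice classification of the second step is in hand the remaining verifications are routine.
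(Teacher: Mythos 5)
The paper offers no proof of this statement at all: it is imported as a black box from Pastras's book (\S 3), so there is nothing in-paper to compare your argument against, and the value of your write-up is precisely that it supplies the missing standard argument. Your proof is essentially correct. The hypothesis forces $g_2,g_3$ to be real (they are, up to normalization, elementary symmetric functions of the three real roots), so the Laurent coefficients $a_n$ are real, $\overline{\wp(z)}=\wp(\bar z)$ by the identity theorem, and the pole set satisfies $\overline{\Lambda}=\Lambda$. The rectangular/rhombic dichotomy for conjugation-stable lattices is the one genuinely nontrivial lemma you invoke without proof; it is standard and short (both $\Lambda\cap\mathbb{R}=\mathbb{Z}a$ and $\Lambda\cap i\mathbb{R}=\mathbb{Z}ib$ are nontrivial because $\lambda\pm\bar\lambda\in\Lambda$, and $\Lambda/(\mathbb{Z}a\oplus\mathbb{Z}ib)$ embeds in $(\mathbb{Z}/2)^2$, leaving only the two shapes), but it deserves at least this one line. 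Your computation of the reality pattern of $e_1,e_2,e_3$ in the two cases is right, and the exclusion of the rhombic case correctly uses that the $e_j$ of a genuine lattice are pairwise distinct, so a conjugate pair cannot consist of two real numbers. One structural remark: once the lattice is known to be rectangular you already possess the desired fundamental pair, namely the real and purely imaginary generators, so your final paragraph with the period integrals is a redundant consistency check; in particular the ``do these two periods really form a $\mathbb{Z}$-basis'' verification you flag as the main obstacle never actually needs to be carried out.

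One caveat worth recording: you silently replaced the cubic $x^3-g_2x-g_3$ of the statement by $4t^3-g_2t-g_3$, whose roots are the genuine branch values $e_j=\wp(\omega_j)$. That is the right move, but the two hypotheses are not equivalent: the discriminant conditions are $4g_2^3-27g_3^2>0$ versus $g_2^3-27g_3^2>0$, and for instance $g_2=1$, $g_3=3/10$ gives three real roots of $x^3-g_2x-g_3$ while the lattice is rhombic, so the conclusion fails. The statement as printed is therefore a mis-normalization that your proof corrects rather than proves; it would be worth saying so explicitly rather than making the substitution tacitly.
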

Hereafter in this section, we focus on the case of Theorem \ref{thm:Pastras}, 
and choose $\omega_1$ as real, $\omega_2$ as purely imaginary.
If $x$ is real, we have
\begin{equation}
\begin{aligned}
\overline{\wp\left(x+\omega_2\right)} &= \wp\left(x-\omega_2\right)=\wp\left(x+\omega_2\right),
\\
\overline{\zeta\left(x+\omega_2\right)} &= \zeta\left(x-\omega_2\right)
=\zeta\left(x+\omega_2\right)-2\zeta\left(\omega_2\right), 
\\
\overline{\sigma\left(x+\omega_2\right)} &= \sigma\left(x-\omega_2\right)
=-e^{-2\zeta\left(\omega_2\right)x}\sigma\left(x+\omega_2\right), 
\end{aligned}
\label{cc:wp,zeta,sigma}
\end{equation}
where we have used the (quasi-) periodic properties of the Weierstrass functions $\wp$, $\zeta$, $\sigma$:
\begin{equation}
\wp\left(z+2\omega_j\right) = \wp(z), \quad
\zeta\left(z+2\omega_j\right) = \zeta(z) + \zeta(\omega_j),\quad
\sigma\left(z+2\omega_j\right) = -e^{2\zeta\left(\omega_j\right)x}\sigma(z), 
\end{equation}
for $j=1,2$. 
Following \cite{KuznetsovMikhailov}, we introduce $\tilde{\Phi}$ as 
\begin{align}
\tilde{\Phi}(x;k)&=e^{\zeta(k)\omega_2-\zeta\left(\omega_2\right)k}\Phi\left(x+\omega_2;k\right)
=\frac{\sigma(x+\omega_2+k)}{\sigma(x+\omega_2)}e^{-\zeta(k)x-\zeta\left(\omega_2\right)k}. 
\end{align}
If $x$ and $k$ are real, one can show that 
\begin{equation}
\overline{\tilde{\Phi}(x;k)} = \tilde{\Phi}(x;k)
\end{equation}
by using \eqref{cc:wp,zeta,sigma}. 
Then the function
\begin{equation}
\begin{aligned}
\tilde{\Psi}(x,\bm{t};k) 
&= e^{\zeta(k)\omega_2-\zeta\left(\omega_2\right)k}
\Psi\left(x+\omega_2,\bm{t};k\right)\\
&= \tilde{\Phi}(x;k)\exp\!\left[\sum_{j=0}^\infty
\left\{\frac{\wp^{(j)}(k)}{(j+1)!}-\frac{a_j}{j+1}\right\}t_{j+2}\right]
\end{aligned}
\label{real-valued_PWF}
\end{equation}
is real-valued if all of $k$, $x$, $t_n$ ($n=1,2,\ldots$) are real.

Let $N,M$ be integers such that $N\leq M$. 
Following \cite{ChakravartyKodama,KodamaBook}, we introduce 
a set of real-valued eigenfunctions of $L_V$ of the form
\begin{equation}
\tilde{\varphi}_i =
\sum_{j=1}^M \tilde{a}_{i,j}\tilde{\Psi}\left(x,\bm{t};k_j\right) \quad (i=1,\ldots,N),
\label{real-valued-solution}
\end{equation}
where $\tilde{a}_{i,j}$ ($i=1,\ldots,N$, $j=1,\ldots,M$), 
$k_j$ ($j=1,\ldots,M$) are real constants.
Furthermore we assume $N\times M$-matrix $\tilde{A}=
\left[\tilde{a}_{i,j}\right]^{i=1,\ldots,N}_{j=1,\ldots,M}$ is 
totally nonnegative 
\cite{ChakravartyKodama,KodamaBook}. 

If we set $M=2N$ and assume the following conditions, 
\begin{equation}
\begin{aligned}
& k_{2j-1}+k_{2j}=0 \quad (j=1,\ldots,N), \\
& \tilde{a}_{i,j}=0 \text{ \ if \ } j\not\in\left\{2i-1,2i\right\},
\end{aligned}
\end{equation}
then $\tilde{\varphi}_i$ of \eqref{real-valued-solution} 
is reduced to the following form:
\begin{equation}
\tilde{\varphi}_i 
= \tilde{a}_{i,2i-1}\tilde{\Psi}\left(x,\bm{t};k_{2i-1}\right)
+\tilde{a}_{i,2i}\tilde{\Psi}\left(x,\bm{t};-k_{2i-1}\right)
\quad (i=1,\ldots,N).
\end{equation}
Since $\wp^{(2n)}(x)$ is an even function, it follows that
\begin{equation}
\partial_{2n}\tilde{\varphi}_i 
= B_{2n}^{(V)}\tilde{\varphi}_i
= \left(\tilde{P}_{2n}-\frac{a_{n-2}}{n-1}\right)\tilde{\varphi}_i
= \left(\frac{\wp^{(2n-2)}(k_{2i-1})}{(2n-1)!}-\frac{a_{n-2}}{n-1}\right)\tilde{\varphi}_i
\quad (i=1,\ldots,N).
\end{equation}
This means $\tilde{\varphi}_i$ satisfies the condition \eqref{reducedEigenFcn}. 
Thus we can apply Lemma \ref{lemma:EDT_reducedCase} to the stationary solution $L_V$.
In this way we can reproduce the Li-Zhang solution \eqref{LiZhangWronskian}. 

To construct concrete examples, we consider the lemniscate case 
(\cite{AbramowitzStegun}, \S 18.14): 
\begin{equation}
g_2=1, \quad g_3=0, \quad 
\omega_1=\frac{1}{4\sqrt{\pi}}\,\Gamma\!\left(\frac{1}{4}\right)^2
=1.854074677\ldots, \quad
\omega_2=\sqrt{-1}\,\omega_1,
\end{equation}
where $\Gamma(\,\cdot\,)$ is the Gamma function.
In the case of $N=1$, $M=2$ in \eqref{real-valued-solution}, 
the generalized Li-Zhang solution \eqref{LiZhangSolution:KP:revisited} now takes the form
\begin{equation}
u = -\wp(x) +\partial_x^2\log\left[
a_{1,1}\tilde{\Psi}\left(x,\bm{t};k_1\right)
+a_{1,2}\tilde{\Psi}\left(x,\bm{t};k_2\right)\right], 
\end{equation}
which corresponds to a line-soliton solution that interacts the background elliptic wave
(Figure \ref{fig:1-line-soliton}). 
In this paper, figures are drawn with Wolfram Mathematica 13.3.
In Figure \ref{fig:1-line-soliton}, 
one can observe the phase-shift of the background wave, which is due to the 
interaction between the background wave and the line-soliton.
In the case of $N=1$, $M=3$, the corresponding solution exhibits resonant interaction
(Figure \ref{fig:line-soliton-resonance}).
The case $N=2$, $M=4$ with
\begin{equation}
\tilde{A}_0 = \begin{bmatrix}
1 & 1 & 0 & 0\\
0 & 0 & 1 & 1
\end{bmatrix}
\label{Amatrix:O-type}
\end{equation}
is shown in Figure \ref{fig:2-line-solitons}, 
which describes interaction between two line-solitons and background wave. 
The choice \eqref{Amatrix:O-type} is an example of ``O-type'' in the sense of 
\cite{ChakravartyKodama,KodamaBook}. 
It should be remarked that similar figures have been presented 
in \cite{Nakayashiki_arXiv2023}, 
which are associated with 
singular algebraic curves (cf. \cite{Ichikawa,NakayashikiGMP2020}). 

If we choose $\tilde{A}_1$, $\tilde{A}_2$ as 
\begin{equation}
\tilde{A}_1 = \begin{bmatrix}
 1 & 1 & 0 & -0.6 \\
 0 & 0 & 1 & 1 
\end{bmatrix}, \quad
\tilde{A}_2 = \begin{bmatrix}
 1 & 1 & -0.4 & -0.6 \\
 0 & 0 & 1 & 1 
\end{bmatrix}, 
\label{tildeA1A2}
\end{equation}
more complicated patterns like Figure \ref{fig:24soliton_1},
Figure \ref{fig:24soliton_2} can be observed. 
It is expected that more complex web-like patters, 
like those described in \cite{ChakravartyKodama,IsojimaWilloxSatsuma,KodamaBook,MarunoBiondini}, 
can be observed with suitably chosen parameters. 

\begin{figure}[p]
\begin{center}
\begin{tabular}{ccc}
\raisebox{3cm}{$t=-2$:} & \includegraphics[scale=0.35]{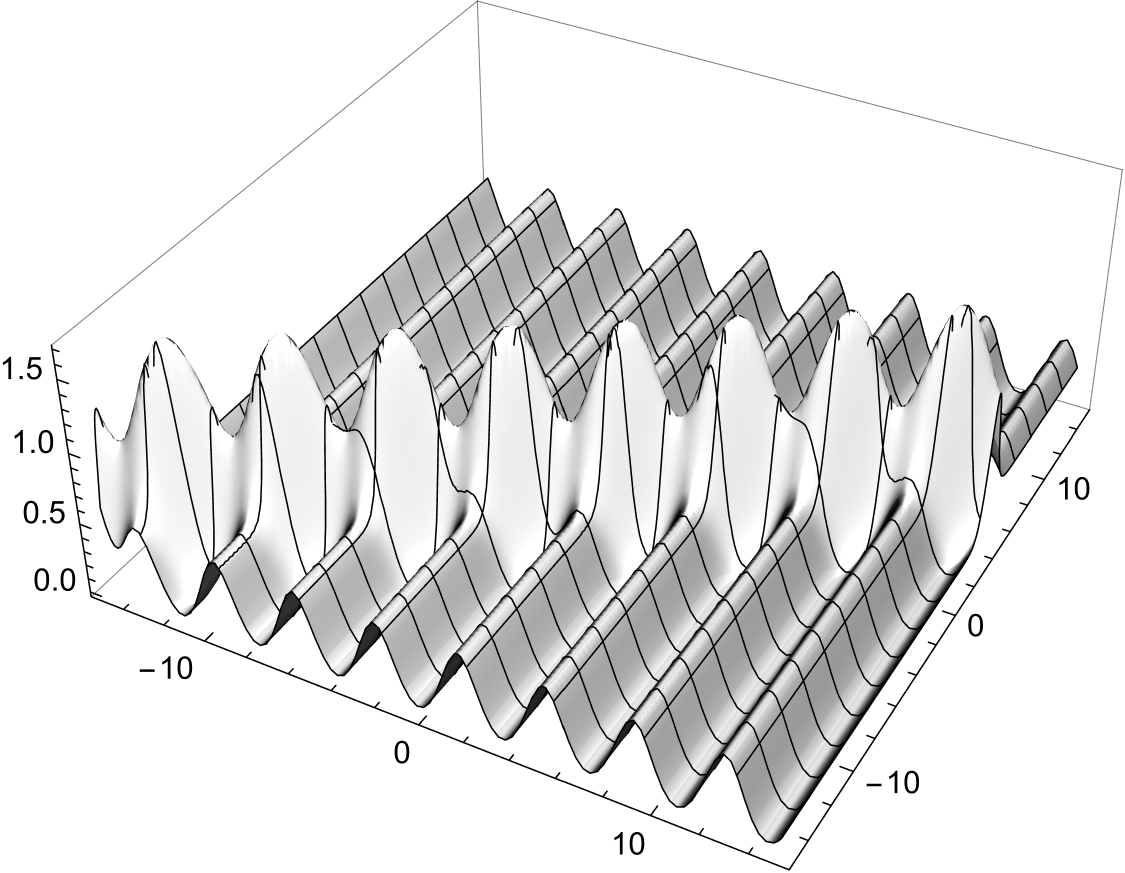}
 & \includegraphics[scale=0.26]{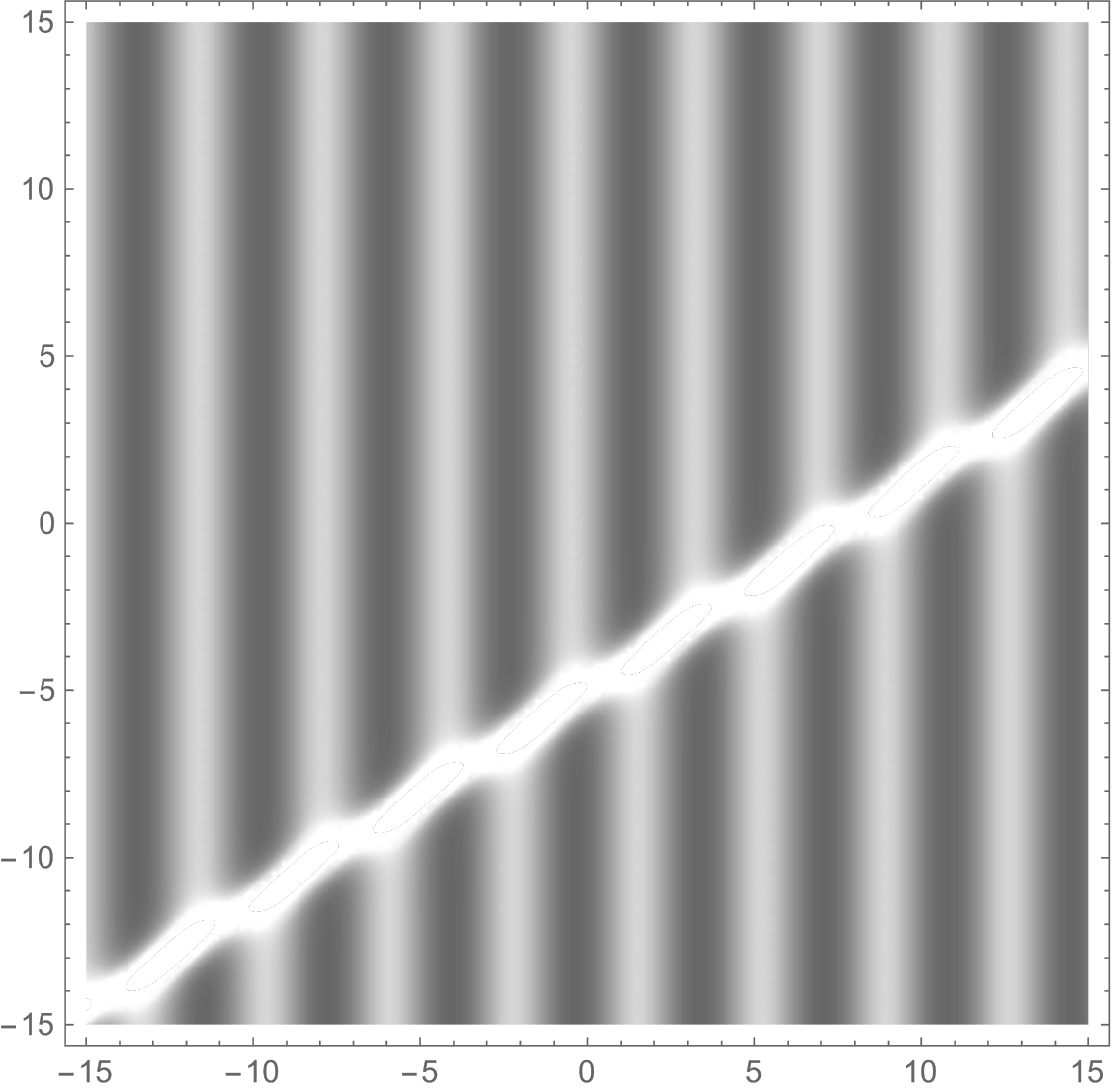}\\
\raisebox{3cm}{$t=2$:} & \includegraphics[scale=0.35]{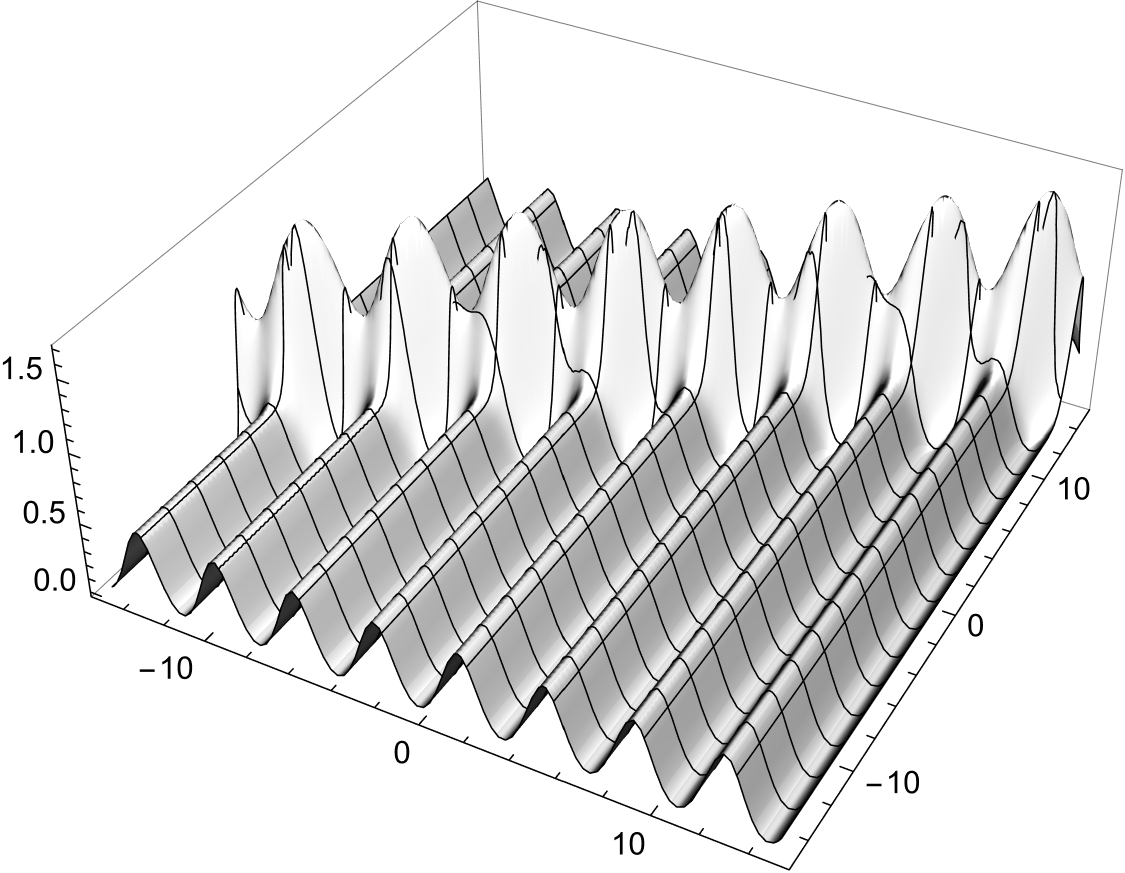}
 & \includegraphics[scale=0.26]{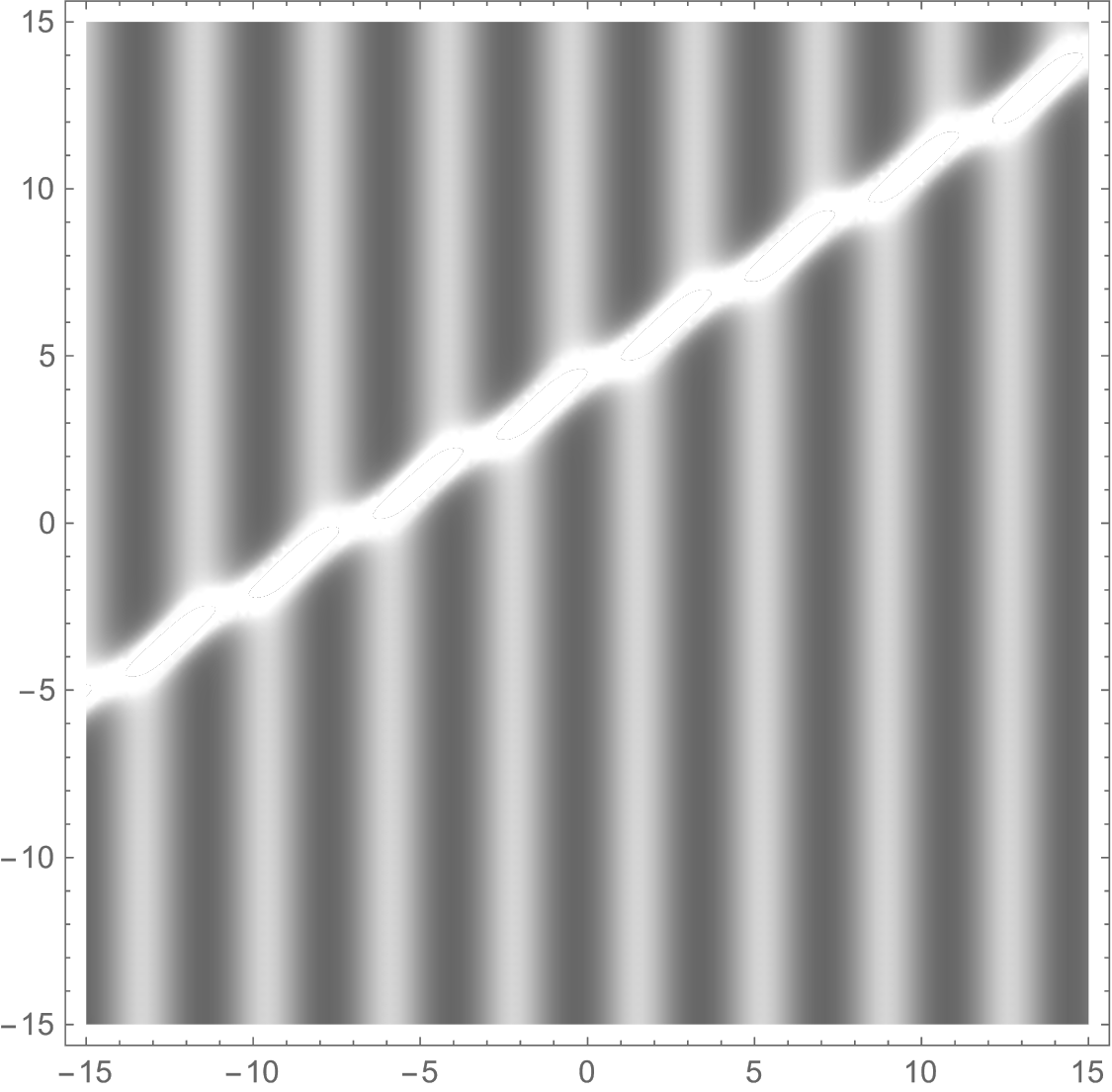}
\end{tabular}
\end{center}
\caption{Line-soliton propagation with the elliptic background $u_0=-\wp(x)$
(the lemniscate case, $N=1$, $M=2$, $\tilde{a}_{1,1}=\tilde{a}_{1,2}=1$, $k_1=-3.2$, $k_2=-1.5$). 
In the graphs in the right column, the brighter the area, the larger the value of $u$.}
\label{fig:1-line-soliton}
\end{figure}

\begin{figure}[p]
\begin{center}
\begin{tabular}{ccc}
\raisebox{3cm}{$t=-2$:} & \includegraphics[scale=0.35]{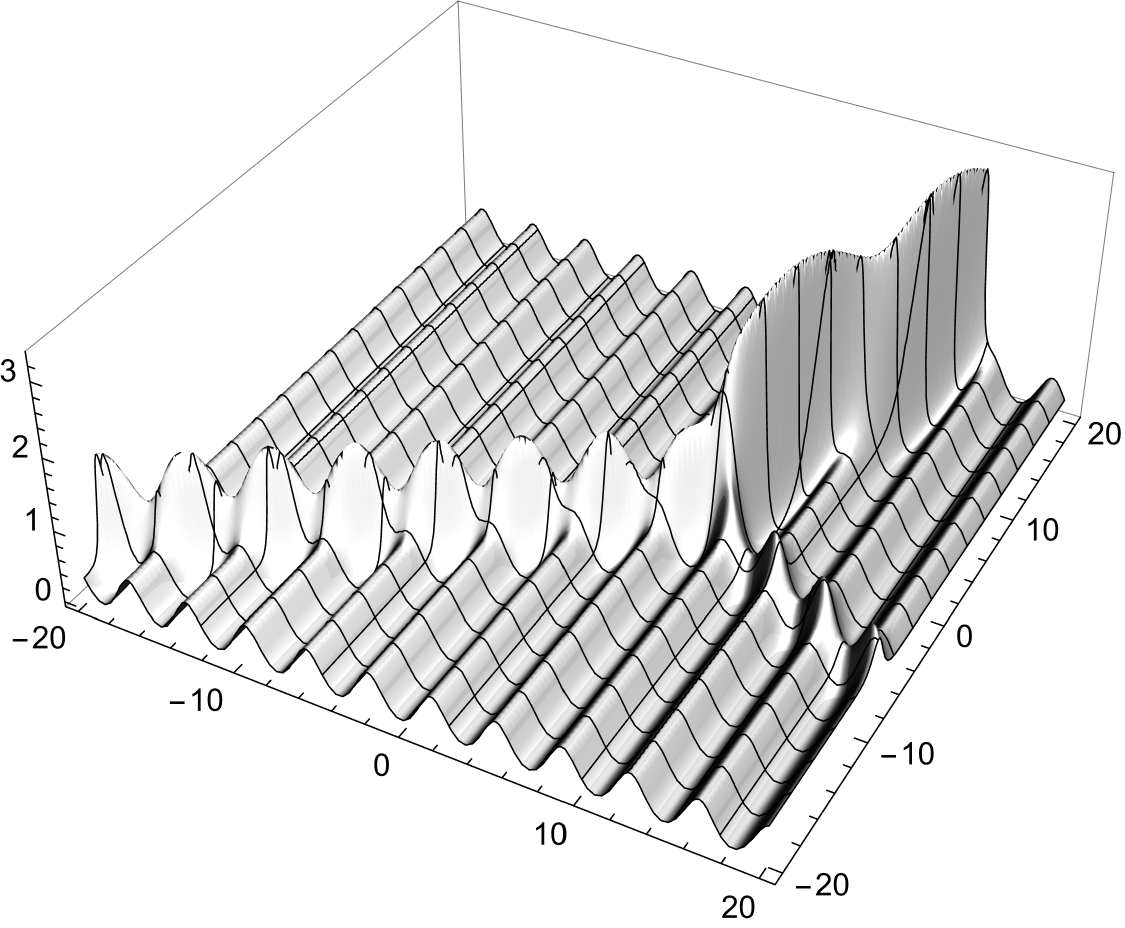}
 & \includegraphics[scale=0.26]{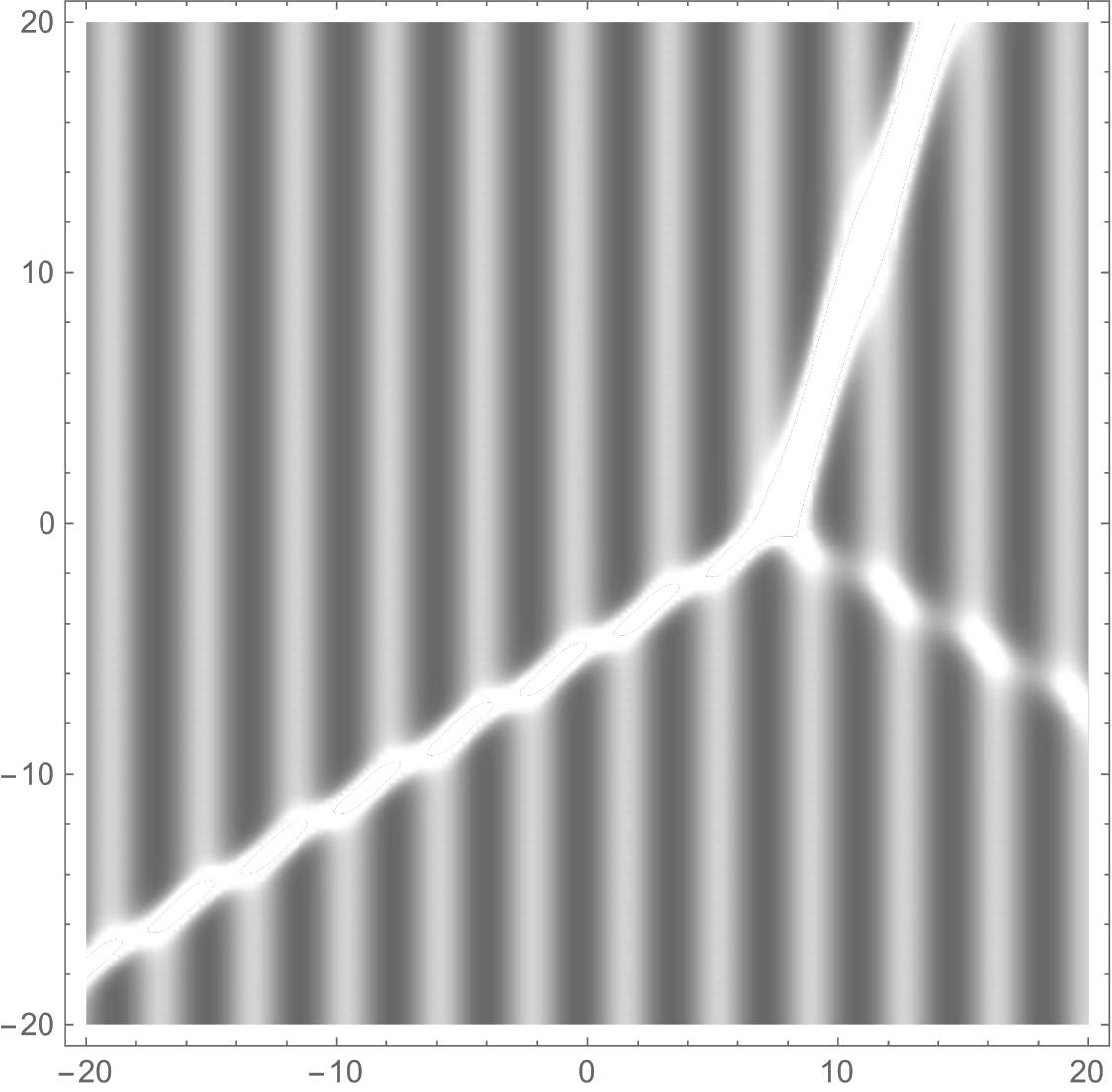}\\
\raisebox{3cm}{$t=2$:} & \includegraphics[scale=0.35]{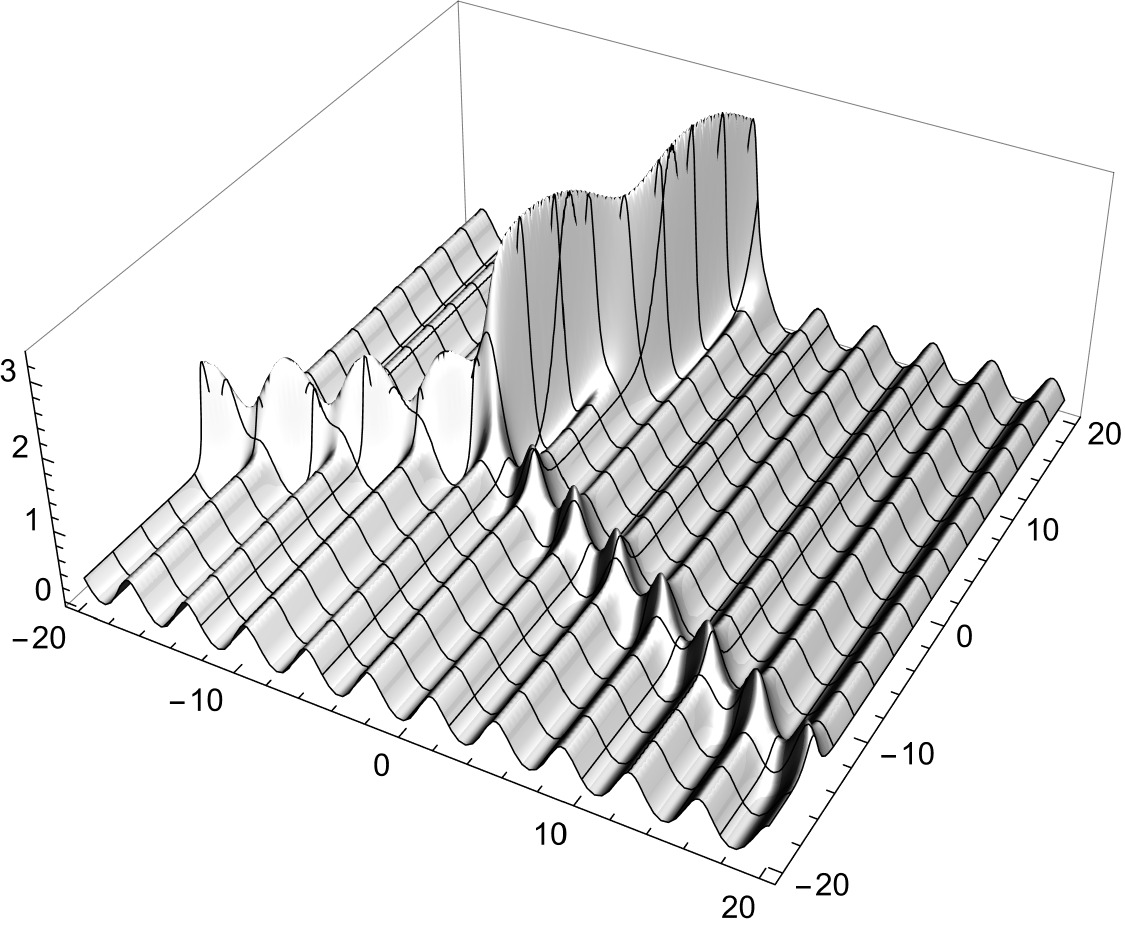}
 & \includegraphics[scale=0.26]{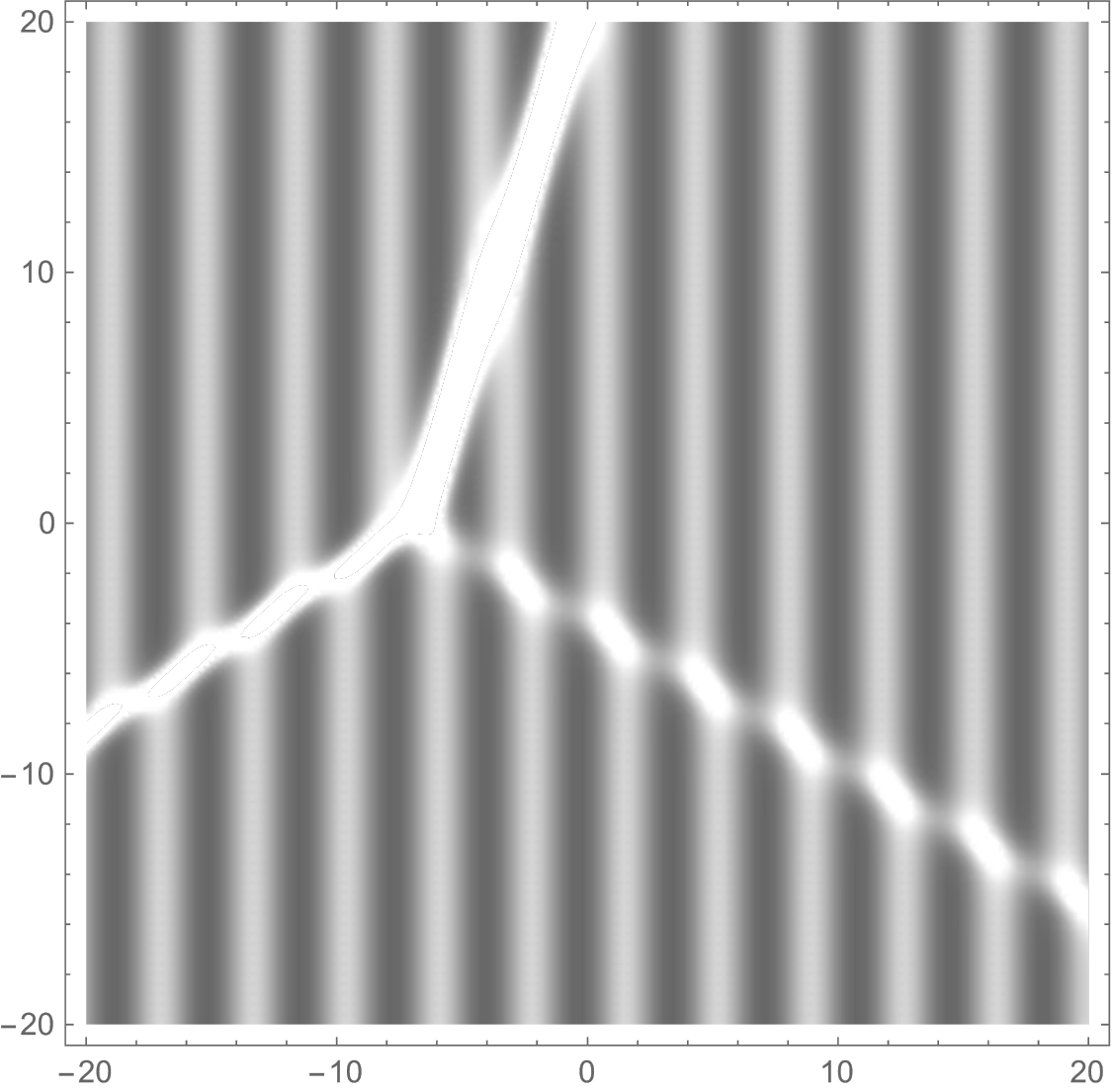}
\end{tabular}
\end{center}
\caption{Resonant interaction of line-solitons with the elliptic background $u_0=-\wp(x)$
(the lemniscate case, $N=1$, $M=3$, $\tilde{a}_{1,1}=\tilde{a}_{1,2}=\tilde{a}_{1,3}=1$, 
$k_1=-3.2$, $k_2=-1.5$, $k_3=-0.6$).
In the graphs in the right column, the brighter the area, the larger the value of $u$.}
\label{fig:line-soliton-resonance}
\end{figure}

\begin{figure}[p]
\begin{center}
\begin{tabular}{ccc}
\raisebox{3cm}{$t=-1$:} & \includegraphics[scale=0.35]{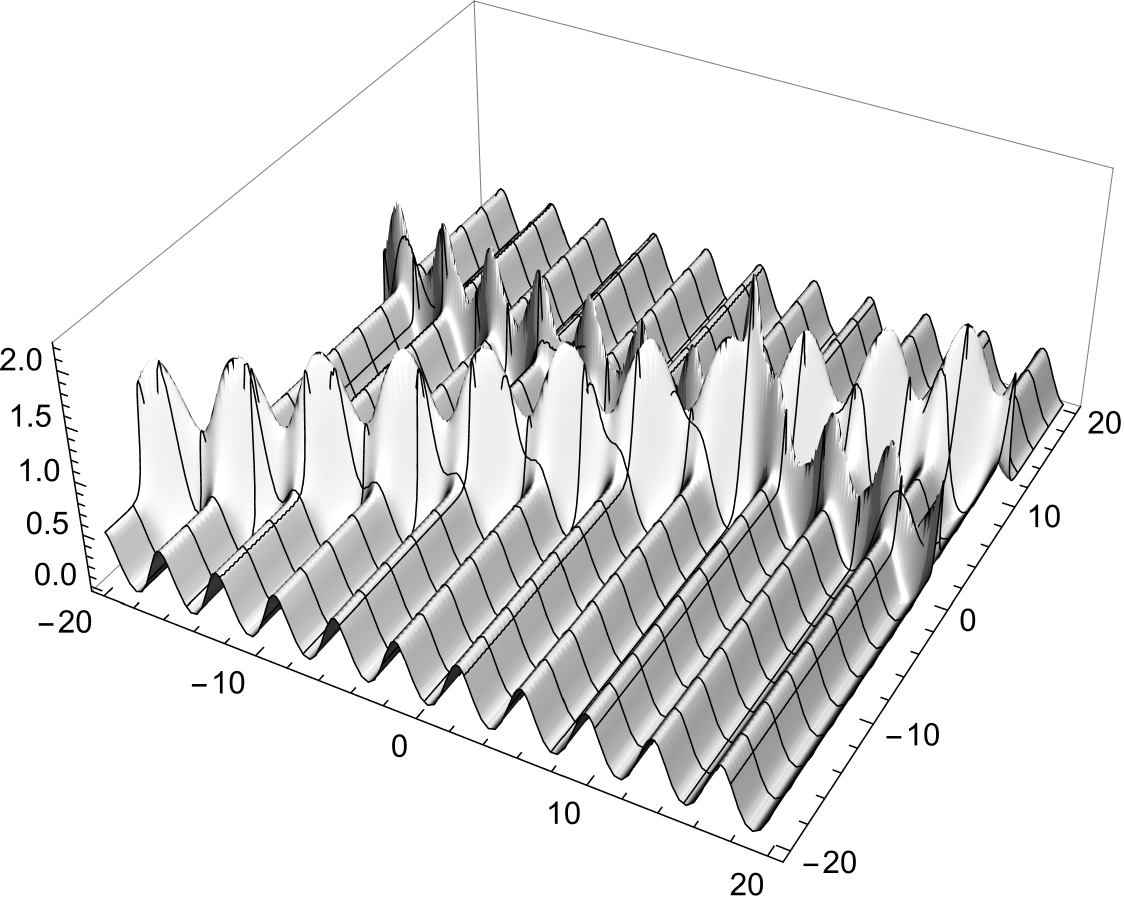}
 & \includegraphics[scale=0.26]{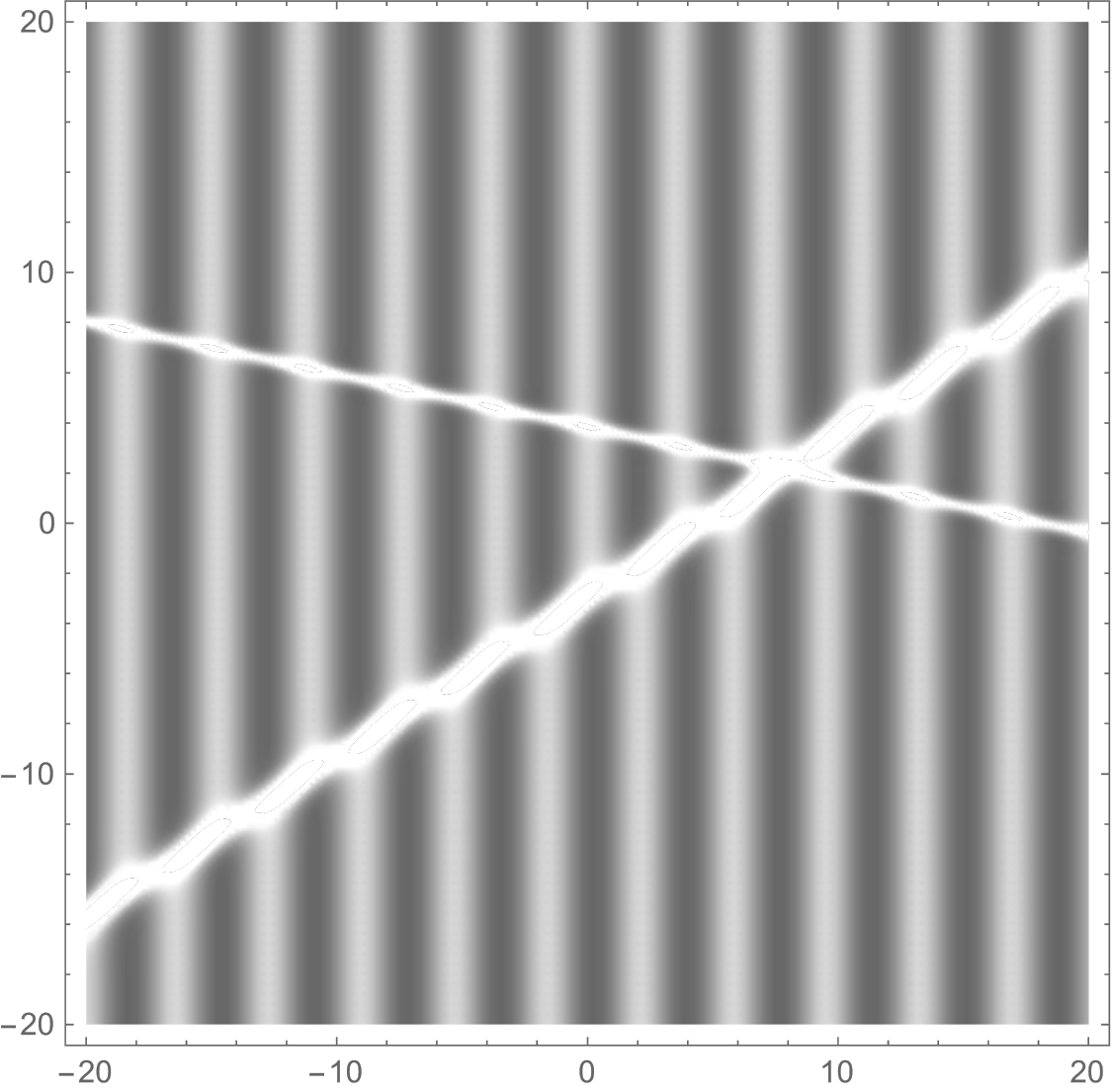}\\
\raisebox{3cm}{$t=1$:} & \includegraphics[scale=0.35]{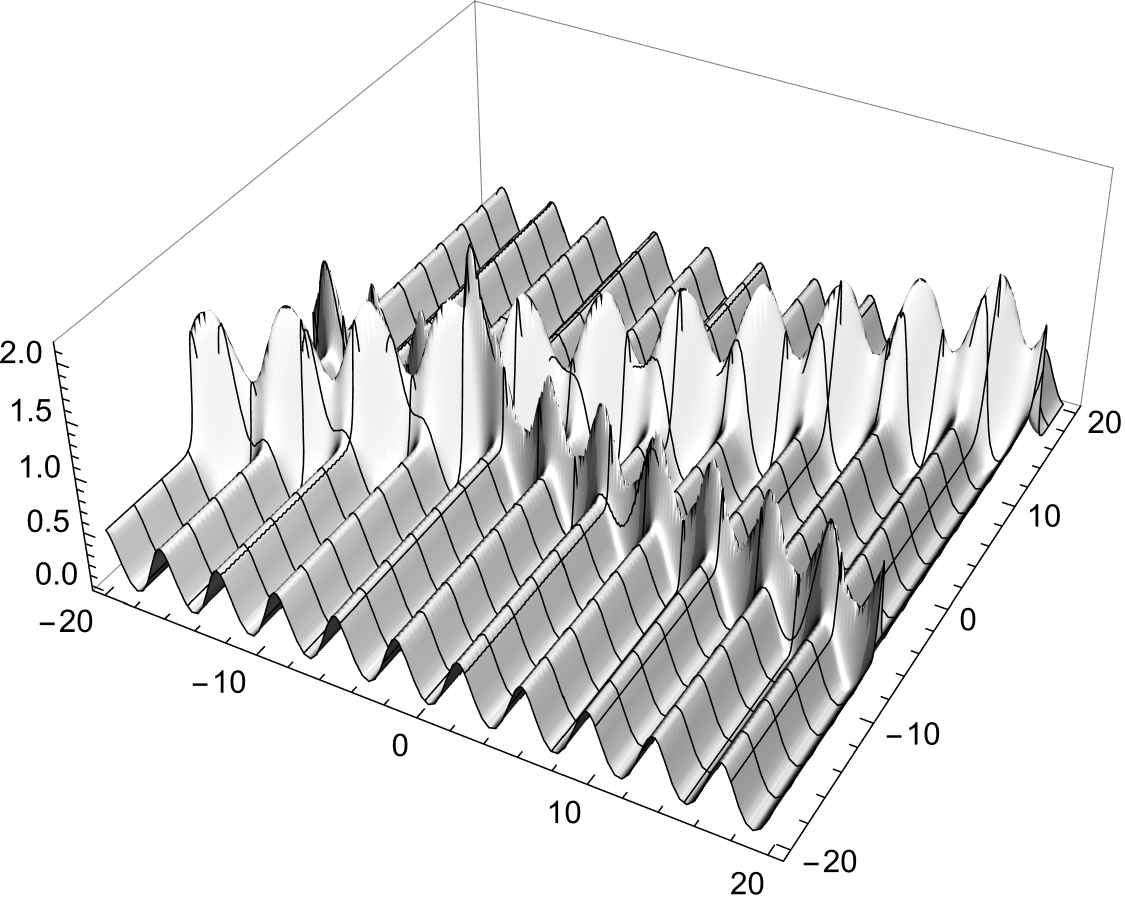}
 & \includegraphics[scale=0.26]{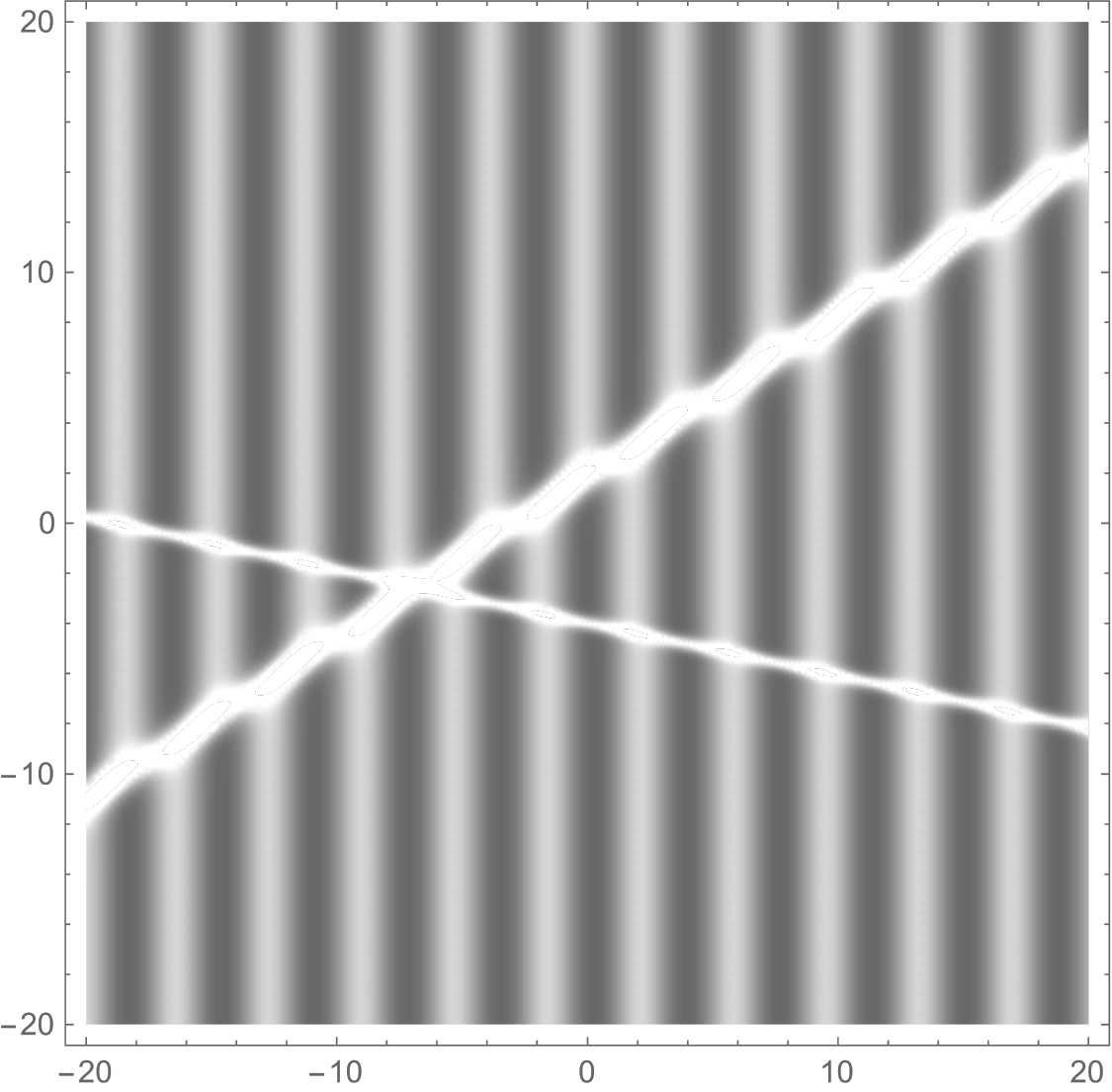}
\end{tabular}
\end{center}
\caption{Interaction between two line-solitons with the elliptic background $u_0=-\wp(x)$
(the lemniscate case, $N=2$, $M=4$, 
$\tilde{A}$ is chosen as \eqref{Amatrix:O-type}, 
$k_1=-3.2$, $k_2=-1.5$, $k_3=-0.6$, $k_4 = -0.3$). 
In the graphs in the right column, the brighter the area, the larger the value of $u$.}
\label{fig:2-line-solitons}
\end{figure}

\begin{figure}[p]
\begin{center}
\begin{tabular}{ccc}
\raisebox{3cm}{$t=-2$:} & \includegraphics[scale=0.35]{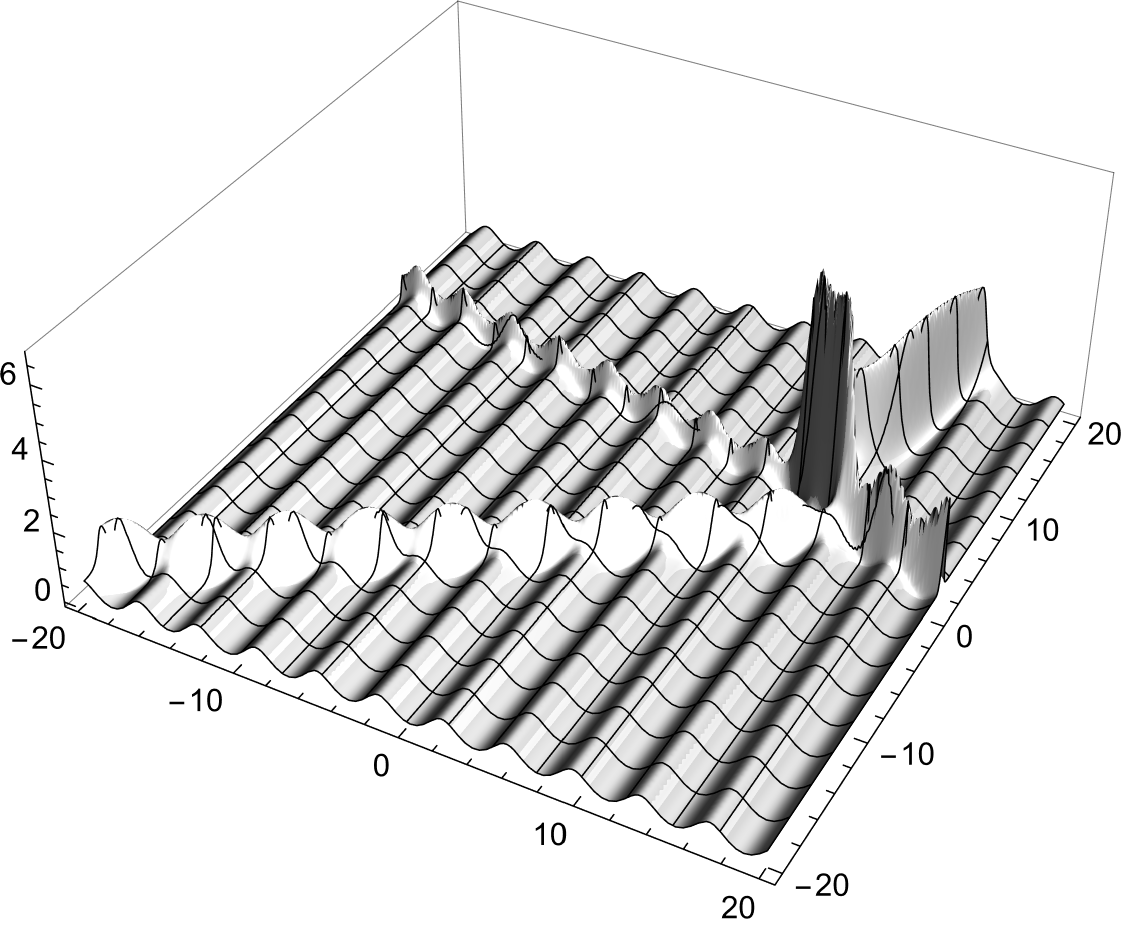}
 & \includegraphics[scale=0.26]{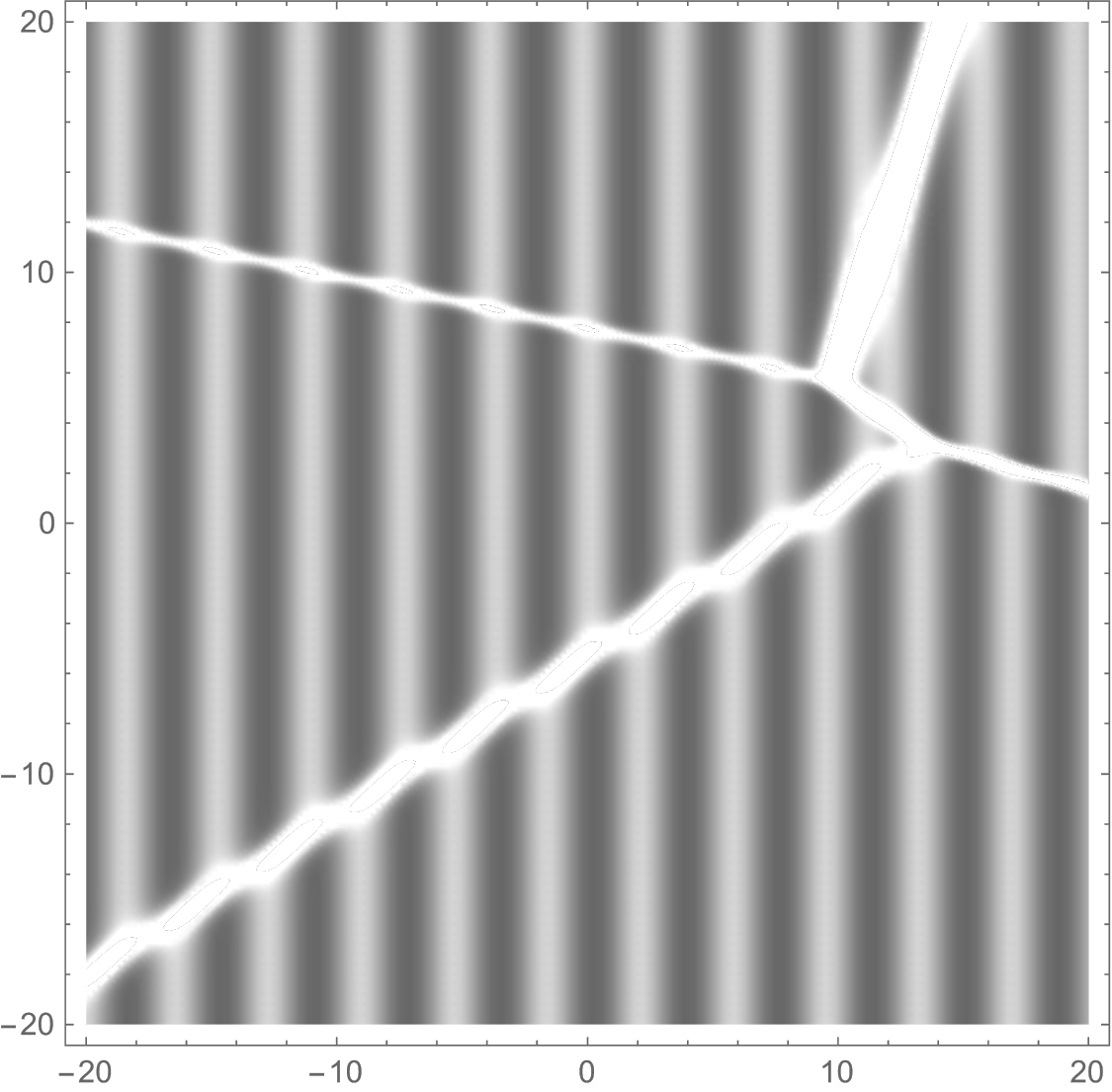}\\
\raisebox{3cm}{$t=0$:} & \includegraphics[scale=0.35]{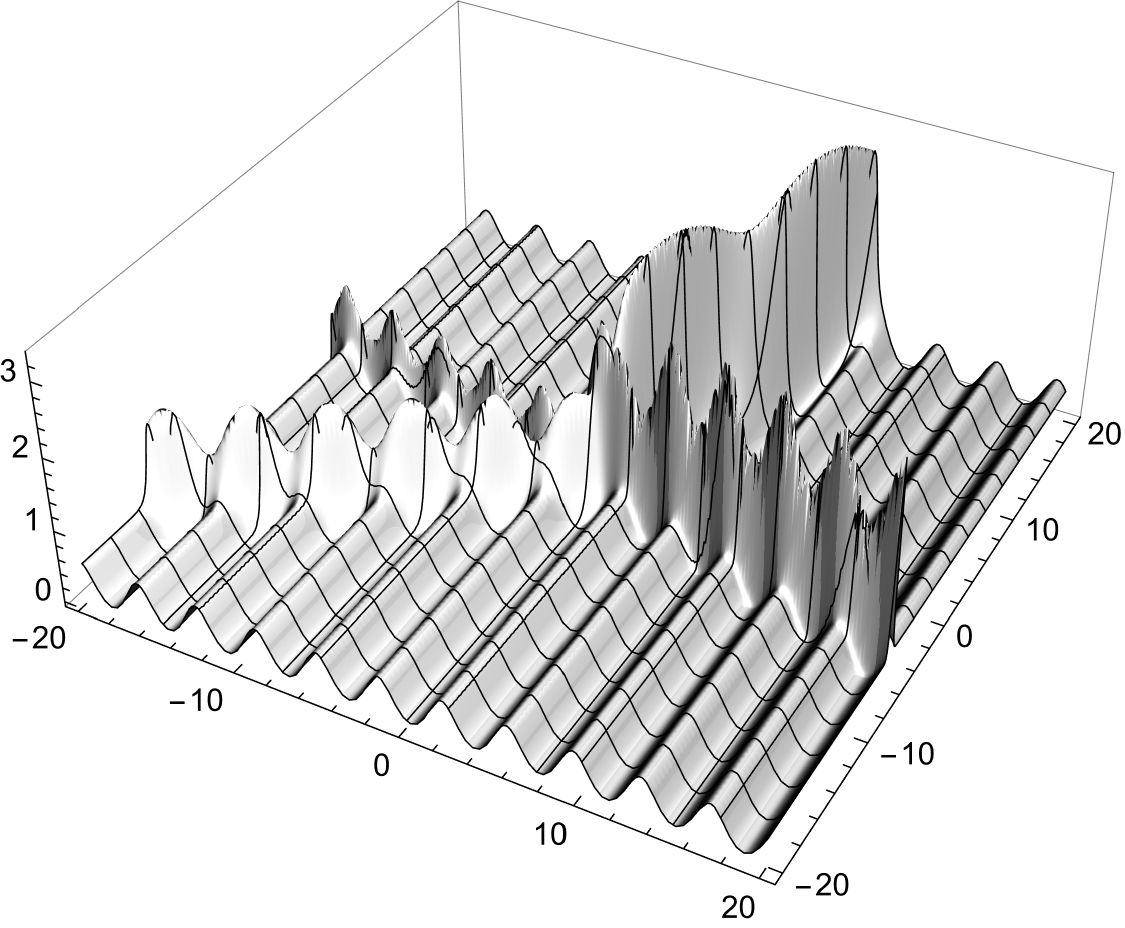}
 & \includegraphics[scale=0.26]{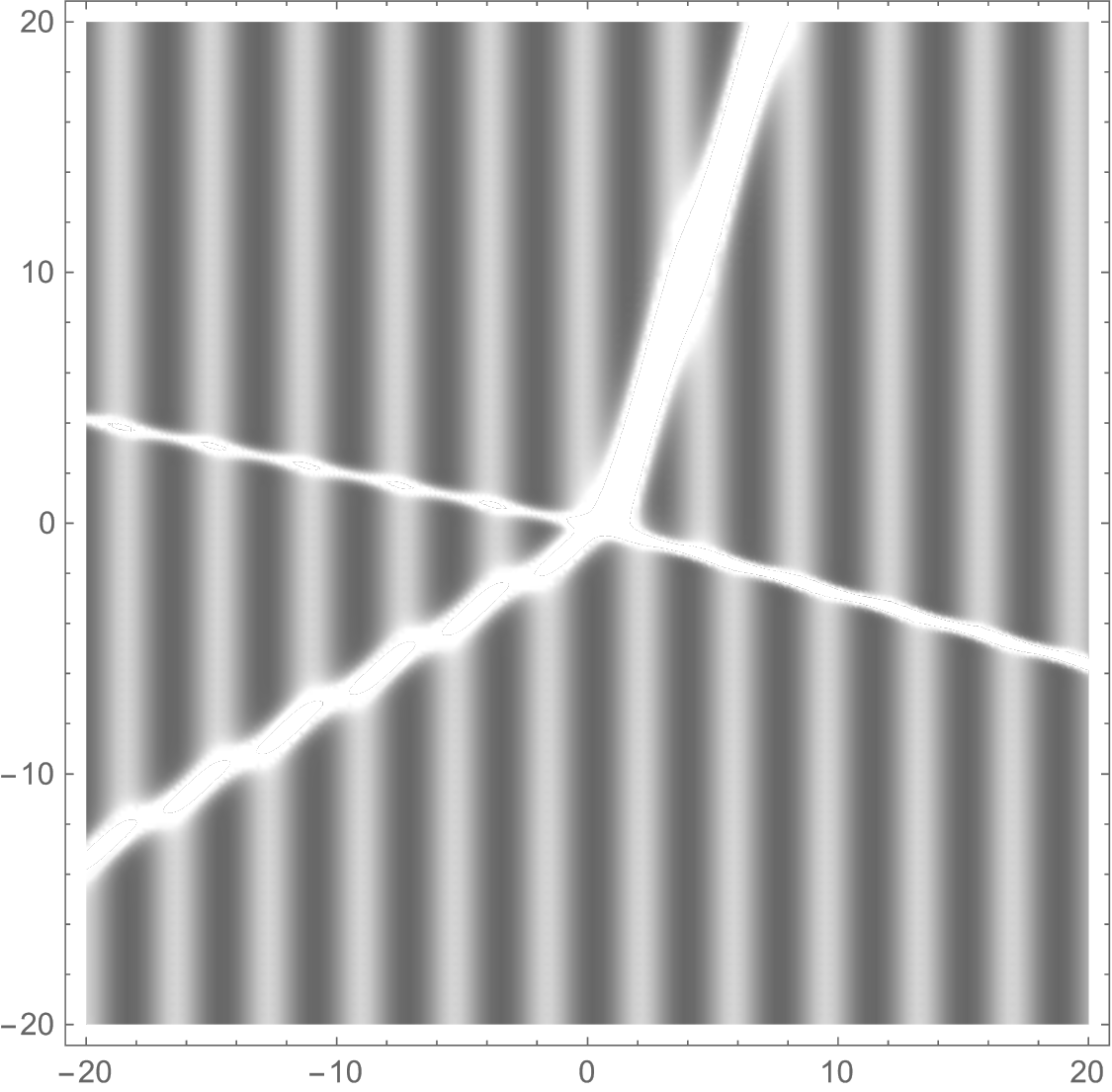}\\
\raisebox{3cm}{$t=2$:} & \includegraphics[scale=0.35]{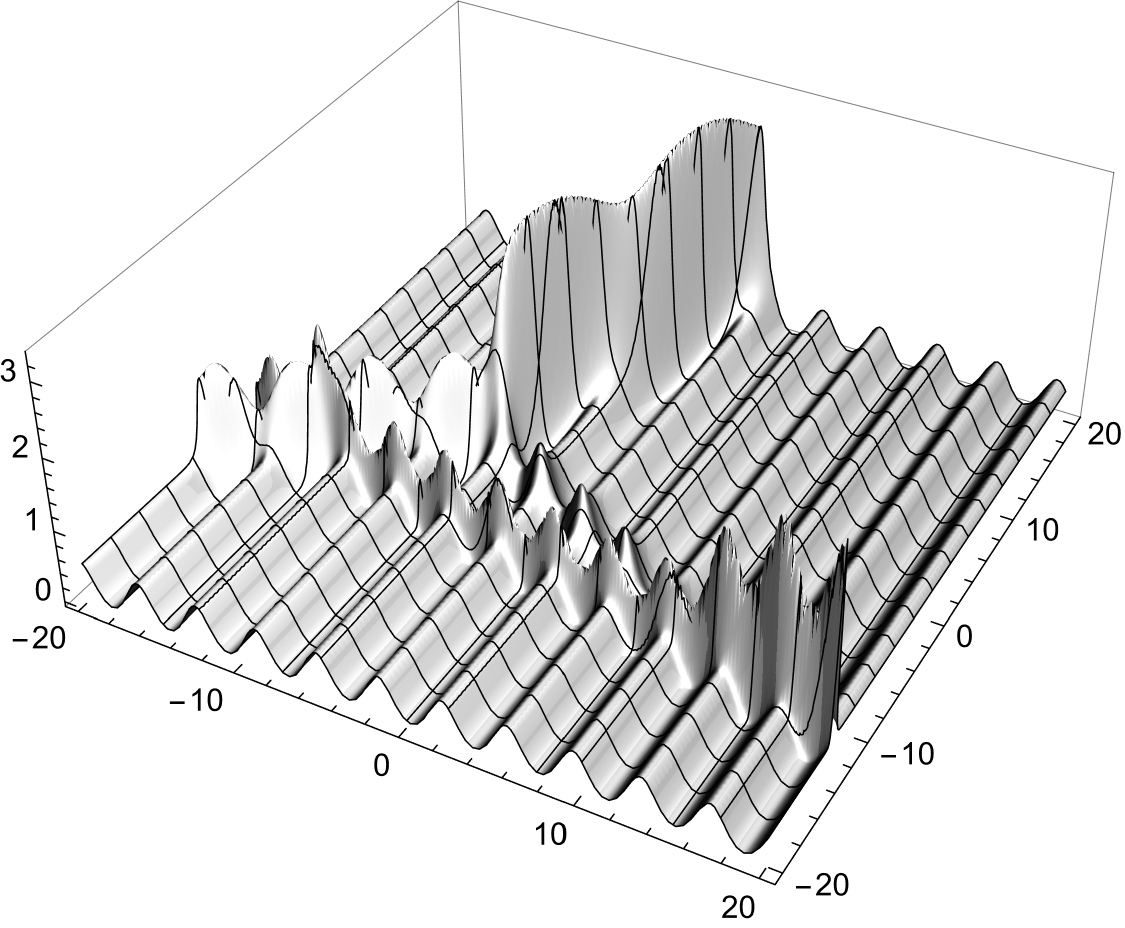}
 & \includegraphics[scale=0.26]{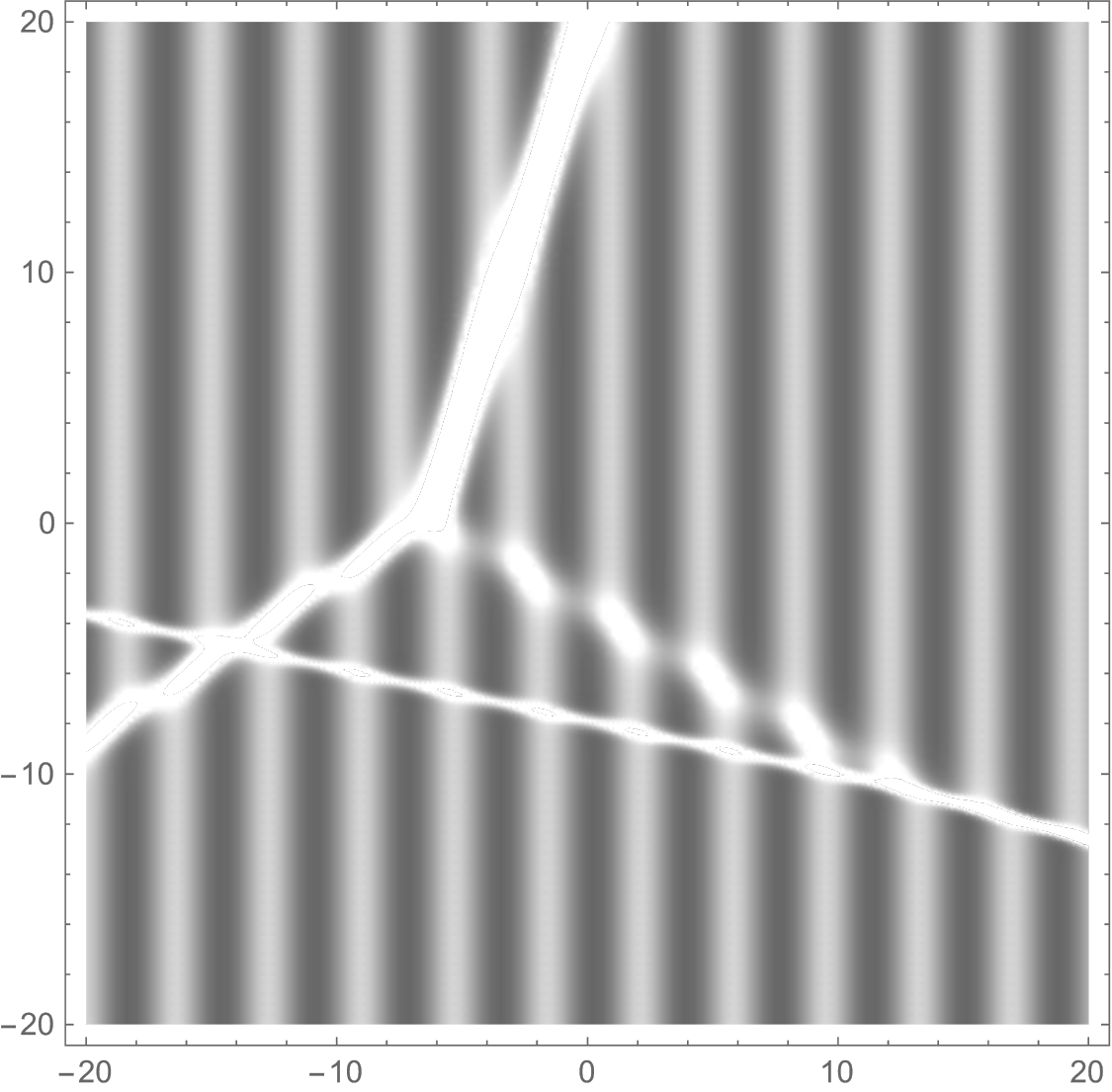}
\end{tabular}
\end{center}
\caption{Example of complex interaction 1 (the lemniscate case, 
$N=2$, $M=4$, $\tilde{A}$ is chosen as $\tilde{A}_1$ of \eqref{tildeA1A2}, 
$k_1=-3.2$, $k_2=-1.5$, $k_3=-0.6$, $k_4 = -0.3$).}
\label{fig:24soliton_1}
\end{figure}

\begin{figure}[p]
\begin{center}
\begin{tabular}{ccc}
\raisebox{3cm}{$t=-2$:} & \includegraphics[scale=0.35]{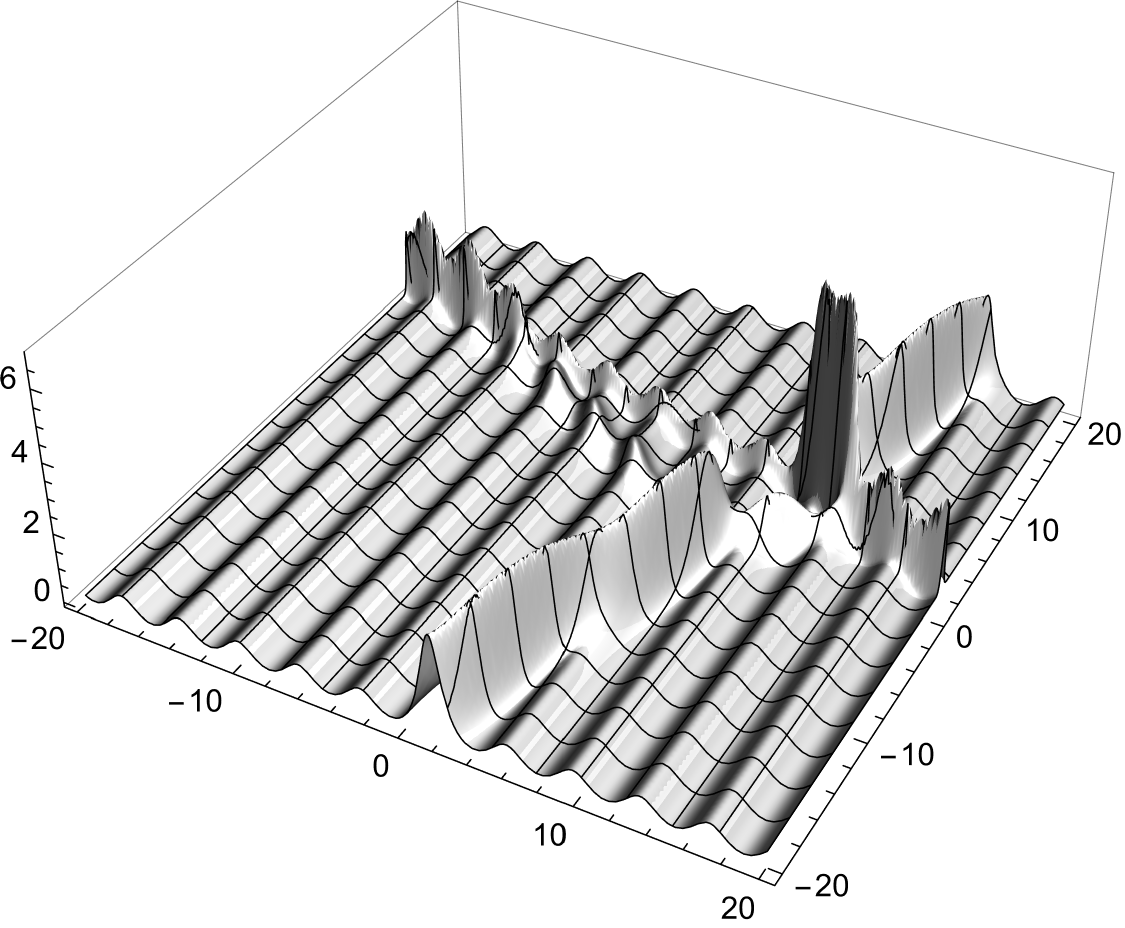}
 & \includegraphics[scale=0.26]{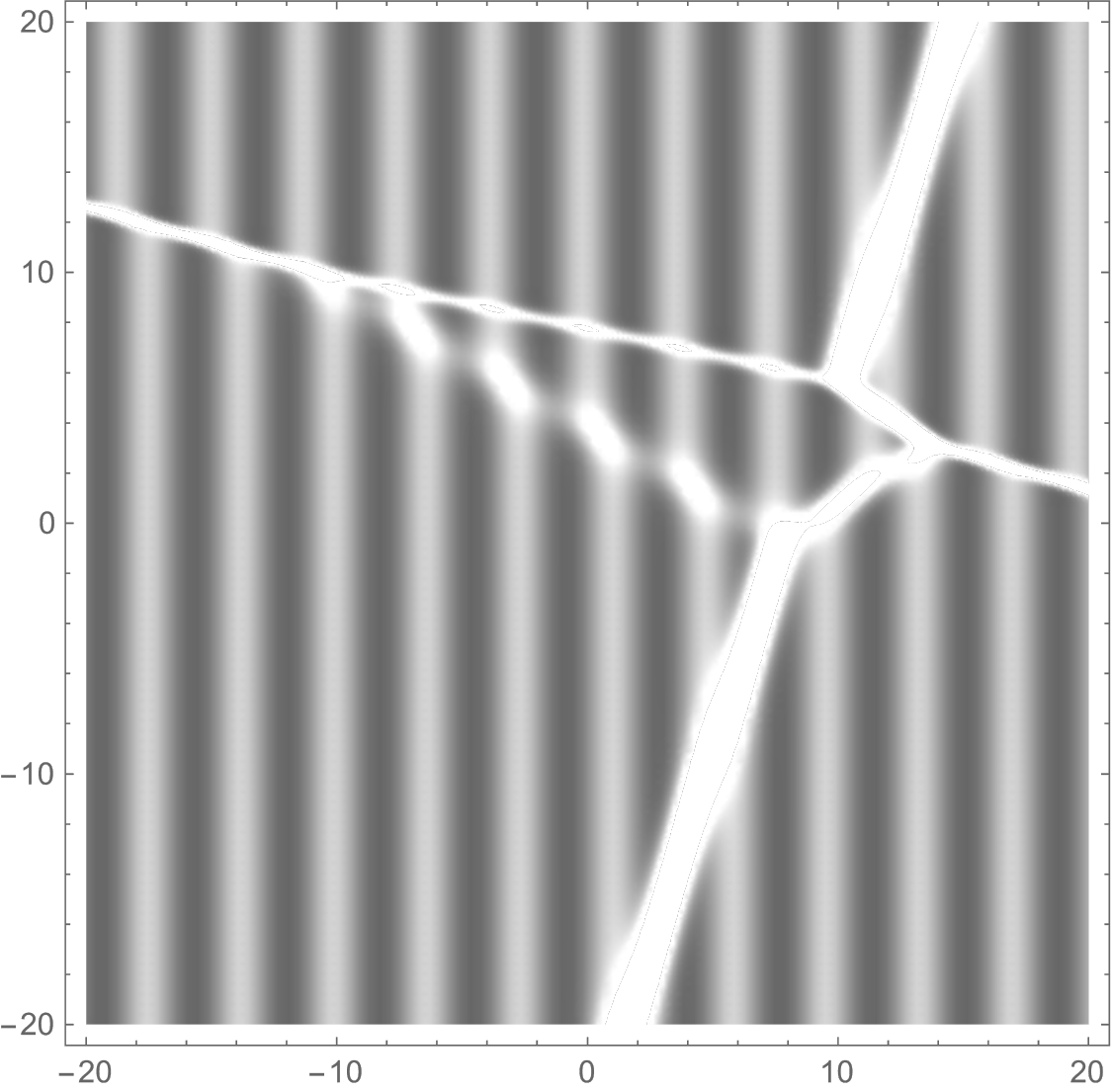}\\
\raisebox{3cm}{$t=0$:} & \includegraphics[scale=0.35]{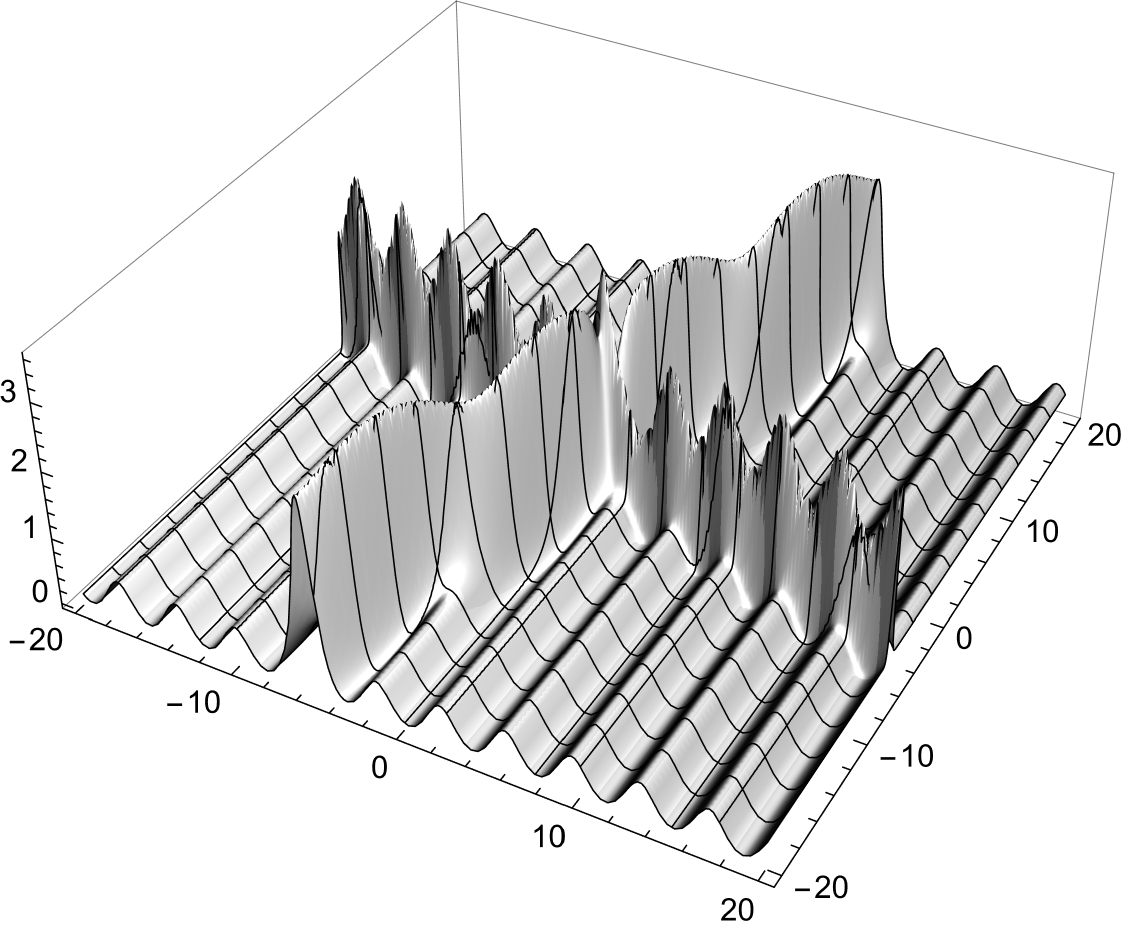}
 & \includegraphics[scale=0.26]{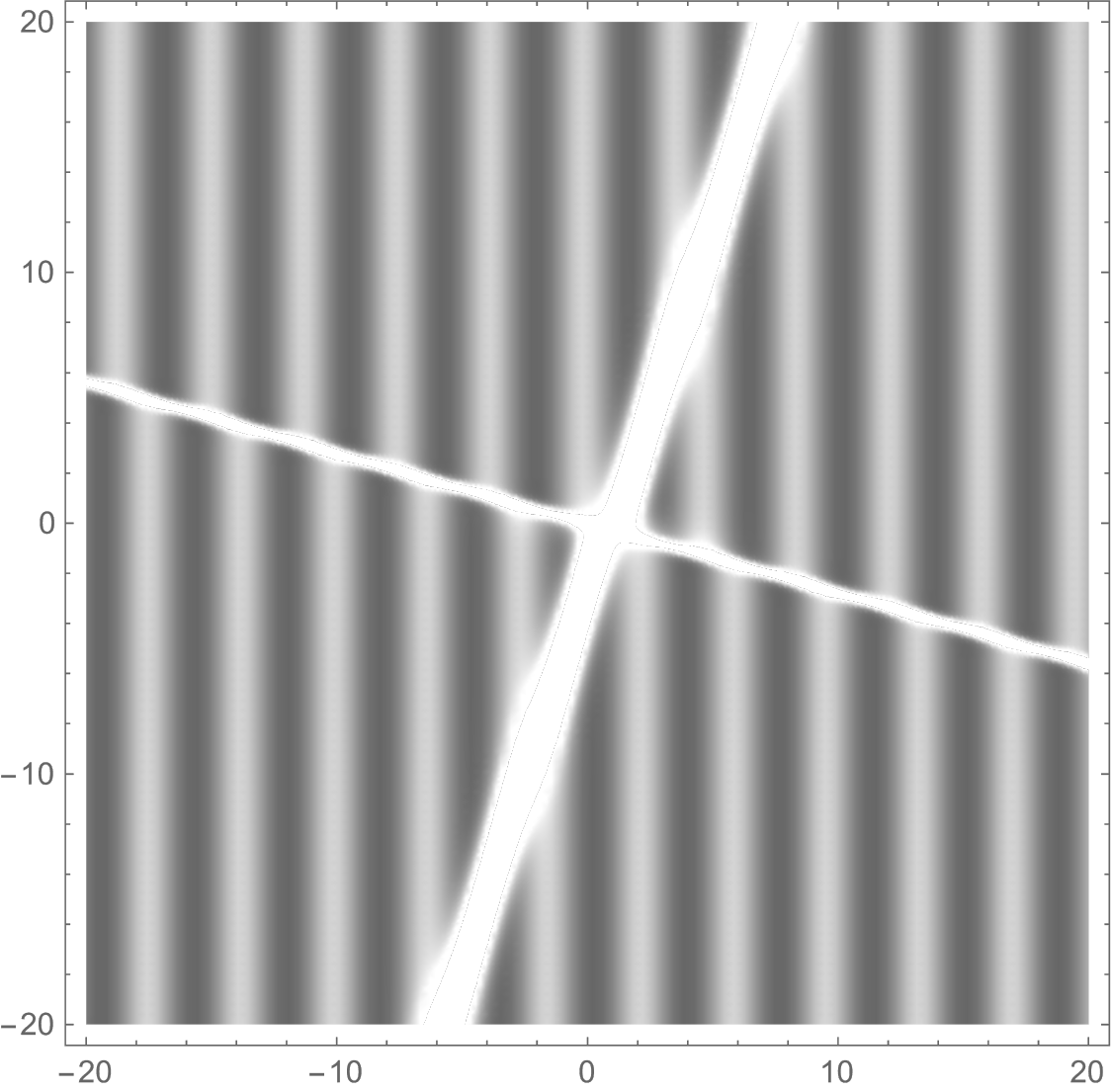}\\
\raisebox{3cm}{$t=2$:} & \includegraphics[scale=0.35]{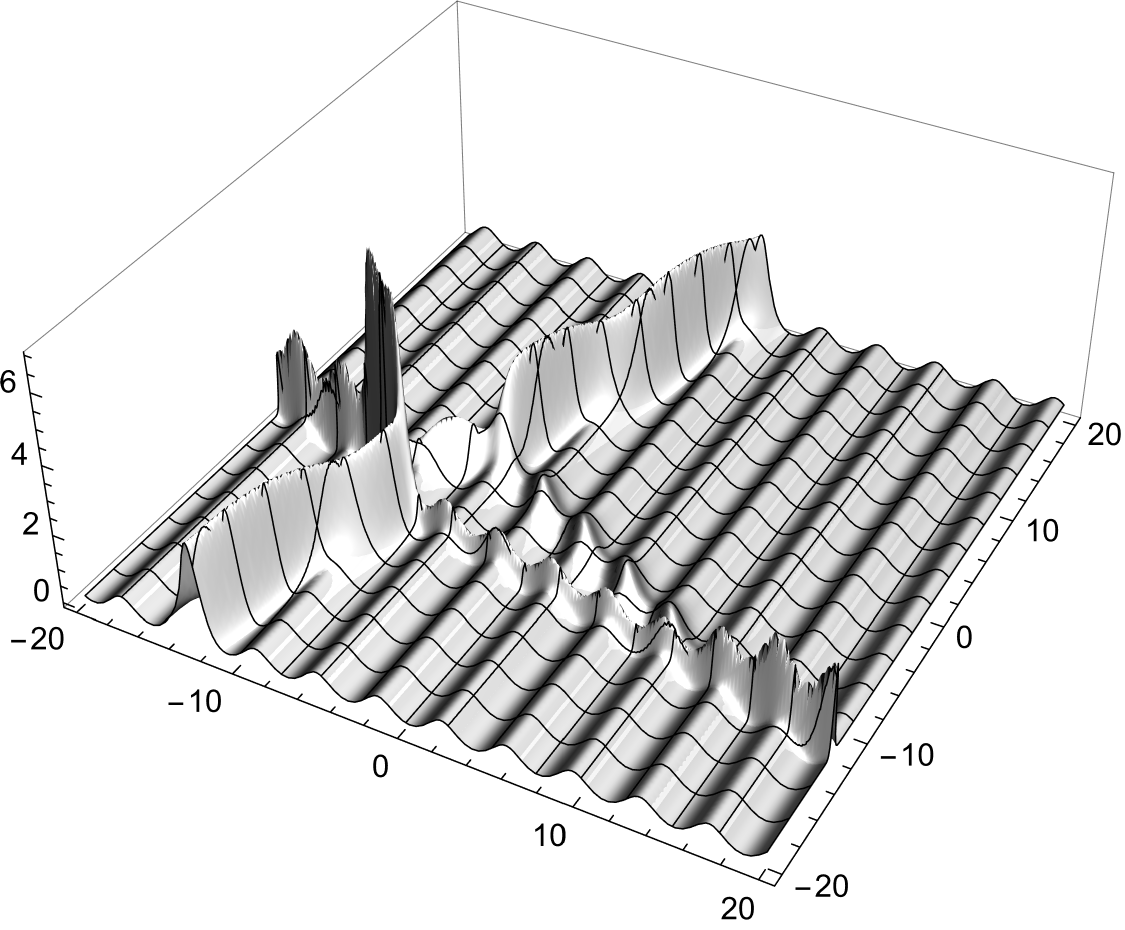}
 & \includegraphics[scale=0.26]{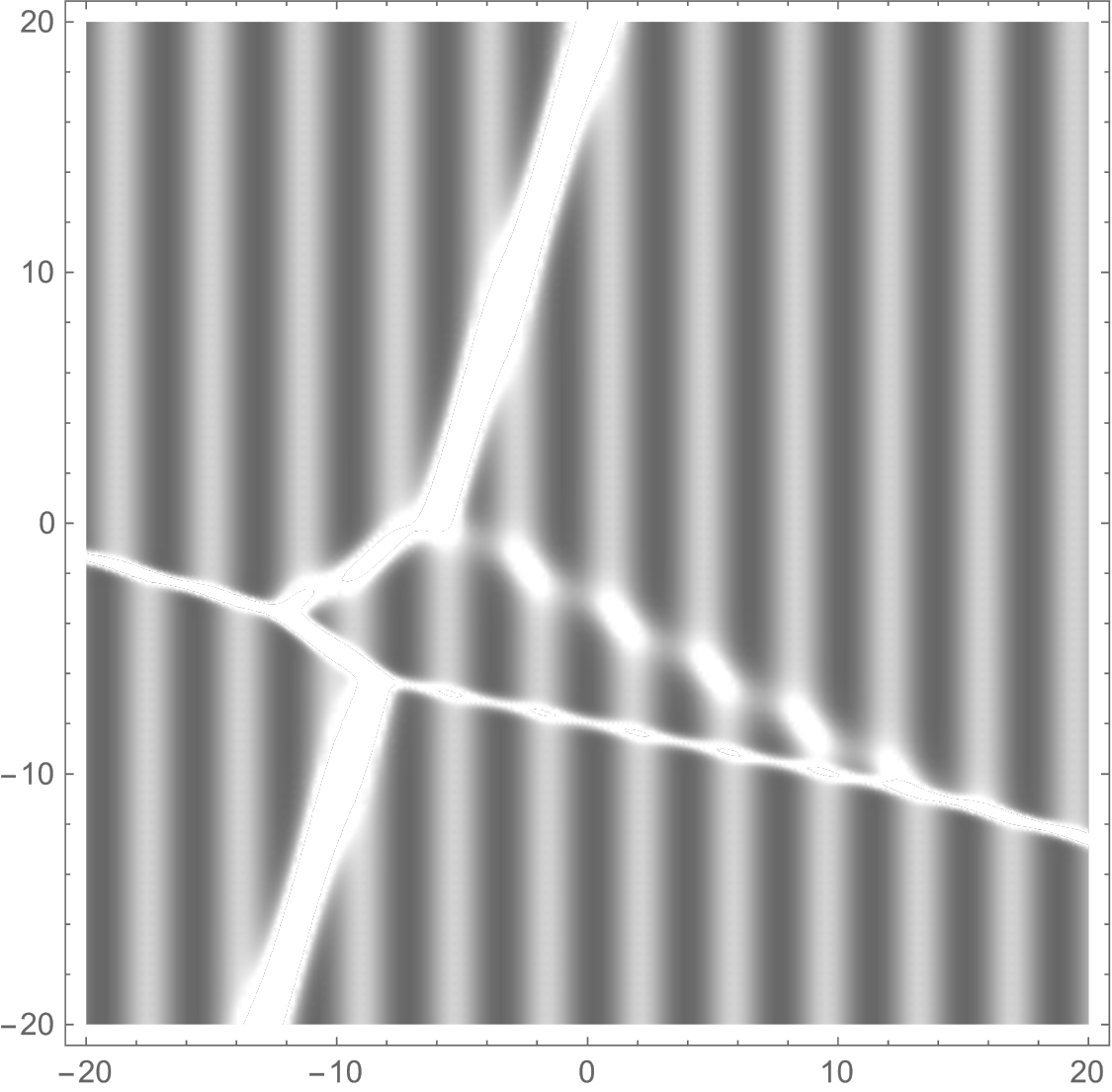}
\end{tabular}
\end{center}
\caption{Example of complex interaction 2 (the lemniscate case, 
$N=2$, $M=4$, $\tilde{A}$ is chosen as $\tilde{A}_2$ of \eqref{tildeA1A2}, 
$k_1=-3.2$, $k_2=-1.5$, $k_3=-0.6$, $k_4 = -0.3$).}
\label{fig:24soliton_2}
\end{figure}

In the usual case without elliptic background, 
no sign change occurs in the usual plane wave factor 
$e^{\xi(x,\bm{t};k)}$ where $\xi(x,\bm{t};k)$ is defined by \eqref{def:BAfcn}, 
and also in its higher order derivatives. 
This fact plays a major role in the classification of web-patterns 
\cite{ChakravartyKodama,KodamaBook}. 
However, as can be seen in Figure \ref{fig:tildePhi},
the sign change can occur for 
higher order derivatives of the real-valued Lam\'e-type plane wave factor
\eqref{real-valued_PWF}, and it may induce zeros of the $\tau$-function \eqref{transformedTauFcn}, 
which may cause singularities of $u=\partial_x^2\left(\log\tau\right)$. 
A more detailed study is needed to classify web-like patterns of elliptic solitons.
\begin{figure}[htbp]
\begin{center}
\begin{tabular}{ccc}
{\small $\tilde{\Phi}(x)$} 
& {\small $\tilde{\Phi}'(x)$} 
& {\small $\tilde{\Phi}''(x)$}\\
\includegraphics[scale=0.24]{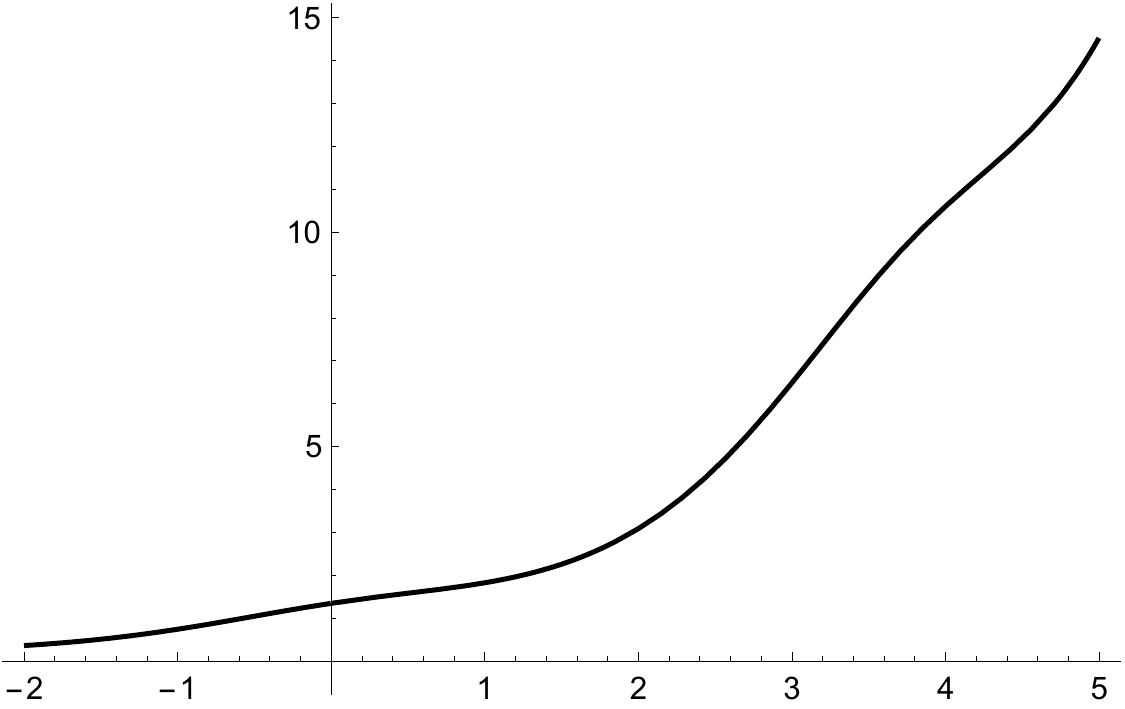}
& \includegraphics[scale=0.24]{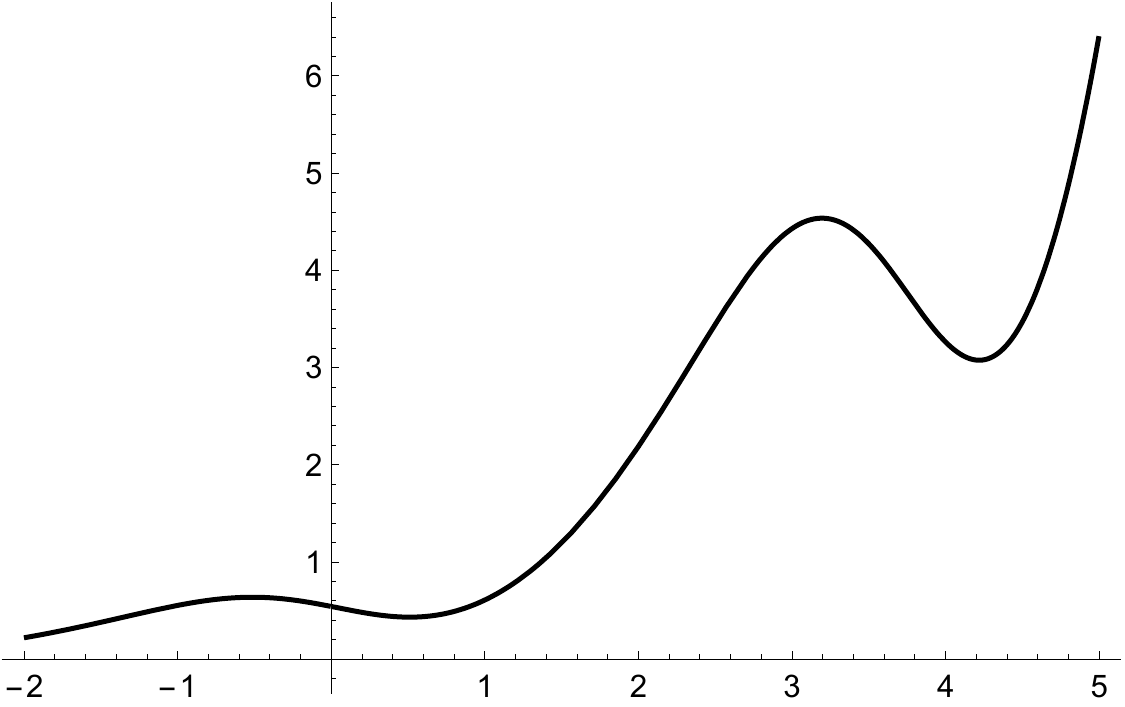}
& \includegraphics[scale=0.24]{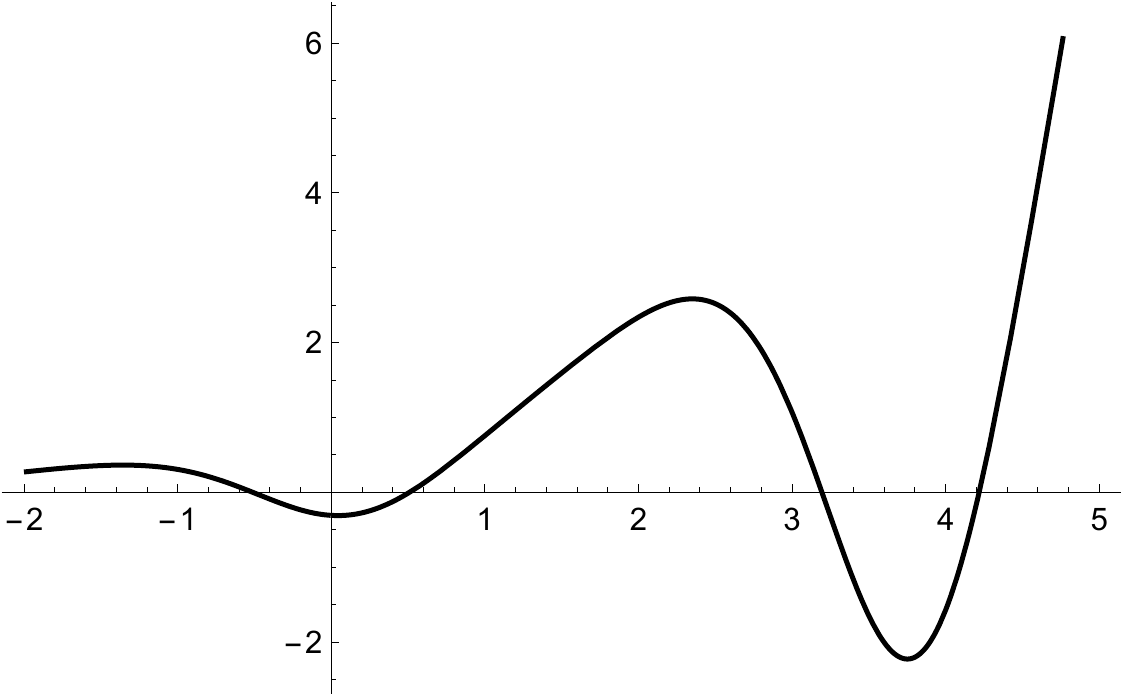}
\end{tabular}
\caption{Graphs of the function $\tilde{\Phi}$ of \eqref{real-valued_PWF} (the lemniscate case, $k=-1.2$)}
\label{fig:tildePhi}
\end{center}
\end{figure}
\newpage

\section*{Appendix}
Examples of the differential operators $P_n$, $Q_{n-1}\in\mathcal{D}_R$ ($n=2,3,\ldots$):
\begin{align*}
P_2&=\partial_x^2-2\wp(x), \quad
Q_1= \wp(x)\partial_x-\frac{\wp'(x)}{2},
\\
P_3&=\partial_x^3-3\wp(x)\partial_x-\frac{3 \wp'(x)}{2},\quad
Q_2= \wp(x)\partial_x^2-\frac{\wp'(x)}{2}\partial_x-\wp(x)^2-\frac{g_2}{4},
\\
P_4&=\partial_x^4-4 \wp(x)\partial_x^2-4\wp'(x)\partial_x+g_2-8\wp(x)^2,
\\
Q_3&= \wp(x)\partial_x^3
-\frac{\wp'(x)}{2}\partial_x^2
-2\wp(x)^2\partial_x
-\wp(x) \wp '(x),
\\
P_5&=\partial_x^5-5 \wp(x)\partial_x^3-\frac{15 \wp'(x)}{2}\partial_x^2
+3\left(g_2-10 \wp(x)^2\right)\partial_x-15 \wp(x) \wp '(x),
\\
Q_4&= \wp(x)\partial_x^4
-\frac{\wp'(x)}{2}\partial_x^3
-\left(\frac{g_2}{4}+3\wp(x)^2\right)\partial_x^2\\
& \qquad -3\wp(x)\wp'(x)\partial_x
-g_3+\frac{g_2 \wp(x)}{2}-6\wp(x)^3,
\\
P_6&=\partial_x^6-6\wp(x)\partial_x^4-12 \wp'(x)\partial_x^3
+\left(7g_2-72\wp(x)^2\right)\partial_x^2\\
&\qquad -72 \wp(x) \wp'(x)\partial_x
+2\left(-72\wp(x)^3+11 g_2 \wp(x)+8 g_3\right),
\\
Q_5&= \wp(x)\partial_x^5-\frac{\wp'(x)}{2}\partial_x^4-4\wp(x)^2\partial_x^3
-6 \wp(x) \wp '(x)\partial_x^2\\
&\qquad +\left(-2 g_3+g_2 \wp(x)-24\wp(x)^3\right)\partial_x
-\frac{1}{2}\left(24 \wp(x)^2+g_2\right)\wp '(x).
\end{align*}
Examples of the polynomials $R_n(x)$ ($n=0,1,2,\ldots$):
\begin{align*}
R_0(x) &= x, \quad \frac{R_1(x)}{3!} = x^2-\frac{g_2}{12},\quad 
\frac{R_2(x)}{5!} = x^3-\frac{3g_2}{20}x-\frac{g_3}{10},\\
\frac{R_3(x)}{7!} &= x^4-\frac{g_2}{5}x^2-\frac{g_3}{7}x+\frac{g_2^2}{560},\\
\frac{R_4(x)}{9!} &= x^5 -\frac{g_2}{4}x^3-\frac{5 g_3}{28}x^2
+\frac{g_2^2}{120}x+\frac{11 g_2 g_3}{1680},\\
\frac{R_5(x)}{11!} &= x^6-\frac{3 g_2}{10}x^4-\frac{3 g_3}{14}x^3
+\frac{7 g_2^2}{400}x^2 +\frac{57 g_2 g_3}{3080}x
-\frac{g_2^3}{26400} +\frac{g_3^2}{308}.
\end{align*}
Examples of the differential operators $\tilde{P}_n\in\mathcal{D}_R$ ($n=2,3,\ldots$):
\begin{align*}
\tilde{P}_2 &= P_2=\partial_x^2-2\wp(x),
\quad \tilde{P}_3= P_3=\partial_x^3-3\wp(x)\partial_x-\frac{3 \wp'(x)}{2},\\
\tilde{P}_4 &= P_4 -\frac{g_2}{12}=
\partial_x^4-4 \wp(x)\partial_x^2-4\wp'(x)\partial_x
+\frac{11g_2}{12}-8\wp(x)^2,\\
\tilde{P}_5 &= P_5 
=\partial_x^5-5 \wp(x)\partial_x^3-\frac{15 \wp'(x)}{2}\partial_x^2
+3\left(g_2-10 \wp(x)^2\right)\partial_x-15 \wp(x) \wp '(x),\\
\tilde{P}_6 &= P_6-\frac{3g_2}{20}P_2-\frac{g_3}{10}\\
&=\partial_6
-6 \wp(x)\partial_x^4
-12 \wp'(x)\partial_x^3
+\left\{\frac{137g_2}{20}-72 \wp(x)^2\right\}\partial_x^2\\
& \qquad\qquad -144 \wp(x)^3
-72\wp (x) \wp'(x)\partial_x
+\frac{1}{10} \left(223 g_2 \wp(x)+159 g_3\right).
\end{align*}

\end{document}